\tikzset{
  negated/.style={decoration={markings, mark= at position 0.5 with {\node[transform shape] (tempnode) {$/$};}}, postaction={decorate}},
  Rightarrow/.style={double equal sign distance, double distance=4pt, >={Implies}, ->},
  RightarrowDashed/.style={double equal sign distance, double distance=4pt, >={Implies}, ->, dashed},
  Rrightarrow/.style={-, preaction={draw,Rightarrow}},
  RrightarrowDashed/.style={-, dashed, preaction={draw,RightarrowDashed}},
  equal/.style={double equal sign distance, double distance=4pt},
  equiv/.style={-, preaction={draw,equal}}
}
\definecolor{metacolor}{rgb}{1,0.5,0}
\definecolor{explcolor}{rgb}{0,0.5,1}
\newcommand{\stkout}[1]{\ifmmode\text{\sout{\ensuremath{#1}}}\else\sout{#1}\fi}
\newcommand{\comment}[1]{}
\newcommand{\cf}{\emph{cf.}~}
\newcommand{\ie}{\emph{i.e.}~}
\newcommand{\ih}{\emph{i.h.}~}
\newcommand{\coloneq}{\ensuremath{:=}}
\newcommand{\Coloneq}{\ensuremath{:\coloneq}}
\newcommand{\eqalpha}{\ensuremath{=_{\alpha}}}
\newcommand{\eqdef}{\ensuremath{\triangleq}}
\newcommand{\eqtrans}{\ensuremath{\Mapsto}}
\newcommand{\llbrace}{\{\!\!\{}
\newcommand{\rrbrace}{\}\!\!\}}
\newcommand{\lista}[1]{\ensuremath{[{#1}]}}
\newcommand{\listconc}[2]{\ensuremath{{#1}{#2}}}
\newcommand{\set}[1]{\ensuremath{\{{#1}\}}}
\newcommand{\calcCAP}{\textsf{CAP}}
\newcommand{\calcLambda}{\ensuremath{\lambda}}
\newcommand{\calcLambdaX}{\ensuremath{\lambda}_\mathtt{dB}}
\newcommand{\calcPPC}{\ensuremath{\mathsf{PPC}}}
\newcommand{\calcPPCX}{\ensuremath{\mathsf{PPC}_\mathtt{dB}}}
\newcommand{\rbeta}{\ensuremath{\beta}}
\newcommand{\rbetax}{\ensuremath{\mathtt{dB}}}
\newcommand{\rPPC}{\ensuremath{\mathtt{PPC}}}
\newcommand{\rPPCX}{\ensuremath{\mathtt{dB}}}
\newcommand{\reduce}[1][]{\ensuremath{\rightarrow_{#1}}}
\newcommand{\rrule}[1]{\ensuremath{\mathrel{\mapsto_{#1}}}}
\newcommand{\Matchable}[1]{\ensuremath{{\mathbb{M}_{#1}}}}
\newcommand{\Natural}{\ensuremath{{\mathbb{N}}}}
\newcommandx{\PhiEqup}[1][1=\emptyset]{\ensuremath{\varPhi_{\eqtypeup[#1]}}}
\newcommandx{\PhiSubup}[1][1=\emptyset]{\ensuremath{\varPhi_{\subtypeup[#1]}}}
\newcommand{\Term}[1]{\ensuremath{{\mathbb{T}_{#1}}}}
\newcommand{\TermData}[1]{\ensuremath{{\mathbb{D}_{#1}}}}
\newcommand{\TermVariable}{\ensuremath{{\mathbb{V}}}}
\renewcommand{\S}{\ensuremath{\mathcal{S}}}
\newcommand{\itermabs}{\ensuremath{\lambda}}
\newcommand{\itermapp}{\ensuremath{\,}}
\newcommand{\termabs}[3][]{\ensuremath{\itermabs_{#1}{#2}.{#3}}}
\newcommand{\termabsi}[1]{\ensuremath{\itermabs{#1}}}
\newcommand{\termapp}[2]{\ensuremath{{#1}\itermapp{#2}}}
\newcommandx{\termidx}[2][2=]{\ensuremath{\mathtt{#1}_{\mathtt{#2}}}}
\newcommandx{\termidxm}[2][2=]{\ensuremath{\termidx{\widehat{#1}}[#2]}}
\newcommandx{\termidxv}[2][2=]{\ensuremath{\termidx{#1}[#2]}}
\newcommand{\termmatch}[1]{\ensuremath{\widehat{{#1}}}}
\newcommand{\termpair}[2]{\ensuremath{\langle{#1},{#2}\rangle}}
\newcommand{\termvar}[1]{\ensuremath{{#1}}}
\newcommand{\ctxt}[1]{\ensuremath{\mathtt{#1}}}
\newcommand{\ctxtapply}[2]{\ensuremath{{#1}\langle{#2}\rangle}}
\newcommand{\itypedata}{\ensuremath{\mathbin@}}
\newcommandx{\deriv}[3][1=,3=]{\ensuremath{{#1}\rhd_{#3}{#2}}}
\newcommandx{\envsum}[3][1=,3=]{\ensuremath{{#1}\mathrel{+_{#3}}{#2}}}
\newcommand{\change}[3][]{\ensuremath{{#2}\setminus^{\!{#1}}{#3}}}
\newcommand{\dom}[1]{\ensuremath{\funcapply{\mathsf{dom}}{#1}}}
\newcommand{\fm}[1]{\ensuremath{\funcapply{\mathsf{fm}}{#1}}}
\newcommand{\fsize}[1]{\ensuremath{|{#1}|}}
\newcommand{\funcapply}[2]{\ensuremath{{#1}({#2})}}
\newcommand{\funccomp}[2]{\ensuremath{{#1}\circ{#2}}}
\newcommand{\funcid}{\ensuremath{\mathit{id}}}
\newcommand{\fv}[2][]{\ensuremath{\funcapply{\mathsf{fv}_{#1}}{#2}}}
\newcommandx{\toPPC}[4][1=,2=,4=]{\ensuremath{\llparenthesis{#3}\ifthenelse{\isempty{#4}}{}{,{#4}}\rrparenthesis^{\scriptscriptstyle{#2}}_{\scriptscriptstyle{#1}}}}
\newcommandx{\toPPCX}[4][1=,2=,4=]{\ensuremath{\llbracket{#3}\ifthenelse{\isempty{#4}}{}{,{#4}}\rrbracket^{\scriptscriptstyle{#2}}_{\scriptscriptstyle{#1}}}}
\newcommandx{\toLambda}[1]{\ensuremath{\llparenthesis{#1}\rrparenthesis}}
\newcommandx{\toLambdaX}[1]{\ensuremath{\llbracket{#1}\rrbracket}}
\newcommandx{\fdecv}[3][1=,3=]{\ensuremath{\funcapply{\downarrow^{#3}_{#1}}{#2}}}
\newcommandx{\fdecm}[3][1=,3=]{\ensuremath{\funcapply{\Downarrow^{#3}_{#1}}{#2}}}
\newcommandx{\fincv}[3][1=,3=]{\ensuremath{\funcapply{\uparrow^{#3}_{#1}}{#2}}}
\newcommandx{\fincm}[3][1=,3=]{\ensuremath{\funcapply{\Uparrow^{#3}_{#1}}{#2}}}
\newcommand{\img}[1]{\ensuremath{\funcapply{\mathsf{img}}{#1}}}
\newcommand{\match}[3][]{\ensuremath{{\color{metacolor}\llbrace}\change[#1]{#2}{#3}{\color{metacolor}\rrbrace}}}
\newcommand{\matchfail}{\ensuremath{\mathtt{fail}}}
\newcommand{\matchundet}{\ensuremath{\mathtt{wait}}}
\newcommand{\matchapply}[2]{\ensuremath{{#1}\,{#2}}}
\newcommand{\matchcomp}[2]{\ensuremath{\funccomp{#1}{#2}}}
\newcommand{\matchunion}[2]{\ensuremath{{#1}\uplus{#2}}}
\newcommand{\subs}[2]{\ensuremath{{\color{metacolor}\{}\change{#1}{#2}{\color{metacolor}\}}}}
\newcommand{\subsapply}[2]{\ensuremath{{#1}{#2}}}
\newcommand{\subscomp}[2]{\ensuremath{\funccomp{#1}{#2}}}
\newcommand{\subsid}{\ensuremath{{\color{metacolor}\{\}}}}
\newcommand{\sym}[1]{\ensuremath{\funcapply{\mathsf{sym}}{#1}}}
\newlength{\PNdist}
\newlength{\PNmarg}
\newlength{\PNsize}
\tikzstyle{PNcir}=[circle,draw=black,inner sep=0pt,text width=\PNsize]
\tikzstyle{PNsub}=[rectangle,draw=black,inner sep=5pt,minimum height=\PNsize,rounded corners=8pt]
\tikzstyle{PNardot}=[dashed,rounded corners=8pt]
\tikzstyle{PNarrow}=[rounded corners=8pt]
\tikzstyle{PNarout}=[-{triangle 45},rounded corners=8pt]
\tikzstyle{PNarold}=[-{open triangle 45},rounded corners=8pt]
\newcommand{\subterm}{\ensuremath{\subseteq}}
\newcommand{\subtype}{\ensuremath{\preceq}}
\newcommand{\eqtype}{\ensuremath{\simeq}}
\newcommandx{\subtypeup}[1][1=\emptyset]{\ensuremath{\subtype^{#1}_{\mathfrak{n}}}}
\newcommandx{\eqtypeup}[1][1=\emptyset]{\ensuremath{\eqtype^{#1}_{\mathfrak{n}}}}
\newcommand{\pavoids}[2]{\ensuremath{{#1}\mathrel{\mathtt{avoids}}{#2}}}
\begin{document}
\begin{frontmatter}
  \title{Pure Pattern Calculus \emph{\`{a} la} de Bruijn\thanksref{INFINIS}}
  \author{Alexis Mart\'{i}n}
  \address{Universidad de Buenos Aires, Argentina}
  \author{Alejandro R\'{i}os}
  \address{Universidad de Buenos Aires, Argentina}
  \author{Andr\'{e}s Viso}
  \address{Universidad de Buenos Aires, Argentina \\
           Universidad Nacional de Quilmes, Argentina}
  \thanks[INFINIS]{This work was partially supported by LIA INFINIS, and the
                   ECOS-Sud program PA17C01.}

\begin{abstract}
It is well-known in the field of programming languages that dealing with
variable names and binders may lead to conflicts such as undesired captures
when implementing interpreters or compilers. This situation has been overcome
by resorting to de Bruijn indices for calculi where binders capture only one
variable name, like the $\calcLambda$-calculus. The advantage of this approach
relies on the fact that so-called $\alpha$-equivalence becomes syntactical
equality when working with indices.

In recent years pattern calculi have gained considerable attention given their
expressiveness. They turn out to be notoriously convenient to study the
foundations of modern functional programming languages modeling features like
pattern matching, path polymorphism, pattern polymorphism, etc. However, the
literature falls short when it comes to dealing with $\alpha$-conversion and
binders capturing simultaneously several variable names. Such is the case of
the \emph{Pure Pattern Calculus} ($\calcPPC$): a natural extension of
$\calcLambda$-calculus that allows to abstract virtually any term.

This paper extends de Bruijn's ideas to properly overcome the multi-binding
problem by introducing a novel presentation of $\calcPPC$ with bidimensional
indices, in an effort to implement a prototype for a typed functional
programming language based on $\calcPPC$ that captures path polymorphism.
\end{abstract}

\begin{keyword}
de Bruijn indices, pattern calculi, pattern matching, $\alpha$-equivalence.
\end{keyword}
\end{frontmatter}

%%%%%%%%%%%%%%%%%%%%%%%%%%%%%%%%%%%%%%%%%%%%%%%%%%%%%%%%%%%%%%%%%%%%%%%%%%%%%%%
\section{Introduction}
\label{c:intro}
%%%%%%%%%%%%%%%%%%%%%%%%%%%%%%%%%%%%%%%%%%%%%%%%%%%%%%%%%%%%%%%%%%%%%%%%%%%%%%%

The foundations of functional programming languages like LISP, Miranda, Haskell
or the ones in the ML family (Caml, SML, OCaml, etc.) rely strongly on the
study of the $\calcLambda$-calculus~\cite{Barendregt85} and its many variants
introduced over the years. Among them there are the \emph{pattern
calculi}~\cite{Oostrom90,CirsteaK98,Kahl03,CerritoK04,JayK06,Jay04,KlopOV08},
whose key feature can be identified as \emph{pattern-matching}.
Pattern-matching has been extensively used in programming languages as a means
for writing succinct and elegant programs. It stands for the possibility of
defining functions by cases, analysing the shape of their arguments, while
providing a syntactic tool to decompose such arguments in their parts when
applying the function.

In the standard $\calcLambda$-calculus, functions are represented by
expressions of the form $\termabs{x}{t}$, where $x$ is the formal parameter and
$t$ the body of the function. Such a function may be applied to any term,
regardless of its form, as dictated by the $\rbeta$-reduction rule:
$\termapp{(\termabs{x}{t})}{u} \rrule{\rbeta} \subsapply{\subs{x}{u}}{t}$,
where $\subsapply{\subs{x}{u}}{t}$ stands for the result of replacing all free
occurrences of $x$ in $t$ by $u$. Note that no requirement on the shape of $u$
is placed. Pattern calculi, on the contrary, provide generalisations of the
$\rbeta$-reduction rule in which abstractions $\termabs{x}{t}$ are replaced by
more general terms like $\termabs{p}{t}$ where $p$ is called a \emph{pattern}.
For example, consider the function $\termabs{\termpair{x}{y}}{x}$ that projects
the first component of a pair. Here the pattern is the pair $\termpair{x}{y}$
and the expression $\termapp{(\termabs{\termpair{x}{y}}{x})}{u}$ will only be
able to reduce if $u$ is indeed of the form $\termpair{u_1}{u_2}$. Otherwise,
reduction will be blocked.

We are particularly interested in studying the \emph{Pure Pattern
Calculus} ($\calcPPC$)~\cite{JayK09} and the novel features it introduced in
the field of pattern calculi, namely \emph{path polymorphism} and \emph{pattern
polymorphism}. The former refers to the possibility of defining functions that
uniformly traverse arbitrary data structures, while the latter allows to
consider patterns as parameters that may be dynamically generated in run-time.
Developing such a calculus implies numerous technical challenges to guarantee
well-behaved operational semantics in the untyped framework. Recently, a static
type system has been introduced for a restriction of $\calcPPC$ called
\emph{Calculus of Applicative Patterns} ($\calcCAP$)~\cite{VisoBA16}, which is
able to capture the path polymorphic aspect of $\calcPPC$. Moreover,
type-checking algorithms for such a formalism has also been
studied~\cite{EdiVB15}, as a first step towards an implementation of a
prototype for a typed functional programming language capturing such features.
Following this line of research, studies on the definition of normalising
strategies for $\calcPPC$ have been done as
well~\cite{BonelliKLR12,BonelliKLR17}. Such results are ported to $\calcCAP$ by
means of a simple embedding~\cite{Viso20} where the static typing discipline
gives further guarantees on the well-behaved semantics of terms.

Within this framework, the present work aims to throw some light on the
implementation aspects of these formalisms. In particular, working modulo
$\alpha$-conversion~\cite{Barendregt85} implies dealing with variable renaming
during the implementation. Such an approach is known to be error-prone and
computationally expensive. One way of getting rid of this problem in the
$\calcLambda$-calculus setting is adopting de Bruijn
notation~\cite{Bruijn72,Bruijn78}, a technique that simply avoids working
modulo $\alpha$-conversion. To the best of our knowledge, no dynamic pattern
calculi in the likes of $\calcPPC$ with de Bruijn indices has been formalised
in the literature. However, there are some references worth mentioning.
In~\cite{OostromR14} an alternative presentation of $\calcPPC$ is given in the
framework of \emph{Higher-Order Pattern Rewriting System}
(HRS)~\cite{Nipkow91,MayrN98}, together with translations between the two
systems. On the other hand, in~\cite{BonelliKR05} de Bruijn ideas had been
extended to \emph{Expression Reduction Systems} (ERS)~\cite{GlauertKK05} also
providing formal translations from systems with names to systems with indices,
and vice-versa. Moreover, the correspondence between HRS and ERS has already
been established~\cite{Raamsdonk80}. The composition of such translations might
derive a higher order system \emph{\`{a} la} de Bruijn capturing the features
of $\calcPPC$. However, this would result in a rather indirect solution to our
problem where many technicalities still need to be sorted out.

We aim to formalise an intuitive variant of $\calcPPC$ with de Bruijn indices
where known results for the original calculus, such as the existence of
normalising strategies, may easily be ported and reused.

%%%%%%%%%%%%%%%%%%%%%%%%%%%%%%%%%%%%%%%%%%%%%%%%%%%%%%%%%%%%%%%%%%%%%%%%%%%%%%%
\subsection{Contributions}
\label{c:intro:contributions}
%%%%%%%%%%%%%%%%%%%%%%%%%%%%%%%%%%%%%%%%%%%%%%%%%%%%%%%%%%%%%%%%%%%%%%%%%%%%%%%

This paper extends de Bruijn's ideas to handle binders that capture multiple
symbols at once, by means of what we call \emph{bidimensional indices}. These
ideas are illustrated by introducing a novel presentation of $\calcPPC$,
without variable/matchable names, called $\calcPPCX$. Moreover, binders in the
new proposed calculus are capable of handling two kinds of indices, namely
variable and matchable indices, as required by the $\calcPPC$ operational
semantics.

Proper translations from $\calcPPC$ to $\calcPPCX$ and back are introduced.
This functions preserve the matching operation and, hence, the operational
semantics of both calculi. Moreover, they turn out to be the inverse of each
other. This leads to a crucial strong bisimulation result between the two
calculi, which allows to import many known properties of $\calcPPC$ into
$\calcPPCX$, for instance confluence and the existence of normalising
strategies.

%%%%%%%%%%%%%%%%%%%%%%%%%%%%%%%%%%%%%%%%%%%%%%%%%%%%%%%%%%%%%%%%%%%%%%%%%%%%%%%
\subsection{Structure of the paper}
\label{c:intro:structure}
%%%%%%%%%%%%%%%%%%%%%%%%%%%%%%%%%%%%%%%%%%%%%%%%%%%%%%%%%%%%%%%%%%%%%%%%%%%%%%%

We start by briefly introducing $\calcPPC$ and reminding the mechanism of de
Bruijn indices for the $\calcLambda$-calculus in Sec.~\ref{s:preliminaries}.
The novel $\calcPPCX$ is formalised in Sec.~\ref{s:ppcx}, followed by the
introduction of the translations in Sec.~\ref{s:translation}. The strong
bisimulation result is presented in Sec.~\ref{s:bisimulation} together with a
discussion of different properties of $\calcPPCX$ that follow from it. We
conclude in Sec.~\ref{s:conclusion} and discuss possible lines of future work.

%%% Local Variables:
%%% mode: latex
%%% TeX-master: "main"
%%% End:

%%%%%%%%%%%%%%%%%%%%%%%%%%%%%%%%%%%%%%%%%%%%%%%%%%%%%%%%%%%%%%%%%%%%%%%%%%%%%%%
\section{Preliminaries}
\label{s:preliminaries}
%%%%%%%%%%%%%%%%%%%%%%%%%%%%%%%%%%%%%%%%%%%%%%%%%%%%%%%%%%%%%%%%%%%%%%%%%%%%%%%

This section introduces preliminary concepts that guide our development and
will help the reader follow the new ideas presented in this work.

%%%%%%%%%%%%%%%%%%%%%%%%%%%%%%%%%%%%%%%%%%%%%%%%%%%%%%%%%%%%%%%%%%%%%%%%%%%%%%%
\subsection{The Pure Pattern Calculus}
\label{s:preliminaries:ppc}
%%%%%%%%%%%%%%%%%%%%%%%%%%%%%%%%%%%%%%%%%%%%%%%%%%%%%%%%%%%%%%%%%%%%%%%%%%%%%%%

We start by briefly introducing the \emph{Pure Pattern Calculus}
($\calcPPC$)~\cite{JayK09}, an extension of the $\calcLambda$-calculus where
virtually any term can be abstracted. This gives place to two versatile forms
of polymorphism that set the foundations for adding novel features to future
functional programming languages: namely \emph{path polymorphism} and
\emph{pattern polymorphism}. This work, however, focuses on implementation
related aspects of $\calcPPC$ and will not delve deeper into these new forms of
polymorphism. We refer the reader to~\cite{JayK09,Jay09} for an in-depth study
of them.

\medskip
Given an infinitely countable set of \emph{symbols} $\TermVariable$ ($x, y, z,
\ldots$), the sets of \emph{terms} $\Term{\calcPPC}$ and \emph{contexts} are
given by the following grammar: \[
\begin{array}{c@{\qquad}c}
\begin{array}{rrcl}
\textbf{Terms}    & t         & \Coloneq  & \termvar{x} \mid \termmatch{x} \mid \termapp{t}{t} \mid \termabs[\theta]{t}{t}
\end{array}
&
\begin{array}{rrcl}
\textbf{Contexts} & \ctxt{C}  & \Coloneq  & \Box \mid \termapp{\ctxt{C}}{t} \mid \termapp{t}{\ctxt{C}} \mid \termabs[\theta]{\ctxt{C}}{t} \mid \termabs[\theta]{t}{\ctxt{C}}
\end{array}
\end{array}
\] where $\theta$ is a list of symbols that are bound by the abstraction. A
symbol $\termvar{x}$ appearing in a term is dubbed a \emph{variable symbol}
while $\termmatch{x}$ is called a \emph{matchable symbol}. In particular,
given $\termabs[\theta]{p}{t}$, $\theta$ binds variable symbols in the
\emph{body} $t$ and matchable symbols in the \emph{pattern} $p$. Thus, the set
of \emph{free variables} and \emph{free matchables} of a term $t$, written
$\fv{t}$ and $\fm{t}$ respectively, are inductively defined as: \[
\begin{array}{c@{\qquad}c}
\begin{array}{rcl}
\fv{\termvar{x}}            & \eqdef  & \set{x} \\
\fv{\termmatch{x}}          & \eqdef  & \emptyset \\
\fv{\termapp{t}{u}}         & \eqdef  & \fv{t} \cup \fv{u} \\
\fv{\termabs[\theta]{p}{t}} & \eqdef  & \fv{p} \cup (\fv{t} \setminus \theta)
\end{array}
&
\begin{array}{rcl}
\fm{\termvar{x}}            & \eqdef  & \emptyset \\
\fm{\termmatch{x}}          & \eqdef  & \set{x} \\
\fm{\termapp{t}{u}}         & \eqdef  & \fm{t} \cup \fm{u} \\
\fm{\termabs[\theta]{p}{t}} & \eqdef  & (\fm{p} \setminus \theta) \cup \fm{t}
\end{array}
\end{array}
\] A term is said to be \emph{closed} if it has no free variables. Note that
free matchables are allowed, and should be understood as \emph{constants} or
\emph{constructors} for data structures. The pattern $p$ of an abstraction
$\termabs[\theta]{p}{t}$ is \emph{linear} if every symbol $x \in \theta$ occurs
at most once in $p$.

\begin{wrapfigure}{R}{0.2\textwidth}
\begin{tikzpicture}[baseline=(current bounding box.center),->,auto,node distance=5pt]
\node (ABS1a)                               {$\vphantom{(}\itermabs_{\lista{x}}$};
\node (PTN1)  [right of=ABS1a, xshift=6pt]  {$\vphantom{(}\termmatch{x}$};
\node (ABS1b) [right of=PTN1,  xshift=1pt]  {$\vphantom{(}.($};
\node (ABS2a) [right of=ABS1b, xshift=6pt]  {$\vphantom{(}\itermabs_{\lista{y}}$};
\node (PTN2a) [right of=ABS2a, xshift=6pt]  {$\vphantom{(}\termvar{x}$};
\node (PTN2b) [right of=PTN2a, xshift=3pt]  {$\vphantom{(}\termmatch{y}$};
\node (ABS2b) [right of=PTN2b]              {$\vphantom{(}.$};
\node (BODY)  [right of=ABS2b]              {$\vphantom{(}\termvar{y}$};
\node (ABS1c) [right of=BODY]               {$\vphantom{(})$};

\path (ABS1a.south) edge [bend right=45] node[left] {} (PTN1.south)
      (ABS1a.south) edge [bend right=90] node[left] {} (PTN2a.south)
      (ABS2a.south) edge [bend right=90] node[left] {} (PTN2b.south)
      (ABS2a.south) edge [bend right=90] node[left] {} (BODY.south);
\end{tikzpicture}
\end{wrapfigure}

To illustrate how variables and matchables are bound, consider the function
$\mathsf{elim}$ defined as
$\termabs[\lista{x}]{\termmatch{x}}{(\termabs[\lista{y}]{\termapp{\termvar{x}}{\termmatch{y}}}{\termvar{y}})}$.
The inner abstraction binds the only occurrence of the matchable
$\termmatch{y}$ in the pattern $\termapp{\termvar{x}}{\termmatch{y}}$ and that
of the variable $\termvar{y}$ in the body $\termvar{y}$. However, the
occurrence of $\termvar{x}$ in $\termapp{\termvar{x}}{\termmatch{y}}$ is not
bound by the inner abstraction, as it is excluded from $\lista{y}$, acting as a
place-holder in that pattern. It is the outermost abstraction that binds both
$\termvar{x}$ in the inner pattern and $\termmatch{x}$ in the outermost
pattern. This is graphically depicted above.

\medskip
A \emph{substitution} ($\sigma, \rho, \ldots$) is a partial function from
variables to terms. The substitution $\sigma = \subs{x_i}{u_i}_{i \in I}$,
where $I$ is a set of indices, maps the variable $x_i$ into the term $u_i$ (\ie
$\funcapply{\sigma}{x_i} \eqdef u_i$) for each $i \in I$. Thus, its
\emph{domain} and \emph{image} are defined as $\dom{\sigma} \eqdef
\set{x_i}_{i \in I}$ and $\img{\sigma} \eqdef \set{u_i}_{i \in I}$
respectively. For convenience, a substitution $\sigma$ is usually turned into a
total function by defining $\funcapply{\sigma}{x} \eqdef x$ for every $x \notin
\dom{\sigma}$. Then, the identity substitution is denoted $\subsid$ or simply
$\funcid$.

A \emph{match} ($\mu, \nu, \ldots$) may be successful (yielding a
substitution), it may fail (returning a special symbol $\matchfail$) or be
undetermined (denoted by a special symbol $\matchundet$). The cases of success
and failure are called \emph{decided matches}. All concepts and notation
relative to substitutions are extended to matches so that, for example, the
domain of $\matchfail$ is empty while that of $\matchundet$ is undefined. The
sets of free variable and free matchable symbols of $\sigma$ are defined as the
union of $\fv{\subsapply{\sigma}{x}}$ and $\fm{\subsapply{\sigma}{x}}$ for
every $x \in \dom{\sigma}$ respectively, while $\fv{\matchfail} =
\fm{\matchfail} = \emptyset$ and they are undefined for $\matchundet$. The set
of symbols of a substitution is defined as $\sym{\sigma} \eqdef \dom{\sigma}
\cup \fv{\sigma} \cup \fm{\sigma}$. The predicate $\pavoids{x}{\sigma}$ states
that $x \notin \sym{\sigma}$. It is extended to sets and matches as expected.
In particular, $\pavoids{\theta}{\mu}$ implies that $\mu$ must be decided.

The result of applying a substitution $\sigma$ to a term $t$, denoted
$\subsapply{\sigma}{t}$, is inductively defined as: \[
\begin{array}{c@{\qquad}c}
\begin{array}{rcl@{\quad}l}
\subsapply{\sigma}{\termvar{x}}             & \eqdef  & \funcapply{\sigma}{x} \\
\subsapply{\sigma}{\termmatch{x}}           & \eqdef  & \termmatch{x}
\end{array}
&
\begin{array}{rcl@{\quad}l}
\subsapply{\sigma}{(\termapp{t}{u})}        & \eqdef  & \termapp{\subsapply{\sigma}{t}}{\subsapply{\sigma}{u}} \\
\subsapply{\sigma}{\termabs[\theta]{p}{t}}  & \eqdef  & \termabs[\theta]{\subsapply{\sigma}{p}}{\subsapply{\sigma}{t}}  & \text{if $\pavoids{\theta}{\sigma}$}
\end{array}
\end{array} \] The restriction in the case of the abstraction is required to
avoid undesired captures of variables/matchables. However, it can always be
satisfied by resorting to $\alpha$-conversion.

The result of applying a match $\mu$ to a term $t$, denoted
$\matchapply{\mu}{t}$, is defined as:
\begin{inparaenum}
  \item if $\mu = \sigma$ a substitution, then $\matchapply{\mu}{t} \eqdef
  \subsapply{\sigma}{t}$;
  \item if $\mu = \matchfail$, then $\matchapply{\mu}{t} \eqdef
  \termabs[\lista{x}]{\termmatch{x}}{\termvar{x}}$ (\ie the identity function);
  or
  \item if $\mu = \matchundet$, then $\matchapply{\mu}{t}$ is undefined.
\end{inparaenum}

The \emph{composition} $\subscomp{\sigma}{\sigma'}$ of substitutions is defined
as usual, \ie $\subsapply{(\subscomp{\sigma}{\sigma'})}{x} \eqdef
\subsapply{\sigma}{(\subsapply{\sigma'}{x})}$, and the notion is extended to
matches by defining $\matchcomp{\mu}{\mu'} \eqdef \matchfail$ if any of the two
matches is $\matchfail$. Otherwise, if at least one of the two is
$\matchundet$, then $\matchcomp{\mu}{\mu'} \eqdef \matchundet$. In particular,
$\matchcomp{\matchfail}{\matchundet} = \matchfail$. The \emph{disjoint union}
$\matchunion{\mu}{\mu'}$ of matches is defined as follows:
\begin{inparaenum}
  \item if $\mu = \matchfail$ or $\mu' = \matchfail$, then
  $\matchunion{\mu}{\mu'} \eqdef \matchfail$; else
  \item if $\mu = \matchundet$ or $\mu' = \matchundet$, then
  $\matchunion{\mu}{\mu'} \eqdef \matchundet$; otherwise
  \item both $\mu$ and $\mu'$ are substitutions and if $\dom{\mu} \cap
  \dom{\mu'} \neq \emptyset$, then $\matchunion{\mu}{\mu'} \eqdef \matchfail$,
  else: \[\subsapply{(\matchunion{\mu}{\mu'})}{x} \eqdef
  \begin{cases}
  \subsapply{\mu}{x}  & \text{if $x \in \dom{\mu}$} \\
  \subsapply{\mu'}{x} & \text{if $x \in \dom{\mu'}$} \\
  x                   & \text{otherwise}
  \end{cases}\]
\end{inparaenum}
Disjoint union is used to guarantee that the matching operation is
deterministic.

\medskip
Before introducing the matching operation it is necessary to motivate the
concept of \emph{matchable form}. The pattern
$\termapp{\termmatch{x}}{\termmatch{y}}$ allows, at first, to decompose
arbitrary applications, which may lead to the loss of confluence. For
instance: \[
\begin{array}{llllll}
\termapp{(\termabs[\lista{x,y}]{\termapp{\termmatch{x}}{\termmatch{y}}}{\termvar{y}})}{(\termapp{(\termabs[\lista{w}]{\termmatch{w}}{\termapp{\termmatch{z_0}}{\termmatch{z_1}}})}{\termmatch{z_0}})} & \reduce & \termapp{(\termabs[\lista{x,y}]{\termapp{\termmatch{x}}{\termmatch{y}}}{\termvar{y}})}{(\termapp{\termmatch{z_0}}{\termmatch{z_1}})} & \reduce & \termmatch{z_1} \\
\termapp{(\termabs[\lista{x,y}]{\termapp{\termmatch{x}}{\termmatch{y}}}{\termvar{y}})}{(\termapp{(\termabs[\lista{w}]{\termmatch{w}}{\termapp{\termmatch{z_0}}{\termmatch{z_1}}})}{\termmatch{z_0}})} & \reduce & \termmatch{z_0}
\end{array}
\] This issue arises when allowing to match the pattern
$\termapp{\termmatch{x}}{\termmatch{y}}$ with an application that may still be
reduced, like the argument
$\termapp{(\termabs[\lista{w}]{\termmatch{w}}{\termapp{\termmatch{z_0}}{\termmatch{z_1}}})}{\termmatch{z_0}}$
of the outermost redex in the example above. To avoid this situation it is
required for the match to be decided only if the argument is sufficiently
evaluated. An analogous issue occurs if the pattern is reducible. Thus, both
the pattern and the argument must be in matchable form for the match to be
decided. The set of \emph{data structures} $\TermData{\calcPPC}$ and
\emph{matchable forms} $\Matchable{\calcPPC}$ are given by the following
grammar: \[
\begin{array}{c@{\qquad}c}
\begin{array}{rrcl}
\textbf{Data structures}  & d & \Coloneq  & \termmatch{x} \mid \termapp{d}{t}
\end{array}
&
\begin{array}{rrcl}
\textbf{Matchable forms}  & m & \Coloneq  & d \mid \termabs[\theta]{t}{t}
\end{array}
\end{array} \]

The \emph{matching operation} $\match[\theta]{p}{u}$ of a pattern $p$ against a
term $u$ relative to a list of symbols $\theta$ is defined as the application,
in order, of the following equations: \[
\begin{array}{rcl@{\qquad}l}
\match[\theta]{\termmatch{x}}{u}                & \eqdef  & \subs{\termvar{x}}{u}                                   & \text{if $x \in \theta$} \\
\match[\theta]{\termmatch{x}}{\termmatch{x}}    & \eqdef  & \subsid                                                 & \text{if $x \notin \theta$} \\
\match[\theta]{\termapp{p}{q}}{\termapp{t}{u}}  & \eqdef  & \matchunion{\match[\theta]{p}{t}}{\match[\theta]{q}{u}} & \text{if $\termapp{t}{u}, \termapp{p}{q} \in \Matchable{\calcPPC}$} \\
\match[\theta]{p}{u}                            & \eqdef  & \matchfail                                              & \text{if $u, p \in \Matchable{\calcPPC}$} \\
\match[\theta]{p}{u}                            & \eqdef  & \matchundet                                             & \text{otherwise}
\end{array}
\] An additional check is imposed, namely $\dom{\match[\theta]{p}{u}} =
\theta$. Otherwise, $\match[\theta]{p}{u} \eqdef \matchfail$. This last
condition is necessary to prevent bound symbols from going out of scope when
reducing. It can be easily guaranteed though by requesting, for each
abstraction $\termabs[\theta]{p}{t}$, that $\theta \subseteq \fm{p}$.
For instance, consider the term
$\termapp{(\termabs[\lista{x,y}]{\termmatch{x}}{\termvar{y}})}{u}$. Without
this final check, matching the argument $u$ against the pattern
$\termmatch{x}$ would yield a substitution $\subs{\termvar{x}}{u}$ and no
term would be assigned to the variable $\termvar{y}$ in the body of the
abstraction.

\medskip
Finally, the reduction relation $\reduce[\rPPC]$ of $\calcPPC$ is given by the
closure by contexts of the rewriting rule: \[
\termapp{(\termabs[\theta]{p}{s})}{u} \rrule{\rPPC} \matchapply{\match[\theta]{p}{u}}{s}
\] whenever $\match[\theta]{p}{u}$ is a decided match. To illustrate the
operational semantics of $\calcPPC$ consider the term $\mathsf{elim}$
introduced above, applied to the function
$\termabs[\lista{z}]{\termmatch{z}}{\termapp{\termapp{\termmatch{c}}{\termvar{z}}}{\termmatch{n}}}$
where the free matchables $\termmatch{c}$ and $\termmatch{n}$ can be seen as
constructors for lists $\mathsf{cons}$ and $\mathsf{nil}$ respectively: \[
\termapp{(\termabs[\lista{x}]{\termmatch{x}}{(\termabs[\lista{y}]{\termapp{\termvar{x}}{\termmatch{y}}}{\termvar{y}})})}{(\termabs[\lista{z}]{\termmatch{z}}{\termapp{\termapp{\termmatch{c}}{\termvar{z}}}{\termmatch{n}}})}
\quad\reduce[\rPPC]\quad
\termabs[\lista{y}]{\termapp{(\termabs[\lista{z}]{\termmatch{z}}{\termapp{\termapp{\termmatch{c}}{\termvar{z}}}{\termmatch{n}}})}{\termmatch{y}}}{\termvar{y}}
\quad\reduce[\rPPC]\quad
\termabs[\lista{y}]{\termapp{\termapp{\termmatch{c}}{\termmatch{y}}}{\termmatch{n}}}{\termvar{y}}
\] In the first step,
$\termabs[\lista{z}]{\termmatch{z}}{\termapp{\termapp{\termmatch{c}}{\termvar{z}}}{\termmatch{n}}}$
is substituted for $\termvar{x}$ into the pattern
$\termapp{\termvar{x}}{\termmatch{y}}$. In the second step, the resulting
application, which resides in the pattern, is reduced. The resulting term, when
applied to an argument, will yield a successful matching only if this argument
is a compound data of the form
$\termapp{\termapp{\termmatch{c}}{t}}{\termmatch{n}}$.

This relation is shown to be \emph{confluent} (CR) based on the matching
operation introduced above.

\begin{theorem}[\cite{JayK09}]
The reduction relation $\reduce[\rPPC]$ is confluent (CR).
\label{t:preliminaries:ppc:cr}
\end{theorem}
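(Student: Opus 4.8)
The plan is to prove confluence with the Tait--Martin-L\"of technique of parallel reduction, suitably adapted to the matching operation. First I would define a parallel reduction relation $\reducesimul[\rPPC]$ that contracts several disjoint redexes at once: it is reflexive on symbols ($\termvar{x} \mathrel{\reducesimul[\rPPC]} \termvar{x}$ and $\termmatch{x} \mathrel{\reducesimul[\rPPC]} \termmatch{x}$), compatible with application and abstraction, and includes the redex clause stating that $\termapp{(\termabs[\theta]{p}{s})}{u} \mathrel{\reducesimul[\rPPC]} \matchapply{\match[\theta]{p'}{u'}}{s'}$ whenever $p \mathrel{\reducesimul[\rPPC]} p'$, $s \mathrel{\reducesimul[\rPPC]} s'$, $u \mathrel{\reducesimul[\rPPC]} u'$ and $\match[\theta]{p}{u}$ is decided. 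A routine induction gives $\reduce[\rPPC] \subseteq \reducesimul[\rPPC] \subseteq \reducemany[\rPPC]$, so $\reducemany[\rPPC]$ is the reflexive--transitive closure of $\reducesimul[\rPPC]$ and it suffices to prove that $\reducesimul[\rPPC]$ enjoys the diamond property.

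The diamond property rests on three lemmas. \textbf{(i)} \emph{Shape stability}: the sets $\TermData{\calcPPC}$ and $\Matchable{\calcPPC}$ are closed under $\reducesimul[\rPPC]$ --- a data structure carries a matchable symbol at its head, hence is never a redex and can only reduce to a data structure, while an abstraction reduces to an abstraction. \textbf{(ii)} \emph{Substitution lemma}: writing $\sigma \mathrel{\reducesimul[\rPPC]} \sigma'$ when $\dom{\sigma} = \dom{\sigma'}$ and $\subsapply{\sigma}{x} \mathrel{\reducesimul[\rPPC]} \subsapply{\sigma'}{x}$ for all $x$, if $t \mathrel{\reducesimul[\rPPC]} t'$ and $\sigma \mathrel{\reducesimul[\rPPC]} \sigma'$ then $\subsapply{\sigma}{t} \mathrel{\reducesimul[\rPPC]} \subsapply{\sigma'}{t'}$, the side conditions $\pavoids{\theta}{\sigma}$ being met via the variable convention. \textbf{(iii)} \emph{Matching lemma}: if $p \mathrel{\reducesimul[\rPPC]} p'$, $u \mathrel{\reducesimul[\rPPC]} u'$ and $\match[\theta]{p}{u}$ is decided, then $\match[\theta]{p'}{u'}$ is decided of the same kind and, when both are substitutions, $\match[\theta]{p}{u} \mathrel{\reducesimul[\rPPC]} \match[\theta]{p'}{u'}$. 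Lemma (iii) is proved by induction following the clauses defining $\match[\theta]{p}{u}$, using (i) to keep the structural clause (an application pattern against an application argument, both data structures) and the failing clause applicable after reduction, and checking that disjoint union $\matchunion{\cdot}{\cdot}$ and the final $\dom{\cdot}=\theta$ test commute with $\reducesimul[\rPPC]$ because domains are preserved. Together (ii) and (iii) show that applying a match respects $\reducesimul[\rPPC]$ (the case $\matchfail$ being immediate, since $\matchapply{\matchfail}{s}$ is a fixed term), and (iii) also guarantees that the reduct in the redex clause above is well-defined, i.e. $\match[\theta]{p'}{u'} \neq \matchundet$.

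With these lemmas I would obtain the diamond property in Takahashi style. Define the \emph{full development} $t^{*}$ by recursion on $t$, contracting in the redex case $(\termapp{(\termabs[\theta]{p}{s})}{u})^{*} \eqdef \matchapply{\match[\theta]{p^{*}}{u^{*}}}{s^{*}}$ --- well-defined thanks to (iii) --- and homomorphically elsewhere. The triangle property, $t \mathrel{\reducesimul[\rPPC]} t'$ implies $t' \mathrel{\reducesimul[\rPPC]} t^{*}$, follows by induction on $t$, the only delicate case being a contracted redex, where the substitution lemma (ii) and the matching lemma (iii) align the two matches and bodies. The diamond property is an immediate corollary, and confluence of $\reducemany[\rPPC]$, hence of $\reduce[\rPPC]$, follows by the usual argument on the transitive closure.

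The main obstacle is the matching lemma (iii). Because $\calcPPC$ permits variables inside patterns, patterns are genuinely reducible and may change shape during reduction, so one must verify that parallel reduction can never turn a decided match into $\matchundet$: this holds because every clause producing a decision --- a bound variable pattern $\termmatch{x}$ with $x \in \theta$ (decided regardless of the argument), or both sides already in matchable form --- is preserved by $\reducesimul[\rPPC]$ through shape stability, but confirming this uniformly across all clauses, together with the bookkeeping of $\matchfail$ and $\matchundet$ under composition and disjoint union, is where the care is concentrated.
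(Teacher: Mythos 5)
This theorem is not proved in the paper at all: Thm.~\ref{t:preliminaries:ppc:cr} is imported by citation from~\cite{JayK09}, so there is no internal proof to compare against. Your proposal is correct and essentially reconstructs the proof of that cited reference: Jay and Kesner also argue via a simultaneous reduction in the Tait--Martin-L\"of/Takahashi style, and the crux is exactly what you isolate as lemma (iii), stability of decided matches under reduction; your justification of it is the right one (a successful match forces the consumed part of the pattern and of the argument's spine to be rigid data structures, so shape stability keeps the structural clause applicable; failures survive because mismatches of matchable forms are shape-stable and $\matchfail$ dominates disjoint union; the final domain test is insensitive to reduction because domains are preserved). Two points deserve slightly more care than your sketch gives them. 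First, your redex clause defines the reduct as $\matchapply{\match[\theta]{p'}{u'}}{s'}$ while placing the ``decided'' side condition on $\match[\theta]{p}{u}$, so its well-definedness quietly appeals to lemma (iii) about the very relation being defined; the circularity is harmless but should be broken, e.g.\ by putting the condition on $\match[\theta]{p'}{u'}$ in the clause and deriving later, via (iii), that it follows from decidedness of $\match[\theta]{p}{u}$. Second, the redex case of your substitution lemma (ii) needs one further ingredient your list omits: commutation of matching with substitution (if $\match[\theta]{p}{u}$ is decided then $\match[\theta]{\subsapply{\sigma}{p}}{\subsapply{\sigma}{u}}$ is decided of the same kind and, on success, is the evident composition), the $\calcPPC$ analogue of the substitution-composition lemma used in the $\lambda$-calculus version of this argument; it holds for the same rigidity reasons, but it is a separate lemma that must be stated and proved.
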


%%%%%%%%%%%%%%%%%%%%%%%%%%%%%%%%%%%%%%%%%%%%%%%%%%%%%%%%%%%%%%%%%%%%%%%%%%%%%%%
\subsection{de Bruijn indices}
\label{s:preliminaries:indices}
%%%%%%%%%%%%%%%%%%%%%%%%%%%%%%%%%%%%%%%%%%%%%%%%%%%%%%%%%%%%%%%%%%%%%%%%%%%%%%%

We introduce next de Bruijn indices for the $\calcLambda$-calculus. Among the
many presentations of de Bruijn indices in the literature, we will follow that
of~\cite{KamareddineR95} as our development builds upon their ideas. In
particular, we choose to work with the presentation where indices are partially
updated as the term is being traversed by the substitution operation (details
below). We refer the reader to~\cite{KamareddineR95} for the equivalent version
where the update is performed once at the end of the substitution process. We
introduce now the \emph{$\calcLambda$-calculus with de Bruijn indices}
($\calcLambdaX$ for short).

\medskip
The sets of \emph{terms} $\Term{\calcLambdaX}$ and \emph{contexts} are given by
the following grammar: \[
\begin{array}{c@{\qquad}c}
\begin{array}{rrcl}
\textbf{Terms}    & t         & \Coloneq  & \termidxv{i} \mid \termapp{t}{t} \mid \termabsi{t} \\
\end{array}
&
\begin{array}{rrcl}
\textbf{Contexts} & \ctxt{C}  & \Coloneq  & \Box \mid \termapp{\ctxt{C}}{t} \mid \termapp{t}{\ctxt{C}} \mid \termabsi{\ctxt{C}}
\end{array}
\end{array}
\] where $\termidxv{i} \in \Natural_{\geq 1}$ is called an \emph{index}.
Indices are place-holders indicating the distance to the binding abstraction.
In the context of the $\calcLambdaX$-calculus, indices are also called
\emph{variables}. Thus, the \emph{free variables} of a term are inductively
defined as: $\fv{\termidxv{i}} \eqdef \set{\termidx{i}}$; $\fv{\termapp{t}{u}}
\eqdef \fv{t} \cup \fv{u}$; and $\fv{\termabsi{t}} \eqdef \fv{t} - 1$, where
$X - k$ stands for subtracting $k$ from each element of the set $X$, removing
those that result in a non-positive index.

In order to define $\rbeta$-reduction \emph{\`{a} la} de Bruijn, the
substitution of an index $\termidxv{i}$ for a term $u$ in a term $t$ must be
defined. Therefore, it is necessary to identify among the indices of the term
$t$, those corresponding to $\termidx{i}$. Furthermore, the indices of $u$
should be updated in order to preserve the correct bindings after the
replacement of the variable by $u$. To that end, the \emph{increment at depth
$k$} for variables in a term $t$, written $\fincv[k]{t}$, is inductively
defined as follows: \[
\begin{array}{c@{\qquad\qquad}c}
\begin{array}{rcll}
\fincv[k]{\termidxv{i}}   & \eqdef  & \begin{cases}
                                      \termidxv{i+1}  & \text{if $i > k$} \\
                                      \termidxv{i}    & \text{if $i \leq k$}
                                      \end{cases}
\end{array}
&
\begin{array}{rcll}
\fincv[k]{\termapp{t}{u}} & \eqdef  & \termapp{\fincv[k]{t}}{\fincv[k]{u}} \\
\fincv[k]{\termabsi{t}}   & \eqdef  & \termabsi{\fincv[k+1]{t}}
\end{array}
\end{array}
\] Then, the \emph{substitution at level $i$} of a term $u$ in a term $t$,
denoted $\subsapply{\subs{\termidx{i}}{u}}{t}$, is defined as a partial
function mapping free variables at level $i$ to terms, performing the
appropriate updates as it traverses the substituted term, to avoid undesired
captures. \[
\begin{array}{c@{\qquad\qquad}c}
\begin{array}{rcl@{\quad}l}
\subsapply{\subs{\termidxv{i}}{u}}{\termidxv{i'}}     & \eqdef  & \begin{cases}
                                                                  \termidxv{i'-1} & \text{if $i' > i$} \\
                                                                  u               & \text{if $i' = i$} \\
                                                                  \termidxv{i'}   & \text{if $i' < i$}
                                                                  \end{cases}
\end{array}
&
\begin{array}{rcl@{\quad}l}
\subsapply{\subs{\termidxv{i}}{u}}{(\termapp{t}{s})}  & \eqdef  & \termapp{\subsapply{\subs{\termidxv{i}}{u}}{t}}{\subsapply{\subs{\termidxv{i}}{u}}{s}} \\
\subsapply{\subs{\termidxv{i}}{u}}{\termabsi{t}}      & \eqdef  & \termabsi{\subsapply{\subs{\termidxv{i+1}}{\fincv[0]{u}}}{t}}
\end{array}
\end{array} \] It is worth noticing that this substitution should be
interpreted in the context of a redex, where a binder is removed and its bound
index substituted. This forces to update the free indices, that might be
captured by an outermost abstraction, as done by the first case of the
substitution over a variable $\termidxv{i'}$. Hence, preserving the correct
bindings.

Finally, the reduction relation $\reduce[\rbetax]$ of the
$\calcLambdaX$-calculus is given by the closure by contexts of the rewriting
rule: \[
\termapp{(\termabsi{s})}{u} \rrule{\rbetax} \subsapply{\subs{\termidxv{1}}{u}}{s}
\]

Also, embeddings between the $\calcLambda$-calculus and $\calcLambdaX$ are
defined: $\toLambdaX{\_} : \Term{\calcLambda} \to \Term{\calcLambdaX}$ and
$\toLambda{\_} : \Term{\calcLambdaX} \to \Term{\calcLambda}$, in such a way
that they are the inverse of each other and they allow to simulate one
calculus into the other:

\begin{theorem}[\cite{KamareddineR95}]
Let $t \in \Term{\calcLambda}$ and $s \in \Term{\calcLambdaX}$. Then,
\begin{enumerate}
  \item \label{t:preliminaries:indices:simulation:lambdax} If $t
  \reduce[\rbeta] t'$, then $\toLambdaX{t} \reduce[\rbetax] \toLambdaX{t'}$.

  \item \label{t:preliminaries:indices:simulation:lambda} If $s
  \reduce[\rbetax] s'$, then $\toLambda{s} \reduce[\rbeta] \toLambda{s'}$.
\end{enumerate}
\label{t:preliminaries:indices:simulation}
\end{theorem}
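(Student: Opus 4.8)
The plan is to prove both statements by induction on the structure of the term in which the contracted redex occurs, exploiting that $\reduce[\rbeta]$ and $\reduce[\rbetax]$ are both defined as the contextual closure of a single root rule, and that each translation commutes with the term constructors (application and abstraction) up to the bookkeeping of the naming context. Closure under contexts reduces every inductive case—application on either side, and abstraction—to transporting the base case through a constructor, which is immediate once the translations are known to be compositional. Thus the whole argument hinges on the base case, where the redex sits at the root.

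For part~\ref{t:preliminaries:indices:simulation:lambdax} the base case is $t = \termapp{(\termabs{x}{s})}{u} \reduce[\rbeta] \subsapply{\subs{x}{u}}{s} = t'$. Since the translation sends this redex to $\termapp{(\termabsi{\toLambdaX{s}})}{\toLambdaX{u}}$, one root $\rbetax$-step yields $\subsapply{\subs{\termidxv{1}}{\toLambdaX{u}}}{\toLambdaX{s}}$, so it suffices to establish the \emph{substitution lemma}
\[
\toLambdaX{\subsapply{\subs{x}{u}}{s}} = \subsapply{\subs{\termidxv{1}}{\toLambdaX{u}}}{\toLambdaX{s}},
\]
read relative to the naming contexts in which $x$ is the innermost variable of $s$. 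I would prove this by induction on $s$; the index/variable and application cases are routine, and the whole difficulty is concentrated in the abstraction case, where the de Bruijn substitution descends under a binder, increments its own level, and applies $\fincv[0]{\cdot}$ to $\toLambdaX{u}$. To match this I would first isolate a \emph{weakening lemma} stating that extending the naming context by one variable corresponds exactly to applying $\fincv[k]{\cdot}$ to the translation at the depth $k$ where the new binder is inserted. This is the technical core and the step I expect to be the main obstacle, precisely because it is where the partial-update convention of~\cite{KamareddineR95} interacts with the renaming implicit in $\toLambdaX{\_}$.

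For part~\ref{t:preliminaries:indices:simulation:lambda} I would proceed dually: the base case is a root step $\termapp{(\termabsi{s})}{u} \reduce[\rbetax] \subsapply{\subs{\termidxv{1}}{u}}{s}$, closed by the analogous lemma $\toLambda{\subsapply{\subs{\termidxv{1}}{u}}{s}} = \subsapply{\subs{x}{\toLambda{u}}}{\toLambda{s}}$ for $x$ fresh in the context. Since the two translations are already known to be mutually inverse and structure-preserving, one can alternatively derive part~\ref{t:preliminaries:indices:simulation:lambda} from part~\ref{t:preliminaries:indices:simulation:lambdax} without repeating the substitution analysis: a $\rbetax$-redex in $s$ is the $\toLambdaX{\_}$-image of a unique $\beta$-redex in $\toLambda{s}$, so contracting it in $\toLambda{s}$ produces some $t''$ with $\toLambda{s} \reduce[\rbeta] t''$. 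Applying part~\ref{t:preliminaries:indices:simulation:lambdax} gives $s = \toLambdaX{\toLambda{s}} \reduce[\rbetax] \toLambdaX{t''}$, and determinism of root contraction together with injectivity of $\toLambda{\_}$ forces $\toLambdaX{t''} = s'$, hence $t'' = \toLambda{s'}$ and $\toLambda{s} \reduce[\rbeta] \toLambda{s'}$, as required.
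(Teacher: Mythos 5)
The paper never proves this theorem: it is imported from~\cite{KamareddineR95} as background (the embeddings $\toLambdaX{\_}$ and $\toLambda{\_}$ are not even defined in the paper, only cited), so there is no in-paper proof to compare against. Your plan is nonetheless the standard one, and it coincides structurally with what the paper does for the analogous $\calcPPC$/$\calcPPCX$ results: a weakening lemma relating extension of the naming context to the increment function $\fincv[k]{\_}$ (the $\calcLambda$-calculus analogue of Lem.~\ref{l:translation:to-ppcx-upd}), a substitution lemma stated relative to arbitrary naming contexts so that the induction goes through under binders (the analogue of Lem.~\ref{l:translation:to-ppcx-subs}), and finally induction on the context enclosing the contracted redex (the analogue of Lem.~\ref{l:bisimulation:ppc-ppcx}). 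You also correctly identify where the difficulty is concentrated, namely the interaction between the binder case of the substitution lemma and the partial-update convention.

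The one place you genuinely diverge is item~\ref{t:preliminaries:indices:simulation:lambda}. The paper's own treatment of the corresponding $\calcPPCX$-to-$\calcPPC$ direction is symmetric: it proves a second, dual substitution lemma (Lem.~\ref{l:translation:to-ppc-subs}) and a dual matching lemma, and only afterwards combines the two independent simulations with invertibility (Thm.~\ref{t:translation:inverse}) to get the bisimulation. Your shortcut—deriving item~\ref{t:preliminaries:indices:simulation:lambda} from item~\ref{t:preliminaries:indices:simulation:lambdax} plus invertibility—saves the dual substitution analysis, but be aware that it needs slightly more than the bare statement of item~\ref{t:preliminaries:indices:simulation:lambdax}: that statement only asserts the existence of \emph{some} $\rbetax$-step out of $\toLambdaX{t}$, whereas your argument needs the step to occur at the position corresponding to the redex you chose, so that ``determinism of root contraction'' can identify $\toLambdaX{t''}$ with $s'$. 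This position-tracking refinement is delivered by your inductive proof of item~\ref{t:preliminaries:indices:simulation:lambdax} (the translation is compositional, so redex positions correspond), so the gap is presentational rather than mathematical—but it should be stated as part of the induction hypothesis, not invoked after the fact.
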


This shows that both formalisms ($\calcLambda$-calculus and $\calcLambdaX$)
have exactly the same operational semantics.

As an example to illustrate both reduction in the $\calcLambdaX$-calculus and
its equivalence with the $\calcLambda$-calculus, consider the following terms:
$\termapp{\termapp{(\termabs{z}{\termabs{y}{z}})}{(\termabs{x}{x})}}{(\termabs{x}{\termapp{x}{x}})}$
and
$\termapp{\termapp{(\termabsi{\termabsi{\termidxv{2}}})}{(\termabsi{\termidxv{1}})}}{(\termabsi{\termapp{\termidxv{1}}{\termidxv{1}}})}$.
The reader can verify that both expressions encode the same function in its
respective calculus. As expected, their operational semantics coincide \[
\begin{array}{c@{\quad}c@{\quad}c@{\quad}c@{\quad}c}
\termapp{\termapp{(\termabs{z}{\termabs{y}{z}})}{(\termabs{x}{x})}}{(\termabs{x}{\termapp{x}{x}})}
& \reduce[\rbeta] &
\termapp{(\termabs{y}{\termabs{x}{x}})}{(\termabs{x}{\termapp{x}{x}})}
& \reduce[\rbeta] &
\termabs{x}{x} \\
\termapp{\termapp{(\termabsi{\termabsi{\termidxv{2}}})}{(\termabsi{\termidxv{1}})}}{(\termabsi{\termapp{\termidxv{1}}{\termidxv{1}}})}
& \reduce[\rbetax] &
\termapp{(\termabsi{\termabsi{\termidxv{1}}})}{(\termabsi{\termapp{\termidxv{1}}{\termidxv{1}}})}
& \reduce[\rbetax] &
\termabsi{\termidxv{1}} \\
\end{array} \]

%%% Local Variables:
%%% mode: latex
%%% TeX-master: "main"
%%% End:

%%%%%%%%%%%%%%%%%%%%%%%%%%%%%%%%%%%%%%%%%%%%%%%%%%%%%%%%%%%%%%%%%%%%%%%%%%%%%%%
\section{The Pure Pattern Calculus with de Bruijn indices}
\label{s:ppcx}
%%%%%%%%%%%%%%%%%%%%%%%%%%%%%%%%%%%%%%%%%%%%%%%%%%%%%%%%%%%%%%%%%%%%%%%%%%%%%%%

This section introduces the novel \emph{Pure Pattern Calculus with de Bruijn
indices} ($\calcPPCX$). It represents a natural extension of de Bruijn ideas to
a framework where a binder may capture more than one symbol. In the particular
case of $\calcPPC$ there are two kinds of captured symbols, namely variables
and matchables. This distinction is preserved in $\calcPPCX$ while extending
indices to pairs (\emph{a.k.a.}~bidimensional indices) to distinguish the
binder that captures the symbol and the individual symbol among all those
captured by the same binder.

The sets of \emph{terms} $\Term{\calcPPCX}$, \emph{contexts}, \emph{data
structures} $\TermData{\calcPPCX}$ and \emph{matchable forms}
$\Matchable{\calcPPCX}$ of $\calcPPCX$ are given by the following grammar: \[
\begin{array}{c@{\qquad}c}
\begin{array}{rrcl}
\textbf{Terms}    & t         & \Coloneq  & \termidxv{i}[j] \mid \termidxm{i}[j] \mid \termapp{t}{t} \mid \termabs[n]{t}{t} \\
\textbf{Contexts} & \ctxt{C}  & \Coloneq  & \Box \mid \termapp{\ctxt{C}}{t} \mid \termapp{t}{\ctxt{C}} \mid \termabs[n]{\ctxt{C}}{t} \mid \termabs[n]{t}{\ctxt{C}}
\end{array}
&
\begin{array}{rrcl}
\textbf{Data structures}  & d & \Coloneq  & \termidxm{i}[j] \mid \termapp{d}{t} \\
\textbf{Matchable forms}  & m & \Coloneq  & d \mid \termabs[n]{t}{t}
\end{array}
\end{array}
\] where $\termidxv{i}[j]$ is dubbed a \emph{bidimensional index} and denotes
an ordered pair in $\Natural_{\geq 1} \times \Natural_{\geq 1}$ with
\emph{primary index} $\termidx{i}$ and \emph{secondary index} $\termidx{j}$.
The sub-index $n \in \Natural$ in an abstraction represents the amount of
indices (pairs) being captured by it. The primary index of a pair is used to
determine if the pair is bound by an abstraction, while the secondary index
identifies the pair among those (possibly many) bound ones. As for $\calcPPC$,
an index of the form $\termidxv{i}[j]$ is called a \emph{variable index} while
$\termidxm{i}[j]$ is dubbed a \emph{matchable index}. The \emph{free variables}
and \emph{free matchables} of a term are thus defined as follows: \[
\begin{array}{c@{\qquad}c}
\begin{array}{rcl}
\fv{\termidxv{i}[j]}    & \eqdef  & \set{\termidx{i}[j]} \\
\fv{\termidxm{i}[j]}    & \eqdef  & \emptyset \\
\fv{\termapp{t}{u}}     & \eqdef  & \fv{t} \cup \fv{u} \\
\fv{\termabs[n]{p}{t}}  & \eqdef  & \fv{p} \cup (\fv{t} - 1)
\end{array}
&
\begin{array}{rcl}
\fm{\termidxv{i}[j]}    & \eqdef  & \emptyset \\
\fm{\termidxm{i}[j]}    & \eqdef  & \set{\termidx{i}[j]} \\
\fm{\termapp{t}{u}}     & \eqdef  & \fm{t} \cup \fm{u} \\
\fm{\termabs[n]{p}{t}}  & \eqdef  & (\fm{p} - 1) \cup \fm{t}
\end{array}
\end{array}
\] where $X - k$ stands for subtracting $k$ from the primary index of each
element of the set $X$, removing those that result in a non-positive index.

\begin{wrapfigure}{R}{0.2\textwidth}
\begin{tikzpicture}[baseline=(current bounding box.center),->,auto,node distance=5pt]
\node (ABS1a)                               {$\vphantom{(}\itermabs_{1}$};
\node (PTN1)  [right of=ABS1a, xshift=5pt]  {$\vphantom{(}\termidxm{1}[1]$};
\node (ABS1b) [right of=PTN1,  xshift=3pt]  {$\vphantom{(}.($};
\node (ABS2a) [right of=ABS1b, xshift=4pt]  {$\vphantom{(}\itermabs_{1}$};
\node (PTN2a) [right of=ABS2a, xshift=5pt]  {$\vphantom{(}\termidxv{1}[1]$};
\node (PTN2b) [right of=PTN2a, xshift=6pt]  {$\vphantom{(}\termidxm{1}[1]$};
\node (ABS2b) [right of=PTN2b, xshift=1pt]  {$\vphantom{(}.$};
\node (BODY)  [right of=ABS2b, xshift=1pt]  {$\vphantom{(}\termidxv{1}[1]$};
\node (ABS1c) [right of=BODY]               {$\vphantom{(})$};

\path (ABS1a.south) edge [bend right=45] node[left] {} (PTN1.south)
      (ABS1a.south) edge [bend right=90] node[left] {} (PTN2a.south)
      (ABS2a.south) edge [bend right=90] node[left] {} (PTN2b.south)
      (ABS2a.south) edge [bend right=90] node[left] {} (BODY.south);
\end{tikzpicture}
\end{wrapfigure}

Let us illustrate these concepts with a similar example as that given for
$\calcPPC$, namely the function $\mathsf{elim} =
\termabs[\lista{x}]{\termmatch{x}}{(\termabs[\lista{y}]{\termapp{\termvar{x}}{\termmatch{y}}}{\termvar{y}})}$.
An equivalent term in the $\calcPPCX$ framework would be
$\termabs[1]{\termidxm{1}[1]}{(\termabs[1]{\termapp{\termidxv{1}[1]}{\termidxm{1}[1]}}{\termidxv{1}[1]})}$.
Note that variable indices in the context of a pattern are not bound by the
respective abstraction, in the same way that matchable indices in the body of
the abstraction are not captured either. Thus, the first occurrence of
$\termidxv{1}[1]$ is actually bound by the outermost abstraction, together with
the first occurrence of the matchable index $\termidxm{1}[1]$. The rest of the
indices in the term are bound by the inner abstraction as depicted in the
figure to the right. As a further (more interesting) example, consider the term
$\termabs[\lista{x,y}]{\termapp{\termmatch{x}}{\termmatch{y}}}{\termabs[\lista{}]{\termvar{x}}{\termvar{y}}}$
from $\calcPPC$, whose counter-part in $\calcPPCX$ would look like
$\termabs[2]{\termapp{\termidxm{1}[1]}{\termidxm{1}[2]}}{\termabs[0]{\termidxv{1}[1]}{\termidxv{2}[2]}}$.
This example illustrates the use of secondary indices to identify symbols
bound by the same abstraction. It also shows how the primary index of a
variable is increased when occurring within the body of an internal
abstraction, while this is not the case for occurrences in a pattern position.
Thus, both $\termidxv{1}[1]$ and $\termidxv{2}[2]$ are bound by the outermost
abstraction, as well as $\termidxm{1}[1]$ and $\termidxm{1}[2]$. Note how the
inner abstraction does not bind any index at all.

A term $t$ is said to be \emph{well-formed} if all the free bidimensional
indices (variables and matchables) of $t$ have their secondary index equal to
1, and for every sub-term of the form $\termabs[n]{p}{s}$ (written
$\termabs[n]{p}{s} \subterm t$) all the pairs captured by the abstraction have
their secondary index within the range $[1,n]$. Formally, $\set{\termidx{i}[j]
\mid \termidx{i}[j] \in \fm{t} \cup \fv{t}, j > 1} \cup
(\bigcup_{\termabs[n]{p}{s} \subterm t}{\set{\termidx{1}[j] \mid \termidx{1}[j] \in \fm{p} \cup \fv{s}, j > n}})
= \emptyset$.

\medskip
Before introducing a proper notion of substitution for $\calcPPCX$ it is
necessary to have a mechanism to update indices at a certain depth within the
term. The \emph{increment at depth $k$} for variable and matchable indices in a
term $t$, written $\fincv[k]{t}$ and $\fincm[k]{t}$ respectively, are
inductively defined as follows \[
\begin{array}{c@{\qquad}c}
\begin{array}{rcll}
\fincv[k]{\termidxv{i}[j]}    & \eqdef  & \begin{cases}
                                          \termidxv{(i+1)}[j] & \text{if $i > k$} \\
                                          \termidxv{i}[j]     & \text{if $i \leq k$}
                                          \end{cases} \\
\fincv[k]{\termidxm{i}[j]}    & \eqdef  & \termidxm{i}[j] \\
\fincv[k]{\termapp{t}{u}}     & \eqdef  & \termapp{\fincv[k]{t}}{\fincv[k]{u}} \\
\fincv[k]{\termabs[n]{p}{t}}  & \eqdef  & \termabs[n]{\fincv[k]{p}}{\fincv[k+1]{t}}
\end{array}
&
\begin{array}{rcll}
\fincm[k]{\termidxv{i}[j]}    & \eqdef  & \termidxv{i}[j] \\
\fincm[k]{\termidxm{i}[j]}    & \eqdef  & \begin{cases}
                                          \termidxm{(i+1)}[j] & \text{if $i > k$} \\
                                          \termidxm{i}[j]     & \text{if $i \leq k$}
                                          \end{cases} \\
\fincm[k]{\termapp{t}{u}}     & \eqdef  & \termapp{\fincm[k]{t}}{\fincm[k]{u}} \\
\fincm[k]{\termabs[n]{p}{t}}  & \eqdef  & \termabs[n]{\fincm[k+1]{p}}{\fincm[k]{t}}
\end{array}
\end{array}
\] Similarly, the \emph{decrement at depth} $k$ for variables ($\fdecv[k]{\_}$)
and matchables ($\fdecm[k]{\_}$) are defined by subtracting one from the
primary index above $k$ in the term. Most of the times these functions are used
with $k = 0$, thus the subindex will be omitted when it is clear from context.
In particular, the decrement function for variables will allow us to generalise
the idea of substitution at level $i$ with respect to the original one
presented in Sec.~\ref{s:preliminaries:indices}, which only holds in the
context of a $\rbeta$-reduction, by making the necessary adjustments to the
indices at the moment of the redution instead of hard-coding them into the
substitution meta-operation.

\begin{lemma}
Let $t \in \Term{\calcPPCX}$. Then,
\begin{enumerate}
  \item \label{l:appendix:finc:var} $\fincv[k]{\fincv[l]{t}} =
  \fincv[l]{\fincv[k-1]{t}}$ if $l < k$.
  \item \label{l:appendix:finc:match} $\fincv[k]{\fincm[l]{t}} =
  \fincm[l]{\fincv[k]{t}}$.
  \item \label{l:appendix:finc:fdec-l} $\fdecv[k]{\fincv[k]{t}} = t$.
  \item \label{l:appendix:finc:fdec-r} $\fincv[k]{\fdecv[k]{t}} = t$ iff
  $\termidx{(k+1)}[j] \notin \fv{t}$ for any $j \in \Natural_{\geq 1}$.
\end{enumerate}
\label{l:appendix:finc}
\end{lemma}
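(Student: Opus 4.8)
The plan is to prove all four identities by a single structural induction on $t \in \Term{\calcPPCX}$, exploiting the fact that $\fincv[k]{\_}$ and $\fdecv[k]{\_}$ act only on the primary component of \emph{variable} indices, leaving every matchable index and every secondary index untouched (and symmetrically for $\fincm[k]{\_}$). Consequently, in each induction the matchable leaf $\termidxm{i}[j]$ and the secondary index $j$ are mere passengers, and the only genuine work happens at the variable leaf $\termidxv{i}[j]$ and in the bookkeeping of the depth parameter when crossing an abstraction $\termabs[n]{p}{s}$, where the recurrences raise the depth to $k+1$ in the body $s$ but keep it at $k$ in the pattern $p$.

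For items~\ref{l:appendix:finc:var} and~\ref{l:appendix:finc:match} I would first dispatch the variable leaf by a case analysis on the primary index $i$ relative to the thresholds. For item~\ref{l:appendix:finc:var} the regions $i \le l$, $l < i \le k-1$ and $i \ge k$ each yield the same outcome on both sides once the arithmetic on $i+1$ and $i+2$ is checked, the hypothesis $l < k$ being exactly what makes the shift $k \mapsto k-1$ consistent; for item~\ref{l:appendix:finc:match} the leaves are immediate since one operation fixes variables and the other fixes matchables. In both cases the abstraction step closes by applying the induction hypothesis to $p$ at depths $(k,l)$ and to $s$ at depths $(k+1,l+1)$, the side condition $l < k$ (resp.\ no condition) being preserved by the uniform shift, and the application step is immediate. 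Item~\ref{l:appendix:finc:fdec-l} is the easy round-trip: at the leaf, if $i \le k$ then $\fincv[k]{\_}$ fixes it and $\fdecv[k]{\_}$ (which subtracts one only from primary indices strictly above $k$) fixes it too, whereas if $i > k$ then $\fincv[k]{\_}$ produces $\termidxv{(i+1)}[j]$ with $i+1 > k$, which $\fdecv[k]{\_}$ sends back to $\termidxv{i}[j]$; the abstraction and application steps follow directly from the induction hypothesis using the same shift $k \mapsto k+1$ in the body.

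Item~\ref{l:appendix:finc:fdec-r} is the real obstacle, being a biconditional with a side condition, and it pinpoints exactly where the round-trip of item~\ref{l:appendix:finc:fdec-l} fails. The key pointwise fact at a variable leaf $\termidxv{i}[j]$ is that $\fincv[k]{\fdecv[k]{\_}}$ is the identity on every primary index except $i = k+1$: indices $i \le k$ are fixed, indices $i \ge k+2$ are lowered to $i-1 \ge k+1$ and then raised back to $i$, but $i = k+1$ is collapsed by $\fdecv[k]{\_}$ to $k$ and \emph{not} restored by $\fincv[k]{\_}$. Hence at the leaf level $\fincv[k]{\fdecv[k]{t}} = t$ holds iff $i \ne k+1$, which is precisely $\termidx{(k+1)}[j] \notin \fv{t}$; the matchable and application leaves are immediate.

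The delicate point, which I expect to be the crux, is propagating the side condition through a binder $t = \termabs[n]{p}{s}$. Here $\fincv[k]{\fdecv[k]{t}} = \termabs[n]{\fincv[k]{\fdecv[k]{p}}}{\fincv[k+1]{\fdecv[k+1]{s}}}$, so the round-trip is the identity iff it is so on $p$ at depth $k$ and on $s$ at depth $k+1$; by the induction hypothesis this is equivalent to $\termidx{(k+1)}[j] \notin \fv{p}$ and $\termidx{(k+2)}[j] \notin \fv{s}$ for all $j$. It then remains to match this with $\termidx{(k+1)}[j] \notin \fv{t}$ using $\fv{\termabs[n]{p}{s}} = \fv{p} \cup (\fv{s} - 1)$: an element $\termidx{(k+1)}[j]$ lies in $\fv{s} - 1$ exactly when $\termidx{(k+2)}[j] \in \fv{s}$, the primary shift by one mirroring the passage under the binder with the secondary index $j$ carried along. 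This alignment of the $-1$ on the free variables of the body with the depth increment $k \mapsto k+1$ is what makes the two conditions coincide, and once it is established the induction closes.
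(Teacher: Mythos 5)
Your proof is correct and takes essentially the same route as the paper, which disposes of all four items with a one-line ``straightforward induction on $t$''; your write-up simply supplies the details, including the one genuinely delicate point (the abstraction case of item 4, where the $-1$ applied to $\fv{s}$ in $\fv{\termabs[n]{p}{s}} = \fv{p} \cup (\fv{s}-1)$ must be aligned with the depth shift $k \mapsto k+1$ in the body). One harmless slip: in the abstraction case of item 2 the induction hypothesis is invoked on $p$ at depths $(k,\,l+1)$ and on $s$ at depths $(k+1,\,l)$ --- not $(k,l)$ and $(k+1,l+1)$ as you wrote, since $\fincm$ raises the depth in the pattern while $\fincv$ raises it in the body --- but as item 2 carries no side condition on the depths, this does not affect the validity of the argument.
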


\begin{proof}
All items follow by straightforward induction on $t$.
\end{proof}

A \emph{substitution at level $i$} is a partial function from variable indices
to terms. It maps free variable indices at level $i$ to terms, performing the
appropriate updates as it traverses the substituted term, to avoid undesired
captures. \[
\begin{array}{c@{\quad}c}
\begin{array}{rcl@{\quad}l}
\subsapply{\subs{\termidxv{i}[j]}{u_j}_{j \in J}}{\termidxv{i'}[k]}   & \eqdef  & \begin{cases}
                                                                                u_k               & \text{if $i' = i, k \in J$} \\
                                                                                \termidxv{i'}[k]  & \text{if $i' \neq i$}
                                                                                \end{cases} \\
\subsapply{\subs{\termidxv{i}[j]}{u_j}_{j \in J}}{\termidxm{i'}[k]}   & \eqdef  & \termidxm{i'}[k]
\end{array}
&
\begin{array}{rcl@{\quad}l}
\subsapply{\subs{\termidxv{i}[j]}{u_j}_{j \in J}}{(\termapp{t}{s})}   & \eqdef  & \termapp{\subsapply{\subs{\termidxv{i}[j]}{u_j}_{j \in J}}{t}}{\subsapply{\subs{\termidxv{i}[j]}{u_j}_{j \in J}}{s}} \\
\subsapply{\subs{\termidxv{i}[j]}{u_j}_{j \in J}}{\termabs[n]{p}{t}}  & \eqdef  & \termabs[n]{\subsapply{\subs{\termidxv{i}[j]}{\fincm{u_j}}_{j \in J}}{p}}{\subsapply{\subs{\termidxv{(i+1)}[j]}{\fincv{u_j}}_{j \in J}}{t}}
\end{array}
\end{array} \] It is worth noticing that the base case for variable indices is
undefined if $i' = i$ and $k \notin J$. Such case will render the result of
the substitution undefined as well. In the operational semantics of
$\calcPPCX$, the matching operation presented below will be responsible for
avoiding this undesired situation, as we will see later. The \emph{domain} of a
substitution at level $i$ is given by
$\dom{\subs{\termidxv{i}[j]}{u_j}_{j \in J}} \eqdef
\set{\termidxv{i}[j]}_{j \in J}$. The identity substitution (\ie with empty
domain) is denoted $\subsid$ or $\funcid$.

As in $\calcPPC$, a match ($\mu, \nu, \ldots$) may succeed, fail ($\matchfail$)
or be undetermined ($\matchundet$). For $\calcPPCX$, a successful match will
yield a substitution at level 1, as given by the following \emph{matching
operation}, where the rules are applied in order as in $\calcPPC$: \[
\begin{array}{rcl@{\qquad}l}
\match[n]{\termidxm{1}[j]}{u}                 & \eqdef  & \subs{\termidxv{1}[j]}{u} \\
\match[n]{\termidxm{i+1}[j]}{\termidxm{i}[j]} & \eqdef  & \subsid \\
\match[n]{\termapp{p}{q}}{\termapp{t}{u}}     & \eqdef  & \matchunion{\match[n]{p}{t}}{\match[n]{q}{u}} & \text{if $\termapp{t}{u}, \termapp{p}{q} \in \Matchable{\calcPPCX}$} \\
\match[n]{p}{u}                               & \eqdef  & \matchfail                                    & \text{if $u, p \in \Matchable{\calcPPCX}$} \\
\match[n]{p}{u}                               & \eqdef  & \matchundet                                   & \text{otherwise}
\end{array}
\] where disjoint union of matching is adapted to $\calcPPCX$ from $\calcPPC$
in a straightforward way. The first two rules in the matching operation for
$\calcPPCX$ are worth a comment. As the matching operation should be understood
in the context of a redex, the matchable symbols bound in the pattern are those
with primary index equal to 1. Thus, $\calcPPCX$'s counter-part of the
membership check $x \in \theta$ from $\calcPPC$'s matching operation is a
simple syntactic check on the primary index. Similarly, $x \notin \theta$
corresponds to the primary index being greater than 1, as checked by the second
rule of the definition. However, a primary index $\termidxm{i+1}$ within the
pattern should match primary index $\termidxm{i}$ from the argument, since the
former is affected by an extra binder in a redex. For instance, in the term
$\termapp{(\termabs[1]{\termapp{\termidxm{2}[1]}{\termidxm{1}[1]}}{\termidxv{1}[1]})}{(\termapp{\termidxm{1}[1]}{t'})}$
the matchable $\termidxm{2}[1]$ is free and corresponds to $\termidxm{1}[1]$ in
the argument, while $\termidxm{1}[1]$ from the pattern is bound by the
abstraction. Its counter-part in $\calcPPC$ would be $\alpha$-equivalent to
$\termapp{(\termabs[\lista{x}]{\termapp{\termmatch{y}}{\termmatch{x}}}{\termvar{x}})}{(\termapp{\termmatch{y}}{t})}$.
Hence,
$\match[1]{\termapp{\termidxm{2}[1]}{\termidxm{1}[1]}}{\termapp{\termidxm{1}[1]}{t'}}
= \subs{\termidxv{1}[1]}{t'}$.

As for $\calcPPC$, an additional post-condition is checked over
$\match[n]{p}{u}$ to prevent indices from going out of scope. It requires
$\dom{\match[n]{p}{u}} = \set{\termidx{1}[1], \ldots, \termidx{1}[n]}$, which
essentially implies that all the bound indices are assigned a value by the
resulting substitution. This condition can be guaranteed by requesting
$\set{\termidx{1}[1], \ldots, \termidx{1}[n]} \subseteq \fm{p}$, for each
abstraction $\termabs[n]{p}{t}$ within a well-formed term. To illustrate the
need of such a check, consider the term
$\termapp{(\termabs[2]{\termidxm{1}[1]}{\termidxv{1}[2]})}{u'}$ (\ie the
$\calcPPCX$ counter-part of
$\termapp{(\termabs[\lista{x,y}]{\termmatch{x}}{\termvar{y}})}{u}$, given in
Sec.~\ref{s:preliminaries:ppc}). If the matching
$\match[2]{\termidxm{1}[1]}{u'} = \subs{\termidxv{1}[1]}{u'}$ is considered
correct, then no replacement for the variable index $\termidxv{1}[2]$ in the
body of the abstraction is set, resulting in an ill-behaved operational
semantics.

\medskip
The reduction relation $\reduce[\rPPCX]$ of $\calcPPCX$ is given by the closure
by contexts of the rewriting rule: \[
\termapp{(\termabs[n]{p}{s})}{u} \rrule{\rPPCX} \fdecv{\matchapply{\match[n]{p}{\fincv{u}}}{s}}
\] whenever $\match[n]{p}{\fincv{u}}$ is a decided match. The decrement
function for variable indices is applied to the reduct to compensate for the
loss of a binder over $s$. However, the variable indices of $u$ are not
affected by such binder in the redex. Hence the need of incrementing them prior
to the (eventual) substitution.

Following the reduction example given above for $\calcPPC$, consider these
codifications of $\mathsf{elim} =
\termabs[\lista{x}]{\termmatch{x}}{(\termabs[\lista{y}]{\termapp{\termvar{x}}{\termmatch{y}}}{\termvar{y}})}$
and
$\termabs[\lista{z}]{\termmatch{z}}{\termapp{\termapp{\termmatch{c}}{\termvar{z}}}{\termmatch{n}}}$
respectively:
$\termabs[1]{\termidxm{1}[1]}{(\termabs[1]{\termapp{\termidxv{1}[1]}{\termidxm{1}[1]}}{\termidxv{1}[1]})}$
and
$\termabs[1]{\termidxm{1}[1]}{\termapp{\termapp{\termidxm{1}[1]}{\termidxv{1}[1]}}{\termidxm{2}[1]}}$.
Note how the first occurrence of $\termidxv{1}[1]$ is actually bound by the
outermost abstraction, since abstractions do not bind variable indices in
their pattern. Similarly, the matchable index $\termidxm{1}[1]$ in the body
of
$\termabs[1]{\termidxm{1}[1]}{\termapp{\termapp{\termidxm{1}[1]}{\termidxv{1}[1]}}{\termidxm{2}[1]}}$
turns out to be free as well as $\termidxm{2}[1]$. Then, as expected, we have
the following sequence: \[
\termapp{(\termabs[1]{\termidxm{1}[1]}{(\termabs[1]{\termapp{\termidxv{1}[1]}{\termidxm{1}[1]}}{\termidxv{1}[1]})})}{(\termabs[1]{\termidxm{1}[1]}{\termapp{\termapp{\termidxm{1}[1]}{\termidxv{1}[1]}}{\termidxm{2}[1]}})}
\quad\reduce[\rPPCX]\quad
\termabs[1]{\termapp{(\termabs[1]{\termidxm{1}[1]}{\termapp{\termapp{\termidxm{2}[1]}{\termidxv{1}[1]}}{\termidxm{3}[1]}})}{\termidxm{1}[1]}}{\termidxv{1}[1]}
\quad\reduce[\rPPCX]\quad
\termabs[1]{\termapp{\termapp{\termidxm{2}[1]}{\termidxm{1}[1]}}{\termidxm{3}[1]}}{\termidxv{1}[1]}
\] In the first step,
$\termabs[1]{\termidxm{1}[1]}{\termapp{\termapp{\termidxm{1}[1]}{\termidxv{1}[1]}}{\termidxm{2}[1]}}$
is substituted for $\termidxv{1}[1]$ into the pattern
$\termapp{\termidxv{1}[1]}{\termidxm{1}[1]}$. The fact that the substitution
takes place within the context of a pattern forces the application of
$\fincm{\_}$, thus updating the matchable indices and obtaining
$\termabs[1]{\termidxm{1}[1]}{\termapp{\termapp{\termidxm{2}[1]}{\termidxv{1}[1]}}{\termidxm{3}[1]}}$.
Note that the increment and decrement added by the reduccion rule take no
effect as there are no free variable indices in the term. In the second step,
the resulting application is reduced, giving place to a term whose counter-part
in $\calcPPC$ would be equivalent to
$\termabs[\lista{y}]{\termapp{\termapp{\termmatch{c}}{\termmatch{y}}}{\termmatch{n}}}{\termvar{y}}$
(\cf the reduction example in Sec.~\ref{s:preliminaries:ppc}).

\medskip
In the following sections $\calcPPCX$ is shown to be equivalent to $\calcPPC$
in terms of expressive power and operational semantics. The main advantage of
this new presentation is that it gets rid of $\alpha$-conversion, since there
is no possible collision between free and bound variables/matchables. However,
there is one minor drawback with respect to the use of de Bruijn indices for
the standard $\calcLambda$-calculus. As mentioned above, when working with de
Bruijn indices in the standard $\calcLambda$-calculus, $\alpha$-equivalence
becomes syntactical equality.

Unfortunately, this is not the case when working with bidimensional indices.
For instance, consider the terms
$\termabs[2]{\termapp{\termidxm{1}[1]}{\termidxm{1}[2]}}{\termidxv{1}[1]}$ and
$\termabs[2]{\termapp{\termidxm{1}[2]}{\termidxm{1}[1]}}{\termidxv{1}[2]}$.
Both represent the function that decomposes an application and projects its
first component. But they differ in the way the secondary indices are assigned.
Moreover, one may be tempted to impose an order for the way the secondary
indices are assigned within the pattern to avoid this situation (recall that
the post-condition of the matching operation forces all bound symbols to appear
in the pattern). Given the dynamic nature of patterns in the $\calcPPC$
framework, this enforcement would not solve the problem since patterns may
reduce and such an order is not closed under reduction. For example, consider
$\termabs[2]{\termapp{(\termabs[2]{\termapp{\termidxm{1}[1]}{\termidxm{1}[2]}}{\termapp{\termidxv{1}[2]}{\termidxv{1}[1]}})}{(\termapp{\termidxm{1}[1]}{\termidxm{1}[2]})}}{\termidxv{1}[1]}
\reduce[\rPPCX]
\termabs[2]{\termapp{\termidxm{1}[2]}{\termidxm{1}[1]}}{\termidxv{1}[1]}$.

Fortunately enough, this does not represent a problem from the implementation
point of view, since the ambiguity is local to a binder and does not imply the
need for ``renaming'' variables/matchables while reducing a term, \ie no
possible undesired capture can happen because of it. It is important to note
though, that in the sequel, when refering to equality over terms of
$\calcPPCX$, it is not syntactical equality but equality modulo these
assignments for secondary indices that we are using.

%%% Local Variables:
%%% mode: latex
%%% TeX-master: "main"
%%% End:

%%%%%%%%%%%%%%%%%%%%%%%%%%%%%%%%%%%%%%%%%%%%%%%%%%%%%%%%%%%%%%%%%%%%%%%%%%%%%%%
\section{Translation}
\label{s:translation}
%%%%%%%%%%%%%%%%%%%%%%%%%%%%%%%%%%%%%%%%%%%%%%%%%%%%%%%%%%%%%%%%%%%%%%%%%%%%%%%

This section introduces translations between $\calcPPC$ and $\calcPPCX$ (back
and forth). The goal is to show that these interpretations are suitable to
simulate one calculus into the other. Moreover, the proposed translations turn
out to be the inverse of each other (modulo $\alpha$-conversion) and, as we
will see in Sec.~\ref{s:bisimulation}, they allow to formalise a strong
bisimulation between the two calculi.

\medskip
We start with the translation from $\calcPPC$ to $\calcPPCX$. It takes the
term to be translated together with two lists of lists of symbols that dictate
how the variables and matchables of the terms should be interpreted
respectively. We use lists of lists since the first dimension indicates the
distance to the binder, while the second identifies the symbol among the
multiple bound ones.

Given the lists of lists $X$ and $Y$, we denote by $\listconc{X}{Y}$ their
concatenation. To improve readability, when it is clear from context, we also
write $\listconc{\theta}{X}$ with $\theta$ a list of symbols to denote
$\listconc{\lista{\theta}}{X}$ where $\lista{\_}$ denotes the list constructor.
We use set operations like union and intersection over lists to denote the
union/intersection of its underlying sets.

\begin{definition}
Given a term $t \in \Term{\calcPPC}$ and lists of lists of symbols $V$ and $M$
such that $\fv{t} \subseteq \bigcup_{V' \in V}{V'}$ and $\fm{t} \subseteq
\bigcup_{M' \in M}{M'}$, the \emph{translation of $t$ relative to $V$ and $M$},
written $\toPPCX[V][M]{t}$, is inductively defined as follows: \[
\begin{array}{rcl@{\quad}l}
\toPPCX[V][M]{\termvar{x}}            & \eqdef  & \termidxv{i}[j] & \text{where $i = \min\set{i' \mid x \in V_{i'}}$ and $j = \min\set{j' \mid x = V_{ij'}}$} \\
\toPPCX[V][M]{\termmatch{x}}          & \eqdef  & \termidxm{i}[j] & \text{where $i = \min\set{i' \mid x \in M_{i'}}$ and $j = \min\set{j' \mid x = M_{ij'}}$} \\
\toPPCX[V][M]{\termapp{t}{u}}         & \eqdef  & \termapp{\toPPCX[V][M]{t}}{\toPPCX[V][M]{u}} \\
\toPPCX[V][M]{\termabs[\theta]{p}{t}} & \eqdef  & \termabs[\fsize{\theta}]{\toPPCX[V][\listconc{\theta}{M}]{p}}{\toPPCX[\listconc{\theta}{V}][M]{t}}
\end{array}
\] Let $x_1, x_2, \ldots$ be an enumeration of $\TermVariable$. Then, the
\emph{translation} of $t$ to $\calcPPCX$, written simply $\toPPCX{t}$, is
defined as $\toPPCX[X][X]{t}$ where $X = \lista{\lista{x_1}, \ldots,
\lista{x_n}}$ such that $\fv{t} \cup \fm{t} \subseteq \set{x_1, \ldots, x_n}$.
\label{d:translation:to-ppcx}
\end{definition}

For example, consider the term $s_0 =
\termapp{(\termabs[\lista{x}]{\termapp{\termmatch{y}}{\termmatch{x}}}{\termvar{x}})}{(\termapp{\termmatch{y}}{s'_0})}$
with $\fv{s'_0} = \set{y}$ and $\fm{s'_0} = \set{y}$. Then,
$\toPPCX[\lista{\lista{y}}][\lista{\lista{y}}]{s_0} =
\termapp{(\termabs[1]{\termapp{\toPPCX[\lista{\lista{y}}][\lista{\lista{x},\lista{y}}]{\termmatch{y}}}{\toPPCX[\lista{\lista{y}}][\lista{\lista{x},\lista{y}}]{\termmatch{x}}}}{\toPPCX[\lista{\lista{x},\lista{y}}][\lista{\lista{y}}]{\termvar{x}}})}{(\termapp{\toPPCX[\lista{\lista{y}}][\lista{\lista{y}}]{\termmatch{y}}}{\toPPCX[\lista{\lista{y}}][\lista{\lista{y}}]{s'_0}})}
=
\termapp{(\termabs[1]{\termapp{\termidxm{2}[1]}{\termidxm{1}[1]}}{\termidxv{1}[1]})}{(\termapp{\termidxm{1}[1]}{\toPPCX[\lista{\lista{y}}][\lista{\lista{y}}]{s'_0}})}$.
Note that the inicialisation of $V$ and $M$ with singleton elements implies
that each free variable/matchable in the term will be assigned a distinct
primary index (when interpreted at the same depth), following de Bruijn's
original ideas: let $s_1 =
\termapp{(\termabs[\lista{x}]{\termapp{\termmatch{y}}{\termmatch{x}}}{\termvar{x}})}{(\termapp{\termmatch{y}}{\termvar{z}})}$,
then $\toPPCX[\lista{\lista{y},\lista{z}}][\lista{\lista{y},\lista{z}}]{s_1} =
\termapp{(\termabs[1]{\termapp{\termidxm{2}[1]}{\termidxm{1}[1]}}{\termidxv{1}[1]})}{(\termapp{\termidxm{1}[1]}{\termidxv{2}[1]})}$.

\medskip
Our main goal is to prove that $\calcPPCX$ simulates $\calcPPC$ via this
embedding. For this purpose we need to state first some auxiliary lemmas that
prove how the translation behaves with respect to the substitution and the
matching operation. We start with a technical result concerning the increment
functions for variable and matchable indices. Notation $\fincv[k]{t}[n]$ stands
for $n$ consecutive applications of $\fincv[k]{\_}$ over $t$ (similarly for
$\fincm[k]{t}[n]$).

\begin{lemma}
Let $t \in \Term{\calcPPC}$, $k \geq 0$, $i \geq 1$ and $n \geq k + i$ such
that $X_l \cap (\fv{t} \cup \fm{t}) = \emptyset$ for all $l \in [k+1,k+i-1]$.
Then,
\begin{enumerate}
  \item\label{l:translation:to-ppcx-upd:var}
  $\toPPCX[\listconc{\listconc{X_1}{\ldots}}{X_n}][M]{t} =
  \fincv[k]{\toPPCX[\listconc{\listconc{\listconc{X_1}{\ldots}}{X_k}}{\listconc{\listconc{X_{k+i}}{\ldots}}{X_n}}][M]{t}}[i-1]$.

  \item\label{l:translation:to-ppcx-upd:match}
  $\toPPCX[V][\listconc{\listconc{X_1}{\ldots}}{X_n}]{t} =
  \fincm[k]{\toPPCX[V][\listconc{\listconc{\listconc{X_1}{\ldots}}{X_k}}{\listconc{\listconc{X_{k+i}}{\ldots}}{X_n}}]{t}}[i-1]$.
\end{enumerate}
\label{l:translation:to-ppcx-upd}
\end{lemma}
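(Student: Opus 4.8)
The plan is to prove both items simultaneously by structural induction on $t$, exploiting that they are perfectly symmetric: item~\ref{l:translation:to-ppcx-upd:var} records the deletion of the $i-1$ variable-lists $X_{k+1},\dots,X_{k+i-1}$ and compensates with the operator $\fincv[k]{\cdot}[i-1]$, while item~\ref{l:translation:to-ppcx-upd:match} does the same for matchable-lists with $\fincm[k]{\cdot}[i-1]$; the second follows from the first by swapping the roles of $\fv{\cdot}/\fm{\cdot}$, of $\termvar{\cdot}/\termmatch{\cdot}$, and of the two list parameters of $\toPPCX[][]{\cdot}$. For $i=1$ the range $[k+1,k+i-1]$ is empty and $\fincv[k]{\cdot}[i-1]$ is the identity, so both sides coincide and I may focus on the genuine content. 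Throughout I would write $V$ for the long variable-list $[X_1,\dots,X_n]$ and $V'$ for the short one $[X_1,\dots,X_k,X_{k+i},\dots,X_n]$.

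For the base cases of item~\ref{l:translation:to-ppcx-upd:var}, the case $\termmatch{x}$ is immediate, since the matchable index is read off from $M$, which is identical on both sides, and $\fincv[k]{\cdot}$ leaves matchable indices untouched. The real content is $\termvar{x}$. Setting $p=\min\{\,l\mid x\in X_l\,\}$ for the minimal position of $x$ in $V$, the hypothesis $X_l\cap\fv{\termvar{x}}=\emptyset$ for $l\in[k+1,k+i-1]$ forces $p\notin[k+1,k+i-1]$. If $p\le k$, the first $k$ lists survive the deletion, so $V'$ yields the same primary index $p$, and since $p\le k$ the $i-1$ increments do nothing; if $p\ge k+i$, then in $V'$ the list $X_p$ sits at position $p-(i-1)$, and because $p-(i-1)\ge k+1>k$ the $i-1$ applications of $\fincv[k]{\cdot}$ raise it back to $p$. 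The secondary index $j$ is unchanged in either regime, so the two sides agree.

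The application case is routine: $\fincv[k]{\cdot}$ distributes over $\termapp{\cdot}{\cdot}$, the translation is componentwise, and the hypothesis transfers to each subterm since $\fv{\cdot}$ and $\fm{\cdot}$ only shrink. The abstraction case $t=\termabs[\fsize{\theta}]{p}{s}$ is the crux: the translation unfolds to $\termabs[\fsize{\theta}]{\toPPCX[V][\listconc{\theta}{M}]{p}}{\toPPCX[\listconc{\theta}{V}][M]{s}}$, and on the right-hand side $\fincv[k]{\termabs[\fsize{\theta}]{P}{S}}=\termabs[\fsize{\theta}]{\fincv[k]{P}}{\fincv[k+1]{S}}$ bumps the increment depth inside the body to $k+1$. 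I would therefore invoke the \ih twice: on $p$ at depth $k$ (the variable-lists are unchanged, only $M$ becomes $\listconc{\theta}{M}$ on both sides), and on $s$ at depth $k+1$ with the extended lists $\listconc{\theta}{V}$ and $\listconc{\theta}{V'}$, observing that deleting $X_{k+1},\dots,X_{k+i-1}$ now corresponds to positions $[k+2,k+i]$ and that $n+1\ge(k+1)+i$ since $n\ge k+i$.

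The step I expect to be the main obstacle is verifying the disjointness hypothesis for these two recursive calls, because $\fv{\cdot}$ and $\fm{\cdot}$ of $p$ and $s$ are tied to those of $t$ only modulo the bound list $\theta$: one has $\fv{p}\cup(\fv{s}\setminus\theta)=\fv{t}$ and $(\fm{p}\setminus\theta)\cup\fm{s}=\fm{t}$, so a symbol of $\theta$ that occurs free in $s$ (or in $p$) need not be excluded from the $X_l$ by the hypothesis on $t$. The resolution I would use is that prepending $\theta$ pins every bound symbol at primary index $1$, i.e.\ strictly below the shifted depth $k+1$: such a symbol is found via the minimum at position $1$ regardless of any later occurrence in a deleted $X_l$, hence it is untouched both by the deletion and by $\fincv[k+1]{\cdot}$. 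Consequently every free symbol of $s$ has minimal position outside the removed range — either it is bound, and thus pinned at $1$, or it lies in $\fv{t}\cup\fm{t}$, where the hypothesis applies — and symmetrically for $p$, which is exactly what makes the \ih applicable. Equivalently, one may invoke the variable convention to take $\theta$ disjoint from all the $X_l$, so that the hypothesis transfers verbatim. Once this scoping bookkeeping is settled, item~\ref{l:translation:to-ppcx-upd:match} follows by the mirror argument.
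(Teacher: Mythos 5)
Your proposal is correct and takes essentially the same route as the paper's own proof: structural induction on $t$, the same two-case index arithmetic at variables (minimal position $\le k$ left untouched, minimal position $\ge k+i$ shifted down by $i-1$ and restored by the $i-1$ increments), and the shift from depth $k$ to $k+1$ for the body under an abstraction, with item 2 obtained by symmetry. You are in fact more careful than the paper on one point: the paper applies the induction hypothesis to $p$ and $s$ without checking that the disjointness hypothesis survives the prepending of $\theta$ (whose symbols may occur free in the subterms yet lie in a removed $X_l$), and your resolution --- bound symbols are pinned at primary index $1$ by the minimum-based lookup, hence unaffected by both the deletion and $\fincv[k+1]{\_}$, or alternatively take $\theta$ fresh for the $X_l$ by the variable convention --- fills exactly that gap.
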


\begin{proof}
\begin{enumerate}
  \item By induction on $t$.
  \begin{itemize}
    \item $t = \termvar{x}$. Let
    $\toPPCX[\listconc{\listconc{X_1}{\ldots}}{X_n}][M]{\termvar{x}} =
    \termidxv{i'}[j]$, \ie $x = X_{i'j}$. By hypothesis, $i' \notin [k+1,k+i-1]$.
    Then, there are two possible cases:
    \begin{enumerate}
      \item $i' \leq k$. Then,
      $\toPPCX[\listconc{\listconc{X_1}{\ldots}}{X_n}][M]{\termvar{x}}
      = \termidxv{i'}[j] = \fincv[k]{\termidxv{i'}[j]}[i-1] =
      \fincv[k]{\toPPCX[\listconc{\listconc{\listconc{X_1}{\ldots}}{X_k}}{\listconc{\listconc{X_{k+i}}{\ldots}}{X_n}}][M]{\termvar{x}}}[i-1]$.
      
      \item $i' \geq k+i$. Then, $i' - (i - 1) > k$ and hence we have
      $\termidxv{i'}[j] = \fincv[k]{\termidxv{(i'-(i-1))}[j]}[i-1]$. Thus, we
      conclude $\toPPCX[\listconc{\listconc{X_1}{\ldots}}{X_n}][M]{\termvar{x}}
      = \termidxv{i'}[j] = \fincv[k]{\termidxv{(i'-(i-1))}[j]}[i-1] =
      \fincv[k]{\toPPCX[\listconc{\listconc{\listconc{X_1}{\ldots}}{X_k}}{\listconc{\listconc{X_{k+i}}{\ldots}}{X_n}}][M]{\termvar{x}}}[i-1]$.
    \end{enumerate}
    
    \item $t = \termmatch{x}$. This is immediate since
    $\toPPCX[\listconc{\listconc{X_1}{\ldots}}{X_n}][M]{\termvar{x}} =
    \termidxm{i'}[j] =
    \toPPCX[\listconc{\listconc{\listconc{X_1}{\ldots}}{X_k}}{\listconc{\listconc{X_{k+i}}{\ldots}}{X_n}}][M]{\termvar{x}}$
    for some $i',j \in \Natural_{\geq 1}$ such that $x = M_{i'j}$, and
    $\fincv[k]{\_}[i-1]$ leaves matchable indices untouched.
    
    \item $t = \termapp{s}{u}$. By \ih we have
    $\toPPCX[\listconc{\listconc{X_1}{\ldots}}{X_n}][M]{s} =
    \fincv[k]{\toPPCX[\listconc{\listconc{\listconc{X_1}{\ldots}}{X_k}}{\listconc{\listconc{X_{k+i}}{\ldots}}{X_n}}][M]{s}}[i-1]$
    and $\toPPCX[\listconc{\listconc{X_1}{\ldots}}{X_n}][M]{u} =
    \fincv[k]{\toPPCX[\listconc{\listconc{\listconc{X_1}{\ldots}}{X_k}}{\listconc{\listconc{X_{k+i}}{\ldots}}{X_n}}][M]{u}}[i-1]$.
    Thus, we conclude $\toPPCX[\listconc{\listconc{X_1}{\ldots}}{X_n}][M]{t} =
    \termapp{\toPPCX[\listconc{\listconc{X_1}{\ldots}}{X_n}][M]{s}}{\toPPCX[\listconc{\listconc{X_1}{\ldots}}{X_n}][M]{u}}
    = 
    \termapp{\fincv[k]{\toPPCX[\listconc{\listconc{\listconc{X_1}{\ldots}}{X_k}}{\listconc{\listconc{X_{k+i}}{\ldots}}{X_n}}][M]{s}}[i-1]}{\fincv[k]{\toPPCX[\listconc{\listconc{\listconc{X_1}{\ldots}}{X_k}}{\listconc{\listconc{X_{k+i}}{\ldots}}{X_n}}][M]{u}}[i-1]}
    =
    \fincv[k]{\toPPCX[\listconc{\listconc{\listconc{X_1}{\ldots}}{X_k}}{\listconc{\listconc{X_{k+i}}{\ldots}}{X_n}}][M]{t}}[i-1]$.
    
    \item $t = \termabs[\theta]{p}{s}$. By \ih
    $\toPPCX[\listconc{\listconc{X_1}{\ldots}}{X_n}][\listconc{\theta}{M}]{p} =
    \fincv[k]{\toPPCX[\listconc{\listconc{\listconc{X_1}{\ldots}}{X_k}}{\listconc{\listconc{X_{k+i}}{\ldots}}{X_n}}][\listconc{\theta}{M}]{p}}[i-1]$
    and $\toPPCX[\listconc{\theta}{\listconc{\listconc{X_1}{\ldots}}{X_n}}][M]{s} =
    \fincv[k+1]{\toPPCX[\listconc{\theta}{\listconc{\listconc{\listconc{X_1}{\ldots}}{X_k}}{\listconc{\listconc{X_{k+i}}{\ldots}}{X_n}}}][M]{s}}[i-1]$.
    Note that $\theta$ is pushed accordingly in the list of variable or
    matchable symbols for $s$ and $p$ respectively, following the definition of
    $\toPPCX{\_}$ for abstractions. In the case for $s$, this implies
    concluding with $\fincv[k+1]{\_}$ instead of $\fincv[k]{\_}$. Finally,
    conclude by Def.~\ref{d:translation:to-ppcx},
    $\toPPCX[\listconc{\listconc{X_1}{\ldots}}{X_n}][M]{t} =
    \termabs[\fsize{\theta}]{\toPPCX[\listconc{\listconc{X_1}{\ldots}}{X_n}][\listconc{\theta}{M}]{p}}{\toPPCX[\listconc{\theta}{\listconc{\listconc{X_1}{\ldots}}{X_n}}][M]{s}}
    = 
    \termabs[\fsize{\theta}]{\fincv[k]{\toPPCX[\listconc{\listconc{\listconc{X_1}{\ldots}}{X_k}}{\listconc{\listconc{X_{k+i}}{\ldots}}{X_n}}][\listconc{\theta}{M}]{p}}[i-1]}{\fincv[k+1]{\toPPCX[\listconc{\theta}{\listconc{\listconc{\listconc{X_1}{\ldots}}{X_k}}{\listconc{\listconc{X_{k+i}}{\ldots}}{X_n}}}][M]{s}}[i-1]}
    =
    \fincv[k]{\toPPCX[\listconc{\listconc{\listconc{X_1}{\ldots}}{X_k}}{\listconc{\listconc{X_{k+i}}{\ldots}}{X_n}}][M]{t}}[i-1]$.
  \end{itemize}
  
  \item By induction on $t$. This item is similar to the previous one.
\end{enumerate}
\end{proof}

%%% Local Variables: 
%%% mode: latex
%%% TeX-master: "main"
%%% End: 

The translation of a substitution $\sigma$ requires an enumeration $\theta$
such that $\dom{\sigma} \subseteq \theta$ to be provided. It is then defined as
$\toPPCX[V][M]{\sigma}[\theta] \eqdef
\subs{\termidxv{1}[j]}{\toPPCX[V][M]{\subsapply{\sigma}{x_j}}}_{x_j \in \dom{\sigma}}$.
Note how substitutions from $\calcPPC$ are mapped into substitutions at level 1 in
the $\calcPPCX$ framework. This suffices since substitutions are only meant to
be created in the context of a redex. When acting at arbitrary depth on a term,
substitutions are shown to behave properly.

\begin{lemma}
Let $s \in \Term{\calcPPC}$, $\sigma$ be a substitution,
$\theta$ be an enumeration such that $\dom{\sigma}
\subseteq \theta$ and $Y$ be a list of $i-1$ lists of symbols such that
$(\bigcup_{Y' \in Y}{Y'}) \cap \theta = \emptyset$ and
$(\bigcup_{Y' \in Y}{Y'}) \cap
(\bigcup_{x_j \in \theta}{\fv{\subsapply{\sigma}{\termvar{x_j}}}}) =
\emptyset$. Then,
$\toPPCX[\listconc{Y}{X}][M]{\subsapply{\sigma}{s}} =
\fdecv[i-1]{\subsapply{\subs{\termidxv{i}[j]}{\fincv[i-1]{\toPPCX[\listconc{Y}{X}][M]{\subsapply{\sigma}{\termvar{x_j}}}}}_{x_j \in \dom{\sigma}}}{\toPPCX[\listconc{\listconc{Y}{\theta}}{X}][M]{s}}}$.
\label{l:translation:to-ppcx-subs}
\end{lemma}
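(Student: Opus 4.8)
The plan is to induct on the structure of $s$, following the de~Bruijn substitution arguments of~\cite{KamareddineR95} while tracking the extra bookkeeping imposed by the bidimensional indices, the variable/matchable dichotomy, and the level $i$ recorded by the $(i-1)$-list prefix $Y$. Throughout, I would use $\alpha$-conversion to assume that every bound list $\theta'$ encountered inside $s$ is fresh: $\pavoids{\theta'}{\sigma}$ holds and $\theta'$ is disjoint from $\theta$, from $\bigcup_{Y' \in Y}{Y'}$, from the enumeration $X$, and from $\bigcup_{x_j \in \theta}{\fv{\subsapply{\sigma}{\termvar{x_j}}}}$. This validates the $\calcPPC$ clause $\subsapply{\sigma}{\termabs[\theta']{p}{t}} = \termabs[\theta']{\subsapply{\sigma}{p}}{\subsapply{\sigma}{t}}$ and, at the same time, yields exactly the disjointness hypotheses needed to re-invoke the induction hypothesis and Lemma~\ref{l:translation:to-ppcx-upd} underneath a binder.

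The leaf and application cases are direct. For $s = \termmatch{x}$ both sides collapse to the same matchable index $\termidxm{i'}[k]$, since the substitution, $\fincv[i-1]{\_}$ and $\fdecv[i-1]{\_}$ all fix matchable indices and the translated matchable depends only on $M$, which is unchanged. For $s = \termvar{x}$ I split on $x \in \dom{\sigma}$. When $x = x_j \in \dom{\sigma}$, the hypothesis $\bigcup_{Y' \in Y}{Y'} \cap \theta = \emptyset$ forces $\toPPCX[\listconc{\listconc{Y}{\theta}}{X}][M]{\termvar{x_j}} = \termidxv{i}[j]$; the level-$i$ substitution then selects $\fincv[i-1]{\toPPCX[\listconc{Y}{X}][M]{\subsapply{\sigma}{\termvar{x_j}}}}$, which $\fdecv[i-1]{\_}$ cancels back to the left-hand side by Lemma~\ref{l:appendix:finc}(\ref{l:appendix:finc:fdec-l}). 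When $x \notin \theta$ (the situation arising from $\dom{\sigma} = \theta$ in the intended use), inserting $\theta$ at depth $i$ shifts any occurrence of $x$ that lands in $X$ up by one and leaves occurrences in $Y$ below level $i$; since $\fdecv[i-1]{\_}$ lowers primary indices $\geq i$ by one and fixes those $< i$, a short computation shows both sides agree. The application case is routine, as substitution, translation, $\fincv$, and $\fdecv$ all commute with $\termapp{\_}{\_}$, so it follows by applying the induction hypothesis to each subterm.

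The abstraction case $s = \termabs[\theta']{p}{s'}$ is the crux and the step I expect to be the main obstacle. Here I would unfold $\toPPCX{\_}$, push the level-$i$ substitution through the binder with the abstraction clause of substitution --- which replaces each replacement by $\fincm{\_}$ in the pattern and by $\fincv{\_}$ in the body while raising the level to $i+1$ there --- and then distribute $\fdecv[i-1]{\_}$ over the abstraction, obtaining $\fdecv[i-1]{\_}$ on the pattern and $\fdecv[i]{\_}$ on the body. To the body I apply the induction hypothesis with prefix $\listconc{\theta'}{Y}$ at level $i+1$, and to the pattern the induction hypothesis with matchable list $\listconc{\theta'}{M}$ at the unchanged level $i$; associativity of concatenation, \eg $\listconc{\theta'}{\listconc{\listconc{Y}{\theta}}{X}} = \listconc{\listconc{\listconc{\theta'}{Y}}{\theta}}{X}$, makes the translated subterms on both sides coincide.

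What remains is to reconcile the increment coefficients, which is the delicate accounting. In the body, the induction hypothesis yields the coefficient $\fincv[i]{\toPPCX[\listconc{\listconc{\theta'}{Y}}{X}][M]{\subsapply{\sigma}{\termvar{x_j}}}}$, while pushing the substitution through the binder yields $\fincv{\fincv[i-1]{\toPPCX[\listconc{Y}{X}][M]{\subsapply{\sigma}{\termvar{x_j}}}}}$. Using freshness of $\theta'$ with respect to $\fv{\subsapply{\sigma}{\termvar{x_j}}}$, Lemma~\ref{l:translation:to-ppcx-upd}(\ref{l:translation:to-ppcx-upd:var}) rewrites the inserted block $\theta'$ as a single $\fincv[0]{\_}$, and then Lemma~\ref{l:appendix:finc}(\ref{l:appendix:finc:var}), namely $\fincv[i]{\fincv[0]{\_}} = \fincv[0]{\fincv[i-1]{\_}}$, identifies the two coefficients. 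The pattern is symmetric: Lemma~\ref{l:translation:to-ppcx-upd}(\ref{l:translation:to-ppcx-upd:match}) turns the inserted $\theta'$ into $\fincm[0]{\_}$, and Lemma~\ref{l:appendix:finc}(\ref{l:appendix:finc:match}), the commutation of $\fincv$ and $\fincm$, lines it up with the $\fincm{\_}$ produced by the substitution clause. Once the coefficients coincide the induction hypothesis closes both components, and reassembling the abstraction gives the statement. The whole difficulty lies in keeping the depths of $\fincv$, $\fincm$, $\fdecv$ and the substitution levels synchronised, and the freshness conventions together with the commutation identities of Lemma~\ref{l:appendix:finc} are exactly what make this accounting close.
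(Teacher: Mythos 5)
Your proposal is correct and takes essentially the same approach as the paper's own proof: structural induction on $s$ with w.l.o.g.\ freshness of each bound list $\theta'$, the cancellation $\fdecv[i-1]{\fincv[i-1]{\_}}$ for variables in $\dom{\sigma}$, and, in the abstraction case, the same reconciliation of increment coefficients via Lemma~\ref{l:translation:to-ppcx-upd} together with the commutation identities of Lemma~\ref{l:appendix:finc}. The only cosmetic difference is that you merge into one computation the paper's two sub-cases for variables outside $\theta$ (those resolved in $Y$ at level $i' < i$ and those in $X$ at level $i' > i$).
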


\begin{proof}
By induction on term $s$.
\begin{itemize}
  \item $s = \termvar{y}$. There are three possible cases.
  \begin{enumerate}
    \item $y \in \theta$ (\ie $y = \theta_k$ for some $k$). By hypothesis,
    $\theta_k \notin \bigcup_{Y' \in Y}{Y'}$ and
    $\fv{\subsapply{\sigma}{\termvar{\theta_k}}} \cap \bigcup_{Y' \in Y}{Y'} =
    \emptyset$. Moreover, let $V = \listconc{\listconc{Y}{\theta}}{X}$, then $i
    = \min\set{i' \mid \theta_k \in V_{i'}}$ and $k =
    \min\set{j' \mid \theta_k = V_{ij'}}$. Thus,
    $\toPPCX[\listconc{\listconc{Y}{\theta}}{X}][M]{\termvar{\theta_k}} =
    \termidxv{i}[k]$. We conclude since
    $\fdecv[i-1]{\subsapply{\subs{\termidxv{i}[j]}{\fincv[i-1]{\toPPCX[\listconc{Y}{X}][M]{\subsapply{\sigma}{\termvar{x_j}}}}}_{x_j \in \dom{\sigma}}}{\toPPCX[\listconc{\listconc{Y}{\theta}}{X}][M]{\termvar{\theta_k}}}}
    =
    \fdecv[i-1]{\fincv[i-1]{\toPPCX[\listconc{Y}{X}][M]{\subsapply{\sigma}{\termvar{\theta_k}}}}}
    = \toPPCX[\listconc{Y}{X}][M]{\subsapply{\sigma}{\termvar{\theta_k}}}$ by
    Lem.~\ref{l:appendix:finc} (\ref{l:appendix:finc:fdec-l}).
    
    \item $y \in \bigcup_{Y' \in Y}{Y'}$. By hypothesis, $y \notin \theta$.
    Then, $\subsapply{\sigma}{\termvar{y}} = \termvar{y}$. Let
    $\toPPCX[\listconc{\listconc{Y}{\theta}}{X}][M]{\termvar{y}} =
    \termidxv{i'}[k]$.
    Moreover, $i' < i$. Thus, we conclude
    $\fdecv[i-1]{\subsapply{\subs{\termidxv{i}[j]}{\fincv[i-1]{\toPPCX[\listconc{Y}{X}][M]{\subsapply{\sigma}{\termvar{x_j}}}}}_{x_j \in \dom{\sigma}}}{\toPPCX[\listconc{\listconc{Y}{\theta}}{X}][M]{\termvar{y}}}}
    = \fdecv[i-1]{\termidxv{i'}[k]} = \termidxv{i'}[k] =
    \toPPCX[\listconc{Y}{X}][M]{\subsapply{\sigma}{\termvar{y}}}$ since $i' < i$.
    
    \item Otherwise, \ie $y \in \bigcup_{X' \in X}{X'}$, $y \notin
    \bigcup_{Y' \in Y}{Y'}$ and $y \notin \theta$. Then,
    $\subsapply{\sigma}{\termvar{y}} = \termvar{y}$. Let
    $\toPPCX[\listconc{\listconc{Y}{\theta}}{X}][M]{\termvar{y}} =
    \termidxv{i'}[k]$. Moreover, $i' > i$ and
    $\toPPCX[\listconc{Y}{X}][M]{\termvar{y}} = \termidxv{(i'-1)}[k]$. Thus, we
    conclude
    $\fdecv[i-1]{\subsapply{\subs{\termidxv{i}[j]}{\fincv[i-1]{\toPPCX[\listconc{Y}{X}][M]{\subsapply{\sigma}{\termvar{x_j}}}}}_{x_j \in \dom{\sigma}}}{\toPPCX[\listconc{\listconc{Y}{\theta}}{X}][M]{\termvar{y}}}}
    = \fdecv[i-1]{\termidxv{i'}[k]} = \termidxv{(i'-1)}[k] = 
    \toPPCX[\listconc{Y}{X}][M]{\subsapply{\sigma}{\termvar{y}}}$ since $i' >
    i$.
  \end{enumerate}
  
  \item $s = \termmatch{y}$. Then,
  $\toPPCX[\listconc{\listconc{Y}{\theta}}{X}][M]{\termmatch{y}} =
  \termidxm{i'}[k]$ for some $i',k \in \Natural_{\geq 1}$. Thus,
  $\fdecv[i-1]{\subsapply{\subs{\termidxv{i}[j]}{\fincv[i-1]{\toPPCX[\listconc{Y}{X}][M]{\subsapply{\sigma}{\termvar{x_j}}}}}_{x_j \in \dom{\sigma}}}{\toPPCX[\listconc{\listconc{Y}{\theta}}{X}][M]{\termmatch{y}}}}
  = \fdecv[i-1]{\termidxm{i'}[k]} = \termidxm{i'}[k] =
  \toPPCX[\listconc{Y}{X}][M]{\subsapply{\sigma}{\termmatch{y}}}$ and we
  conclude.
  
  \item $s = \termapp{t}{u}$. This case is immediate from the \ih since every
  definition involved distributes over applications.
  
  \item $s = \termabs[\theta']{p}{t}$. W.l.o.g. we assume
  $\pavoids{\theta'}{\sigma}$ and also $\theta'$ fresh for $\theta$, $M$, $X$
  and $Y$. By \ih we have
  $\toPPCX[\listconc{Y}{X}][\listconc{\theta'}{M}]{\subsapply{\sigma}{p}} =
  \fdecv[i-1]{\subsapply{\subs{\termidxv{i}[j]}{\fincv[i-1]{\toPPCX[\listconc{Y}{X}][\listconc{\theta'}{M}]{\subsapply{\sigma}{\termvar{x_j}}}}}_{x_j \in \dom{\sigma}}}{\toPPCX[\listconc{\listconc{Y}{\theta}}{X}][\listconc{\theta'}{M}]{p}}}$
  and
  $\toPPCX[\listconc{\listconc{\theta'}{Y}}{X}][M]{\subsapply{\sigma}{t}} =
  \fdecv[i]{\subsapply{\subs{\termidxv{(i+1)}[j]}{\fincv[i]{\toPPCX[\listconc{\listconc{\theta'}{Y}}{X}][M]{\subsapply{\sigma}{\termvar{x_j}}}}}_{x_j \in \dom{\sigma}}}{\toPPCX[\listconc{\listconc{\listconc{\theta'}{Y}}{\theta}}{X}][M]{t}}}$.
  Moreover, by Lem.~\ref{l:translation:to-ppcx-upd}
  (\ref{l:translation:to-ppcx-upd:match}), we get
  $\toPPCX[\listconc{Y}{X}][\listconc{\theta'}{M}]{\subsapply{\sigma}{p}} =
  \fdecv[i-1]{\subsapply{\subs{\termidxv{i}[j]}{\fincv[i-1]{\fincm{\toPPCX[\listconc{Y}{X}][M]{\subsapply{\sigma}{\termvar{x_j}}}}}}_{x_j \in \dom{\sigma}}}{\toPPCX[\listconc{\listconc{Y}{\theta}}{X}][\listconc{\theta'}{M}]{p}}}$.
  Similarly, by Lem.~\ref{l:translation:to-ppcx-upd}
  (\ref{l:translation:to-ppcx-upd:var}),
  $\toPPCX[\listconc{\listconc{\theta'}{Y}}{X}][M]{\subsapply{\sigma}{t}} =
  \fdecv[i]{\subsapply{\subs{\termidxv{(i+1)}[j]}{\fincv[i]{\fincv{\toPPCX[\listconc{Y}{X}][M]{\subsapply{\sigma}{\termvar{x_j}}}}}}_{x_j \in \dom{\sigma}}}{\toPPCX[\listconc{\listconc{\listconc{\theta'}{Y}}{\theta}}{X}][M]{t}}}$.
  Furthermore, by Lem.~\ref{l:appendix:finc} (\ref{l:appendix:finc:match}),
  $\fincv[i-1]{\fincm{\toPPCX[\listconc{Y}{X}][M]{\subsapply{\sigma}{\termvar{x_j}}}}}
  =
  \fincm{\fincv[i-1]{\toPPCX[\listconc{Y}{X}][M]{\subsapply{\sigma}{\termvar{x_j}}}}}$
  and, by Lem.~\ref{l:appendix:finc} (\ref{l:appendix:finc:var}),
  $\fincv[i]{\fincv{\toPPCX[\listconc{Y}{X}][M]{\subsapply{\sigma}{\termvar{x_j}}}}}
  =
  \fincv{\fincv[i-1]{\toPPCX[\listconc{Y}{X}][M]{\subsapply{\sigma}{\termvar{x_j}}}}}$.
  Finally, we conclude by definition of $\fdecv{\_}$ and
  Def.~\ref{d:translation:to-ppcx}: \[%\kern-3em
\begin{array}{ll}
  & \toPPCX[\listconc{Y}{X}][M]{\subsapply{\sigma}{(\termabs[\theta']{p}{t})}} \\
= & \termabs[\fsize{\theta'}]{\toPPCX[\listconc{Y}{X}][\listconc{\theta'}{M}]{\subsapply{\sigma}{p}}}{\toPPCX[\listconc{\listconc{\theta'}{Y}}{X}][M]{\subsapply{\sigma}{t}}} \\
= & \termabs[\fsize{\theta'}]{\fdecv[i-1]{\subsapply{\subs{\termidxv{i}[j]}{\fincm{\fincv[i-1]{\toPPCX[\listconc{Y}{X}][M]{\subsapply{\sigma}{\termvar{x_j}}}}}}_{x_j \in \dom{\sigma}}}{\toPPCX[\listconc{\listconc{Y}{\theta}}{X}][\listconc{\theta'}{M}]{p}}}}{\fdecv[i]{\subsapply{\subs{\termidxv{(i+1)}[j]}{\fincv{\fincv[i-1]{\toPPCX[\listconc{Y}{X}][M]{\subsapply{\sigma}{\termvar{x_j}}}}}}_{x_j \in \dom{\sigma}}}{\toPPCX[\listconc{\listconc{\listconc{\theta'}{Y}}{\theta}}{X}][M]{t}}}} \\
= & \fdecv[i-1]{\termabs[\fsize{\theta'}]{\subsapply{\subs{\termidxv{i}[j]}{\fincm{\fincv[i-1]{\toPPCX[\listconc{Y}{X}][M]{\subsapply{\sigma}{\termvar{x_j}}}}}}_{x_j \in \dom{\sigma}}}{\toPPCX[\listconc{\listconc{Y}{\theta}}{X}][\listconc{\theta'}{M}]{p}}}{\subsapply{\subs{\termidxv{(i+1)}[j]}{\fincv{\fincv[i-1]{\toPPCX[\listconc{Y}{X}][M]{\subsapply{\sigma}{\termvar{x_j}}}}}}_{x_j \in \dom{\sigma}}}{\toPPCX[\listconc{\listconc{\listconc{\theta'}{Y}}{\theta}}{X}][M]{t}}}} \\
= & \fdecv[i-1]{\subsapply{\subs{\termidxv{i}[j]}{\fincv[i-1]{\toPPCX[\listconc{Y}{X}][M]{\subsapply{\sigma}{\termvar{x_j}}}}}_{x_j \in \dom{\sigma}}}{\termabs[\fsize{\theta'}]{\toPPCX[\listconc{\listconc{Y}{\theta}}{X}][\listconc{\theta'}{M}]{p}}{\toPPCX[\listconc{\listconc{\listconc{\theta'}{Y}}{\theta}}{X}][M]{t}}}} \\
= & \fdecv[i-1]{\subsapply{\subs{\termidxv{i}[j]}{\fincv[i-1]{\toPPCX[\listconc{Y}{X}][M]{\subsapply{\sigma}{\termvar{x_j}}}}}_{x_j \in \dom{\sigma}}}{\toPPCX[\listconc{\listconc{Y}{\theta}}{X}][M]{\termabs[\theta']{p}{t}}}}
\end{array} \]
\end{itemize}
\end{proof}

%%% Local Variables:
%%% mode: latex
%%% TeX-master: "main"
%%% End:

In the case of a match, its translation is given by
$\toPPCX[V][M]{\match[\theta]{p}{u}} \eqdef
\match[\fsize{\theta}]{\toPPCX[V][\listconc{\theta}{M}]{p}}{\toPPCX[V][M]{u}}$.
Note how $\theta$ is pushed into the matchable symbol list of the pattern, in
accordance with the translation of an abstraction. This is crucial for the
following result of preservation of the matching output.

First, note that matchable forms are preserved by the translation as well.

\begin{lemma}
Let $t \in \Term{\calcPPC}$. Then,
\begin{enumerate}
  \item \label{l:appendix:to-ppcx-mf:data} $t \in \TermData{\calcPPC}$ iff
  $\toPPCX[V][M]{t} \in \TermData{\calcPPCX}$.
  \item \label{l:appendix:to-ppcx-mf:match} $t \in \Matchable{\calcPPC}$ iff
  $\toPPCX[V][M]{t} \in \Matchable{\calcPPCX}$.
\end{enumerate}
\label{l:appendix:to-ppcx-mf}
\end{lemma}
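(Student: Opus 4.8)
The plan is to prove both items by structural induction on $t$, exploiting the fact that $\toPPCX[V][M]{\_}$ is \emph{shape-preserving}: it sends a variable symbol $\termvar{x}$ to a variable index $\termidxv{i}[j]$, a matchable symbol $\termmatch{x}$ to a matchable index $\termidxm{i}[j]$, and commutes with the application and abstraction constructors (only altering the annotation $\theta \mapsto \fsize{\theta}$ and the lists $V$, $M$). Since the grammars for $\TermData{\calcPPC}$/$\Matchable{\calcPPC}$ and $\TermData{\calcPPCX}$/$\Matchable{\calcPPCX}$ have exactly the same structure — a matchable at the head, applications along the spine, abstractions as the only non-data matchable forms — this correspondence of constructors transfers membership in both directions. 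I would first establish item~(\ref{l:appendix:to-ppcx-mf:data}) and then use it to settle item~(\ref{l:appendix:to-ppcx-mf:match}).

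For item~(\ref{l:appendix:to-ppcx-mf:data}) I would analyse the four possible shapes of $t$. If $t = \termmatch{x}$, then $\toPPCX[V][M]{t} = \termidxm{i}[j]$, and both sides of the biconditional hold. If $t = \termapp{s}{u}$, then $\toPPCX[V][M]{t} = \termapp{\toPPCX[V][M]{s}}{\toPPCX[V][M]{u}}$; a term of this form is a data structure exactly when its left component is, in either calculus, so the claim reduces to the induction hypothesis applied to the proper subterm $s$. The two remaining shapes, $t = \termvar{x}$ (yielding $\termidxv{i}[j]$) and $t = \termabs[\theta]{p}{s}$ (yielding an abstraction), make both sides false, since a variable index is not a valid data-structure head and an abstraction is not a data structure in either calculus; hence the biconditional again holds vacuously.

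Item~(\ref{l:appendix:to-ppcx-mf:match}) follows by the same case analysis. The cases $t = \termvar{x}$ (both sides false) and $t = \termabs[\theta]{p}{s}$ (both sides true, as abstractions are matchable forms on both sides) are immediate. For $t = \termmatch{x}$ and, more generally, $t = \termapp{s}{u}$, I would observe that an application is a matchable form precisely when it is a data structure — abstractions are the only matchable forms that are not data structures, and they are not applications — so these cases collapse to item~(\ref{l:appendix:to-ppcx-mf:data}), already proved for the same $t$.

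There is no genuine obstacle here: the argument is a routine constructor-by-constructor induction. The only point requiring care is the bookkeeping at the leaves — ensuring the \emph{iff} is respected in the cases where \emph{both} sides are false (namely $\termvar{x}$, whose image $\termidxv{i}[j]$ is deliberately a variable and not a matchable index, and the abstraction case of item~(\ref{l:appendix:to-ppcx-mf:data})), rather than only verifying the positive cases. Since the lists $V$ and $M$ merely determine which indices are produced and never affect whether a constructor is a variable, matchable, application, or abstraction, they play no role in the induction and are carried along unchanged.
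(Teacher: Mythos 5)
Your proposal is correct and matches the paper's proof, which is exactly a straightforward structural induction on $t$ using item~(\ref{l:appendix:to-ppcx-mf:data}) to establish item~(\ref{l:appendix:to-ppcx-mf:match}); the paper leaves the case analysis implicit, and your write-up simply spells out those routine details (shape preservation of the translation, the vacuous cases for variables and abstractions).
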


\begin{proof}
Both items follow by straightforward induction on $t$, using
(\ref{l:appendix:to-ppcx-mf:data}) to prove (\ref{l:appendix:to-ppcx-mf:match}).
\end{proof}

Then, the result of preservation of the matching output states:

\begin{lemma}
Let $p, u \in \Term{\calcPPC}$.
\begin{enumerate}
  \item\label{l:translation:to-ppcx-match:subs} If $\match[\theta]{p}{u} =
  \sigma$, then $\toPPCX[V][M]{\match[\theta]{p}{u}} =
  \toPPCX[V][M]{\sigma}[\theta]$.

  \item\label{l:translation:to-ppcx-match:fail} If $\match[\theta]{p}{u} =
  \matchfail$, then $\toPPCX[V][M]{\match[\theta]{p}{u}} = \matchfail$.

  \item\label{l:translation:to-ppcx-match:wait} If $\match[\theta]{p}{u} =
  \matchundet$, then $\toPPCX[V][M]{\match[\theta]{p}{u}} = \matchundet$.
\end{enumerate}
\label{l:translation:to-ppcx-match}
\end{lemma}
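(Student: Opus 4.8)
The plan is to prove the three items simultaneously by induction on the pattern $p$, with a case analysis that mirrors the five clauses of the $\calcPPC$ matching operation (which, recall, are tried in order); in each case I show that the translated matching $\match[\fsize{\theta}]{\toPPCX[V][\listconc{\theta}{M}]{p}}{\toPPCX[V][M]{u}}$ selects the corresponding $\calcPPCX$ clause and returns the translation of the $\calcPPC$ result. Two bookkeeping facts about the translation of symbols will be used repeatedly: (i) a symbol $x \in \theta$ occurring at position $j$ of $\theta$ is sent, under $\listconc{\theta}{M}$, to an index with primary component $1$ and secondary component $j$, whereas a symbol $x \notin \theta$ keeps its secondary component and has its primary component shifted up by exactly one relative to $\toPPCX[V][M]{\termmatch{x}}$; and (ii) distinct matchable symbols are sent to distinct bidimensional indices under a fixed list of lists, since each position $(i,j)$ holds a single symbol. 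I also invoke Lem.~\ref{l:appendix:to-ppcx-mf}, which transfers matchable-form status --- and by its \emph{iff} form, non-matchable-form status --- across the translation in both directions.

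For the base case $p = \termmatch{x}$ I split on whether $x \in \theta$. If $x = \theta_j$, fact~(i) gives $\toPPCX[V][\listconc{\theta}{M}]{\termmatch{x}} = \termidxm{1}[j]$, so the first $\calcPPCX$ clause fires and returns $\subs{\termidxv{1}[j]}{\toPPCX[V][M]{u}}$, which is precisely $\toPPCX[V][M]{\subs{\termvar{x}}{u}}[\theta]$ by the definition of the translation of a substitution; this settles item~\ref{l:translation:to-ppcx-match:subs} here. If $x \notin \theta$, the $\calcPPC$ clause~2 applies only when $u = \termmatch{x}$, and then fact~(i) yields $\toPPCX[V][\listconc{\theta}{M}]{\termmatch{x}} = \termidxm{i+1}[j]$ together with $\toPPCX[V][M]{u} = \termidxm{i}[j]$, so the second $\calcPPCX$ clause fires and returns $\subsid = \toPPCX[V][M]{\subsid}[\theta]$. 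For the application case $p = \termapp{p_1}{p_2}$, $u = \termapp{u_1}{u_2}$ with both matchable, Lem.~\ref{l:appendix:to-ppcx-mf}(\ref{l:appendix:to-ppcx-mf:match}) makes the two translations matchable forms in $\calcPPCX$, so the union clause fires; I then distribute the translation over the application, apply the \ih to each component, and conclude using that disjoint union of matches commutes with the translation (which in turn rests on the injectivity of fact~(ii) for the domain-disjointness side condition).

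The failure and undetermined clauses are where the case analysis must be shown to be faithful, and this is the main obstacle. When $p, u \in \Matchable{\calcPPC}$ but none of clauses~1--3 applies --- \eg $p = \termmatch{x}$ with $x \notin \theta$ and $u = \termmatch{y}$ for $y \neq x$, or an application pattern facing a non-application --- I must rule out that the translated pair accidentally triggers one of the first three $\calcPPCX$ clauses. Clause~1 is excluded because, by fact~(i), the translated pattern has primary index $> 1$; clause~2 is excluded precisely by injectivity~(ii), since matching its shape would force $\toPPCX[V][M]{\termmatch{y}} = \toPPCX[V][M]{\termmatch{x}}$ and hence $x = y$; and clause~3 is excluded by the shape preservation of the translation. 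Lem.~\ref{l:appendix:to-ppcx-mf} then guarantees both translations are matchable forms, so the failure clause of $\calcPPCX$ is selected, giving item~\ref{l:translation:to-ppcx-match:fail}. Dually, in the undetermined case at least one of $p, u$ is not a matchable form, and the \emph{iff} of Lem.~\ref{l:appendix:to-ppcx-mf} propagates this to the translations, so the $\calcPPCX$ matching also falls through to $\matchundet$, giving item~\ref{l:translation:to-ppcx-match:wait}.

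It remains to account for the domain post-condition, which sits on top of the clauses in both calculi. I treat it separately, on the raw match obtained above: by fact~(i) the symbol $\theta_j$ is tracked by secondary index $j$, so the $\calcPPC$-domain of the raw match equals $\theta$ exactly when the domain of its $\calcPPCX$-translation equals $\set{\termidx{1}[1], \ldots, \termidx{1}[n]}$ with $n = \fsize{\theta}$. Hence the two post-conditions pass or fail together: if both pass we remain in item~\ref{l:translation:to-ppcx-match:subs}, and if they fail the overall result collapses to $\matchfail$ on both sides, consistently with item~\ref{l:translation:to-ppcx-match:fail}.
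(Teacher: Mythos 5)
Your proof is correct and follows essentially the same route as the paper's: induction on $p$ mirroring the ordered matching clauses, using Lem.~\ref{l:appendix:to-ppcx-mf} to transfer (non-)matchable-form status, commuting the translation with substitution formation and disjoint union via the shared enumeration $\theta$, and checking the domain post-condition separately at the end. Your explicit exclusion of accidental firing of the first two $\calcPPCX$ clauses (via the injectivity of the symbol-to-index assignment) is a point the paper leaves implicit, but it refines rather than changes the argument.
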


\begin{proof}
By induction on $p$ considering the result of $\match[\theta]{p}{u}$ before the
final consistency check.
\begin{itemize}
  \item $p = \termvar{x}$. Then, $\toPPCX[V][\listconc{\theta}{M}]{\termvar{x}}
  = \termidxv{i}[j]$ for some $i,j \in \Natural_{\geq 1}$. Moreover,
  $\match[\theta]{p}{u} = \matchundet$. We conclude
  (\ref{l:translation:to-ppcx-match:wait}) since
  $\toPPCX[V][M]{\match[\theta]{\termvar{x}}{u}} = 
  \match[\fsize{\theta}]{\toPPCX[V][\listconc{\theta}{M}]{\termvar{x}}}{\toPPCX[V][M]{u}}
  = \matchundet$ too.
  
  \item $p = \termmatch{x}$. There are two possible cases:
  \begin{enumerate}
    \item $x \in \theta$. Then, 
    $\toPPCX[V][\listconc{\theta}{M}]{\termmatch{x}} = \termidxm{1}[j]$ for some
    $j$ such that $j$ is the index of $x$ in $\theta$. Moreover,
    $\match[\theta]{p}{u} = \subs{\termvar{x}}{u}$. We conclude
    (\ref{l:translation:to-ppcx-match:subs}) since
    $\toPPCX[V][M]{\subs{\termvar{x}}{u}}[\theta] =
    \subs{\termidxv{1}[j]}{\toPPCX[V][M]{u}} =
    \match[\fsize{\theta}]{\toPPCX[V][\listconc{\theta}{M}]{\termmatch{x}}}{\toPPCX[V][M]{u}}
    = \toPPCX[V][M]{\match[\theta]{p}{u}}$.
    
    \item $x \notin \theta$. Then,
    $\toPPCX[V][\listconc{\theta}{M}]{\termmatch{x}} = \termidxm{(i+1)}[j]$ for
    some $i,j \in \Natural_{\geq 1}$. There are two further cases to analyse:
    \begin{enumerate}
      \item $u = \termmatch{x}$. Then, $\toPPCX[V][M]{\termmatch{x}} =
      \termidxm{i}[j]$. Moreover, $\match[\theta]{p}{u} = \subsid$. We conclude
      (\ref{l:translation:to-ppcx-match:subs}) since
      $\toPPCX[V][M]{\subsid}[\theta] = \subsid =
      \match[\fsize{\theta}]{\toPPCX[V][\listconc{\theta}{M}]{\termmatch{x}}}{\toPPCX[V][M]{u}}
      = \toPPCX[V][M]{\match[\theta]{p}{u}}$.
      
      \item $u \neq \termmatch{x}$. By Lem.~\ref{l:appendix:to-ppcx-mf}
      (\ref{l:appendix:to-ppcx-mf:match}), $u \in \Matchable{\calcPPC}$ iff
      $\toPPCX[V][M]{u} \in \Matchable{\calcPPCX}$. Then, we have
      $\match[\theta]{\termmatch{x}}{u} = \matchfail$ implies
      $\toPPCX[V][M]{\match[\theta]{\termmatch{x}}{u}} =
      \match[\fsize{\theta}]{\toPPCX[V][\listconc{\theta}{M}]{\termmatch{x}}}{\toPPCX[V][M]{u}}
      = \matchfail$ too, and we conclude
      (\ref{l:translation:to-ppcx-match:fail}). Moreover,
      $\match[\theta]{\termmatch{x}}{u} = \matchundet$ implies
      $\toPPCX[V][M]{\match[\theta]{\termmatch{x}}{u}} = \matchundet$ as well,
      allowing to conclude (\ref{l:translation:to-ppcx-match:wait}).
    \end{enumerate}
  \end{enumerate}
  
  \item $p = \termapp{q}{r}$. Then, $\toPPCX[V][\listconc{\theta}{M}]{p} =
  \termapp{\toPPCX[V][\listconc{\theta}{M}]{q}}{\toPPCX[V][\listconc{\theta}{M}]{r}}$.
  There are two possible cases:
  \begin{enumerate}
    \item $p \in \Matchable{\calcPPC}$. By Lem.~\ref{l:appendix:to-ppcx-mf}
    (\ref{l:appendix:to-ppcx-mf:match}), $\toPPCX[V][\listconc{\theta}{M}]{p} \in
    \Matchable{\calcPPCX}$. There are two further cases to analyse:
    \begin{enumerate}
      \item $u = \termapp{s}{t}$. By Lem.~\ref{l:appendix:to-ppcx-mf}
      (\ref{l:appendix:to-ppcx-mf:match}), $u \in \Matchable{\calcPPC}$ iff
      $\toPPCX[V][M]{u} \in \Matchable{\calcPPCX}$. Thus, if
      $\match[\theta]{p}{u} = \matchundet$ (\ie $u \notin
      \Matchable{\calcPPC}$), then $\toPPCX[V][M]{\match[\theta]{p}{u}} =
      \matchundet$ as well, allowing to conclude
      (\ref{l:translation:to-ppcx-match:wait}). Otherwise, $\match[\theta]{p}{u}
      = \matchunion{\match[\theta]{q}{s}}{\match[\theta]{r}{t}}$. If either is
      $\matchfail$, the result is immediate from the \ih
      (\ref{l:translation:to-ppcx-match:fail}). Similarly for the case where
      either of the two is $\matchundet$, using \ih
      (\ref{l:translation:to-ppcx-match:wait}). Assume $\match[\theta]{q}{s} =
      \sigma_1$ and $\match[\theta]{r}{t} = \sigma_2$. By \ih we have
      $\toPPCX[V][M]{\match[\theta]{q}{s}} = \toPPCX[V][M]{\sigma_1}[\theta]$
      and $\toPPCX[V][M]{\match[\theta]{r}{t}} =
      \toPPCX[V][M]{\sigma_2}[\theta]$. Note that $\dom{\sigma_1} \cap
      \dom{\sigma_2} = \emptyset$ iff $\dom{\toPPCX[V][M]{\sigma_1}[\theta]}
      \cap \dom{\toPPCX[V][M]{\sigma_2}[\theta]} = \emptyset$ since both
      translations are given the same enumeration $\theta$. Then it is safe to
      conclude $\toPPCX[V][M]{\match[\theta]{p}{u}} =
      \matchunion{\toPPCX[V][M]{\match[\theta]{q}{s}}}{\toPPCX[V][M]{\match[\theta]{r}{t}}}$.
      
      \item $u \neq \termapp{s}{t}$. By Lem.~\ref{l:appendix:to-ppcx-mf}
      (\ref{l:appendix:to-ppcx-mf:match}), $u \in \Matchable{\calcPPC}$ iff
      $\toPPCX[V][M]{u} \in \Matchable{\calcPPCX}$. Then, we have
      $\match[\theta]{p}{u} = \matchfail$ implies
      $\toPPCX[V][M]{\match[\theta]{p}{u}} =
      \match[\fsize{\theta}]{\toPPCX[V][\listconc{\theta}{M}]{p}}{\toPPCX[V][M]{u}}
      = \matchfail$ too, and we conclude
      (\ref{l:translation:to-ppcx-match:fail}). Moreover,
      $\match[\theta]{p}{u} = \matchundet$ implies
      $\toPPCX[V][M]{\match[\theta]{p}{u}} = \matchundet$ as well,
      allowing to conclude (\ref{l:translation:to-ppcx-match:wait}).
    \end{enumerate}
    
    \item $p \notin \Matchable{\calcPPC}$. By Lem.~\ref{l:appendix:to-ppcx-mf}
    (\ref{l:appendix:to-ppcx-mf:match}), $\toPPCX[V][\listconc{\theta}{M}]{p}
    \notin \Matchable{\calcPPCX}$. Then, $\match[\theta]{p}{u} = \matchundet$
    and $\toPPCX[V][M]{\match[\theta]{p}{u}} =
    \match[\fsize{\theta}]{\toPPCX[V][\listconc{\theta}{M}]{p}}{\toPPCX[V][M]{u}}
    = \matchundet$ too. Thus, we conclude
    (\ref{l:translation:to-ppcx-match:wait}).
  \end{enumerate}
  
  \item $p = \termabs[\theta']{q}{s}$. Then, $\toPPCX[V][\listconc{\theta}{M}]{p}
  =
  \termabs[\fsize{\theta'}]{\toPPCX[V][\listconc{\theta'}{M}]{q}}{\toPPCX[\listconc{\theta'}{V}][M]{s}}$.
  Then, $p \in \Matchable{\calcPPC}$ and, by Lem.~\ref{l:appendix:to-ppcx-mf}
  (\ref{l:appendix:to-ppcx-mf:match}), $\toPPCX[V][\listconc{\theta}{M}]{p}
  \in \Matchable{\calcPPCX}$ too. Also by Lem.~\ref{l:appendix:to-ppcx-mf}
  (\ref{l:appendix:to-ppcx-mf:match}), $u \in \Matchable{\calcPPC}$ iff
  $\toPPCX[V][M]{u} \in \Matchable{\calcPPCX}$. Then, we have
  $\match[\theta]{p}{u} = \matchfail$ implies
  $\toPPCX[V][M]{\match[\theta]{p}{u}} =
  \match[\fsize{\theta}]{\toPPCX[V][\listconc{\theta}{M}]{p}}{\toPPCX[V][M]{u}}
  = \matchfail$ as well, and we conclude
  (\ref{l:translation:to-ppcx-match:fail}). Moreover,
  $\match[\theta]{p}{u} = \matchundet$ implies
  $\toPPCX[V][M]{\match[\theta]{p}{u}} = \matchundet$ as well, allowing to
  conclude (\ref{l:translation:to-ppcx-match:wait}).
\end{itemize}
Finally, once the matching operation returns, we need to verify for
(\ref{l:translation:to-ppcx-match:subs}) that $\dom{\sigma} = \theta$ implies
$\dom{\toPPCX[V][M]{\sigma}[\theta]} =
\set{\termidx{1}[1], \ldots, \termidx{1}[\fsize{\theta}]}$. This is precisely
the case since, by definition, $\dom{\toPPCX[V][M]{\sigma}[\theta]} =
\set{\termidx{1}[j]}_{x_j \in \dom{\sigma}}$.
\end{proof}

%%% Local Variables:
%%% mode: latex
%%% TeX-master: "main"
%%% End:

These previous results will allow to prove the simulation of $\calcPPC$ into
$\calcPPCX$ via the translation $\toPPCX{\_}$. We postpone this result to
Sec.~\ref{s:bisimulation} (\cf Thm.~\ref{t:bisimulation:ppc-ppcx}).

\medskip
Now we focus on the converse side of the embedding, \ie the translation of
$\calcPPCX$ terms into $\calcPPC$ terms. As before, this mapping requires two
lists of lists of symbols from which names of the free indices of the term
will be selected: one for variable indices and the other for matchable indices.

\begin{definition}
Given a term $t \in \Term{\calcPPCX}$ and lists of lists of distinct symbols
$V$ and $M$ such that $V_{ij}$ is defined for every $\termidx{i}[j] \in \fv{t}$
and $M_{ij}$ is defined for every $\termidx{i}[j] \in \fm{t}$, the
\emph{translation of $t$ relative to $V$ and $M$}, written $\toPPC[V][M]{t}$,
is inductively defined as follows: \[
\begin{array}{rcl@{\quad}l}
\toPPC[V][M]{\termidxv{i}[j]}    & \eqdef  & \termvar{V_{ij}} \\
\toPPC[V][M]{\termidxm{i}[j]}    & \eqdef  & \termmatch{M_{ij}} \\
\toPPC[V][M]{\termapp{t}{u}}     & \eqdef  & \termapp{\toPPC[V][M]{t}}{\toPPC[V][M]{u}} \\
\toPPC[V][M]{\termabs[n]{p}{t}}  & \eqdef  & \termabs[\theta]{\toPPC[V][\listconc{\theta}{M}]{p}}{\toPPC[\listconc{\theta}{V}][M]{t}}  & \text{$\theta = \lista{x_1, \ldots, x_n}$ fresh symbols}
\end{array}
\] Let $x_1, x_2, \ldots$ be the same enumeration of $\TermVariable$ as in
Def.~\ref{d:translation:to-ppcx}. Then, the \emph{translation} of $t$ to
$\calcPPC$, written simply $\toPPC{t}$, is defined as $\toPPC[X][X]{t}$ where
$X = \lista{\lista{x_1}, \ldots, \lista{x_n}}$ such that $\fv{t} \cup \fm{t}
\subseteq \set{\termidxv{1}[1], \ldots, \termidxv{n}[1]}$. Note that
well-formedness of terms guarantees that $X$ satisfies the conditions above.
\label{d:translation:to-ppc}
\end{definition}

To illustrate the translation, consider the term $t_1 =
\termapp{(\termabs[1]{\termapp{\termidxm{2}[1]}{\termidxm{1}[1]}}{\termidxv{1}[1]})}{(\termapp{\termidxm{1}[1]}{\termidxv{2}[1]})}$
where $\fv{t_1} = \set{\termidx{2}[1]}$ and $\fm{t_1} =
\set{\termidx{1}[1]}$. Then,
$\toPPC[\lista{\lista{y},\lista{z}}][\lista{\lista{y},\lista{z}}]{t_1} =
\termapp{(\termabs[\lista{x}]{\toPPC[\lista{\lista{y},\lista{z}}][\lista{\lista{x},\lista{y},\lista{z}}]{\termapp{\termidxm{2}[1]}{\termidxm{1}[1]}}}{\toPPC[\lista{\lista{x},\lista{y},\lista{z}}][\lista{\lista{y},\lista{z}}]{\termidxv{1}[1]}})}{\toPPC[\lista{\lista{y},\lista{z}}][\lista{\lista{y},\lista{z}}]{\termapp{\termidxm{1}[1]}{\termidxv{2}[1]}}}
=
\termapp{(\termabs[\lista{x}]{\termapp{\termmatch{y}}{\termmatch{x}}}{\termvar{x}})}{(\termapp{\termmatch{y}}{\termvar{z}})}$.
Note that $t_1 = \toPPCX{s_1}$ from the example after
Def.~\ref{d:translation:to-ppcx} and, with a proper initialisation of the lists
$V$ and $M$, we get $\toPPC{t_1} = s_1$.

\medskip
Once again, we start with some technical lemmas for substitutions and the
matching operations with respect to the embedding $\toPPC{\_}$. In this case,
the increment functions for variable and matchable indices behave as follows:

\begin{lemma}
Let $t \in \Term{\calcPPCX}$, $k \geq 0$, $i \geq 1$ and $n \geq k+i$ such
that $\fv{t} \cup \fm{t} \subseteq
\set{\termidx{i'}[j] \mid i' \leq n-(i-1), j \leq \fsize{X_{i'}}}$. Then,
\begin{enumerate}
  \item\label{l:translation:to-ppc-upd:var}
  $\toPPC[\listconc{\listconc{X_1}{\ldots}}{X_n}][M]{\fincv[k]{t}[i-1]} \eqalpha
  \toPPC[\listconc{\listconc{\listconc{X_1}{\ldots}}{X_k}}{\listconc{\listconc{X_{k+i}}{\ldots}}{X_n}}][M]{t}$.

  \item\label{l:translation:to-ppc-upd:match}
  $\toPPC[V][\listconc{\listconc{X_1}{\ldots}}{X_n}]{\fincm[k]{t}[i-1]} \eqalpha
  \toPPC[V][\listconc{\listconc{\listconc{X_1}{\ldots}}{X_k}}{\listconc{\listconc{X_{k+i}}{\ldots}}{X_n}}]{t}$.
\end{enumerate}
\label{l:translation:to-ppc-upd}
\end{lemma}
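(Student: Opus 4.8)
The plan is to prove both items simultaneously by structural induction on $t$, mirroring the proof of Lemma~\ref{l:translation:to-ppcx-upd} but in the opposite direction; the appearance of $\eqalpha$ instead of plain equality is the only genuinely new feature, and it is forced by the fact that the clause of Definition~\ref{d:translation:to-ppc} for an abstraction invents fresh symbols. The governing intuition is that applying $\fincv[k]{\_}$ a total of $i-1$ times raises every variable index with primary component strictly above $k$ by $i-1$, and this is exactly undone on the right-hand side by deleting the $i-1$ lists $X_{k+1},\ldots,X_{k+i-1}$ from the variable list: an index that reads position $i'>k$ of the truncated list reads position $i'+i-1$ of the full list, and both positions hold the same $X$. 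Matchable indices are left untouched by $\fincv$ and the matchable list $M$ is the same on both sides, so they translate identically; item~2 swaps the roles of variables and matchables throughout.

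First I would dispatch the base cases. For $t=\termidxv{i'}[j]$ the side condition supplies $i'\leq n-(i-1)$, and I split on $i'\leq k$ versus $i'>k$: in the former case $\fincv[k]{\_}[i-1]$ fixes the index and the two lists agree below position $k$, while in the latter $\fincv[k]{\_}[i-1]$ yields $\termidxv{(i'+i-1)}[j]$, whose lookup in the full list coincides with the lookup of $\termidxv{i'}[j]$ in the truncated list. For $t=\termidxm{i'}[j]$ the claim is immediate since $\fincv$ is the identity on matchable indices. The application case $t=\termapp{s}{u}$ follows at once from the induction hypothesis, as $\fincv$, $\toPPC[V][M]{\_}$ and $\eqalpha$ all commute with application.

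The abstraction case $t=\termabs[m]{p}{s}$ is the technical heart. Unfolding the definition of $\fincv$ gives $\fincv[k]{\termabs[m]{p}{s}}[i-1]=\termabs[m]{\fincv[k]{p}[i-1]}{\fincv[k+1]{s}[i-1]}$, and Definition~\ref{d:translation:to-ppc} turns both sides into abstractions over a common fresh list $\theta=\lista{x_1,\ldots,x_m}$. To the pattern I apply the induction hypothesis with the unchanged depth $k$ (only the matchable list is enlarged to $\listconc{\theta}{M}$, which is irrelevant for item~1). To the body, whose variable list has become $\listconc{\theta}{\listconc{\listconc{X_1}{\ldots}}{X_n}}$ and whose increment is $\fincv[k+1]{s}[i-1]$, I apply the induction hypothesis at depth $k+1$ after re-indexing that list as $X'_1=\theta$, $X'_{l+1}=X_l$: the hypothesis then skips positions $k+2,\ldots,k+i$ of the primed list, \ie the lists $X_{k+1},\ldots,X_{k+i-1}$, which is precisely what the right-hand side of the statement prescribes once $\theta$ is prepended. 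Because $\eqalpha$ is a congruence and the same $\theta$ is chosen on both sides, the $\alpha$-equivalences for $p$ and $s$ lift to the enclosing abstraction.

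The main obstacle I anticipate is the index bookkeeping in this last step: one has to confirm that the side condition on $\fv{t}\cup\fm{t}$ restricts correctly to $p$ and to $s$ — using $\fv{\termabs[m]{p}{s}}=\fv{p}\cup(\fv{s}-1)$, so that a free variable of $s$ has primary index at most $(n+1)-(i-1)$ relative to the length-$(n+1)$ primed list, and that the bound indices stay within range of $\theta$ by well-formedness — and that the fresh $\theta$ may indeed be taken identical on both sides so the congruence step closes. Once item~1 is in place, item~2 is obtained by the symmetric argument: the abstraction clause of $\fincm$ increments the pattern at depth $k+1$ and the body at depth $k$, the list being enlarged by $\theta$ is now the matchable one, and variables and matchables (together with $V$ and $M$) exchange roles everywhere.
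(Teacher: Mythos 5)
Your proposal is correct and follows essentially the same route as the paper's proof: induction on $t$, the same two-way case split on $i'\leq k$ versus $i'>k$ for variable indices, and the same treatment of the abstraction case by choosing a common fresh $\theta$, applying the induction hypothesis to the pattern at depth $k$ and to the body at depth $k+1$ with $\theta$ prepended to the variable list. The re-indexing bookkeeping and side-condition checks you spell out are exactly what the paper handles implicitly with its remark that ``$\theta$ is pushed accordingly'' and that the body case concludes with $\fincv[k+1]{\_}$ instead of $\fincv[k]{\_}$.
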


\begin{proof}
\begin{enumerate}
  \item By induction on $t$.
  \begin{itemize}
    \item $t = \termidxv{i'}[j]$. By hypothesis, $i' \leq n-(i-1)$. Then, there
    are two possible cases:
    \begin{enumerate}
      \item $i' \leq k$. Then, $\fincv[k]{\termidxv{i'}[j]}[i-1] =
      \termidxv{i'}[j]$. Thus,
      $\toPPC[\listconc{\listconc{\listconc{X_1}{\ldots}}{X_k}}{\listconc{\listconc{X_{k+i}}{\ldots}}{X_n}}][M]{\termidxv{i'}[j]}
      = X_{i'j} =
      \toPPC[\listconc{\listconc{X_1}{\ldots}}{X_n}][M]{\termidxv{i'}[j]} =
      \toPPC[\listconc{\listconc{X_1}{\ldots}}{X_n}][M]{\fincv[k]{\termidxv{i'}[j]}[i-1]}$.

      \item $i' > k$. Then, $\fincv[k]{\termidxv{i'}[j]}[i-1] =
      \termidxv{(i'+(i-1))}[j]$. Thus,
      $\toPPC[\listconc{\listconc{\listconc{X_1}{\ldots}}{X_k}}{\listconc{\listconc{X_{k+i}}{\ldots}}{X_n}}][M]{\termidxv{i'}[j]}
      = X_{(i'+(i-1))j} =
      \toPPC[\listconc{\listconc{X_1}{\ldots}}{X_n}][M]{\fincv[k]{\termidxv{i'}[j]}[i-1]}$.
    \end{enumerate}

    \item $t = \termidxm{i'}[j]$. This is immediate since
    $\toPPC[\listconc{\listconc{X_1}{\ldots}}{X_n}][M]{\termidxm{i'}[j]} =
    M_{i'j} =
    \toPPC[\listconc{\listconc{\listconc{X_1}{\ldots}}{X_k}}{\listconc{\listconc{X_{k+i}}{\ldots}}{X_n}}][M]{\termidxm{i'}[j]}$
    and $\fincv[k]{\_}[i-1]$ leaves matchable indices untouched.

    \item $t = \termapp{s}{u}$. By \ih
    $\toPPC[\listconc{\listconc{X_1}{\ldots}}{X_n}][M]{\fincv[k]{s}[i-1]}
    \eqalpha
    \toPPC[\listconc{\listconc{\listconc{X_1}{\ldots}}{X_k}}{\listconc{\listconc{X_{k+i}}{\ldots}}{X_n}}][M]{s}$
    and $\toPPC[\listconc{\listconc{X_1}{\ldots}}{X_n}][M]{\fincv[k]{u}[i-1]}
    \eqalpha
    \toPPC[\listconc{\listconc{\listconc{X_1}{\ldots}}{X_k}}{\listconc{\listconc{X_{k+i}}{\ldots}}{X_n}}][M]{u}$.
    Thus, we conclude
    $\toPPC[\listconc{\listconc{X_1}{\ldots}}{X_n}][M]{\fincv[k]{t}[i-1]} =
    \termapp{\toPPC[\listconc{\listconc{X_1}{\ldots}}{X_n}][M]{\fincv[k]{s}[i-1]}}{\toPPC[\listconc{\listconc{X_1}{\ldots}}{X_n}][M]{\fincv[k]{u}[i-1]}}
    =
    \termapp{\toPPC[\listconc{\listconc{\listconc{X_1}{\ldots}}{X_k}}{\listconc{\listconc{X_{k+i}}{\ldots}}{X_n}}][M]{s}}{\toPPC[\listconc{\listconc{\listconc{X_1}{\ldots}}{X_k}}{\listconc{\listconc{X_{k+i}}{\ldots}}{X_n}}][M]{u}}
    =
    \toPPC[\listconc{\listconc{\listconc{X_1}{\ldots}}{X_k}}{\listconc{\listconc{X_{k+i}}{\ldots}}{X_n}}][M]{t}$.

    \item $t = \termabs[n]{p}{s}$. Let $\theta$ be a list of $n$ fresh symbols.
    By \ih
    $\toPPC[\listconc{\listconc{X_1}{\ldots}}{X_n}][\listconc{\theta}{M}]{\fincv[k]{p}[i-1]}
    \eqalpha
    \toPPC[\listconc{\listconc{\listconc{X_1}{\ldots}}{X_k}}{\listconc{\listconc{X_{k+i}}{\ldots}}{X_n}}][\listconc{\theta}{M}]{p}$
    and $\toPPC[\listconc{\theta}{\listconc{\listconc{X_1}{\ldots}}{X_n}}][M]{\fincv[k+1]{s}[i-1]}
    \eqalpha
    \toPPC[\listconc{\theta}{\listconc{\listconc{\listconc{X_1}{\ldots}}{X_k}}{\listconc{\listconc{X_{k+i}}{\ldots}}{X_n}}}][M]{s}$.
    Note that $\theta$ is pushed accordingly in the list of variable or
    matchable symbols for $s$ and $p$ respectively, following the definition of
    $\toPPC{\_}$ for abstractions. In the case for $s$, this implies concluding
    with $\fincv[k+1]{\_}$ instead of $\fincv[k]{\_}$. Finally, we conclude by
    Def.~\ref{d:translation:to-ppc},
    $\toPPC[\listconc{\listconc{X_1}{\ldots}}{X_n}][M]{\fincv[k]{t}[i-1]}\ = \ 
    \termabs[\theta]{\toPPC[\listconc{\listconc{X_1}{\ldots}}{X_n}][\listconc{\theta}{M}]{\fincv[k]{p}[i-1]}}{\toPPC[\listconc{\listconc{\listconc{\theta}{X_1}}{\ldots}}{X_n}][M]{\fincv[k]{s}[i]}}
    \ \eqalpha \ 
    \termabs[\theta]{\toPPC[\listconc{\listconc{\listconc{X_1}{\ldots}}{X_k}}{\listconc{\listconc{X_{k+i}}{\ldots}}{X_n}}][\listconc{\theta}{M}]{p}}{\toPPC[\listconc{\theta}{\listconc{\listconc{\listconc{X_1}{\ldots}}{X_k}}{\listconc{\listconc{X_{k+i}}{\ldots}}{X_n}}}][M]{s}}
    \ = \ 
    \toPPC[\listconc{\listconc{\listconc{X_1}{\ldots}}{X_k}}{\listconc{\listconc{X_{k+i}}{\ldots}}{X_n}}][M]{t}$.
  \end{itemize}

  \item By induction on $t$. This item is similar to the previous one.
\end{enumerate}
\end{proof}

%%% Local Variables:
%%% mode: latex
%%% TeX-master: "main"
%%% End:

As for the converse, the translation is only defined for substitution at level
$1$ and requires to be provided a list of symbols $\theta$ such that
$\fsize{\theta} \geq \max\set{j \mid \termidx{1}[j] \in \dom{\sigma}}$. Then,
$\toPPC[V][M]{\sigma}[\theta] \eqdef
\subs{\theta_j}{\toPPC[V][M]{\subsapply{\sigma}{\termidxv{1}[j]}}}_{\termidxv{1}[j] \in \dom{\sigma}}$.
The application of a substitution at an arbitrary level $i$ is shown to
translate properly.

\begin{lemma}
Let $s \in \Term{\calcPPCX}$, $\sigma$ be a substitution at level $i$,
$\theta$ be a list of fresh symbols such that $\fsize{\theta} \geq
\max\set{j \mid \termidx{i}[j] \in \dom{\sigma}}$ and $Y$ be a list of $i-1$
lists of symbols. Then,
$\toPPC[\listconc{Y}{X}][M]{\fdecv[i-1]{\subsapply{\subs{\termidx{i}[j]}{\fincv[i-1]{\subsapply{\sigma}{\termidx{i}[j]}}}_{\termidx{i}[j] \in \dom{\sigma}}}{s}}}
\eqalpha
\subsapply{\subs{\theta_j}{\toPPC[\listconc{Y}{X}][M]{\subsapply{\sigma}{\termidxv{i}[j]}}}_{\termidx{i}[j] \in \dom{\sigma}}}{\toPPC[\listconc{\listconc{Y}{\theta}}{X}][M]{s}}$.
\label{l:translation:to-ppc-subs}
\end{lemma}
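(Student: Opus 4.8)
The plan is to proceed by structural induction on $s$, mirroring the proof of the companion result Lem.~\ref{l:translation:to-ppcx-subs} for the opposite direction, except that now all the increment/decrement bookkeeping lives on the $\calcPPCX$ (left-hand) side. Throughout I write $u_j = \fincv[i-1]{\subsapply{\sigma}{\termidx{i}[j]}}$ for the values of the lifted substitution and $v_j = \toPPC[\listconc{Y}{X}][M]{\subsapply{\sigma}{\termidxv{i}[j]}}$ for the values of the $\calcPPC$ substitution on the right.

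First the base cases. For $s = \termidxv{i'}[k]$ I split on $i'$. If $i' = i$ and $\termidx{i}[k] \in \dom{\sigma}$, the lifted substitution yields $\fincv[i-1]{\subsapply{\sigma}{\termidx{i}[k]}}$ and the outer $\fdecv[i-1]{\_}$ cancels it by Lem.~\ref{l:appendix:finc} (\ref{l:appendix:finc:fdec-l}), so the left side is $v_k$; on the right, the translation sends $\termidxv{i}[k]$ to $\theta_k$ (as $Y$ has length $i-1$), which the $\calcPPC$ substitution maps precisely to $v_k$. If $i' \neq i$ the $\calcPPCX$ substitution is inert, and I observe that $\fdecv[i-1]{\_}$ fixes $\termidxv{i'}[k]$ when $i' < i$ and sends it to $\termidxv{(i'-1)}[k]$ when $i' > i$; in each case the index selected by the translation with $\theta$ inserted at position $i$ (which the fresh substitution leaves untouched) names the same symbol of $Y$ or of $X$. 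The case $s = \termidxm{i'}[k]$ is immediate, since neither the $\calcPPCX$ operations nor the $\calcPPC$ substitution touch matchables. The application case is routine, as every operation involved distributes over $\termapp{\_}{\_}$.

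The crux is the abstraction $s = \termabs[n]{p}{t}$ with fresh $\theta'$ of length $n$. Pushing the lifted substitution under the binder replaces the pattern values by $\fincm{u_j}$ and the body values by $\fincv{u_j}$ at level $i+1$, while the outer $\fdecv[i-1]{\_}$ distributes as $\fdecv[i-1]{\_}$ on the pattern and $\fdecv[i]{\_}$ on the body. To reuse the \ih on $p$ I would invoke it with the level-$i$ substitution $\sigma^{(m)}$ given by $\subsapply{\sigma^{(m)}}{\termidx{i}[j]} \eqdef \fincm{\subsapply{\sigma}{\termidx{i}[j]}}$ and $M$-list $\listconc{\theta'}{M}$: by Lem.~\ref{l:appendix:finc} (\ref{l:appendix:finc:match}) its lifted values are $\fincv[i-1]{\fincm{\subsapply{\sigma}{\termidx{i}[j]}}} = \fincm{u_j}$, so the \ih applies verbatim to the substituted pattern, and Lem.~\ref{l:translation:to-ppc-upd} (\ref{l:translation:to-ppc-upd:match}) rewrites $\toPPC[\listconc{Y}{X}][\listconc{\theta'}{M}]{\fincm{\subsapply{\sigma}{\termidxv{i}[j]}}}$ back to $v_j$. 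Symmetrically, for $t$ I would use the level-$(i+1)$ substitution $\sigma_{i+1}$ with $\subsapply{\sigma_{i+1}}{\termidx{(i+1)}[j]} \eqdef \fincv{\subsapply{\sigma}{\termidx{i}[j]}}$ and the prefix list $\listconc{\theta'}{Y}$ (now of length $i$) in place of $Y$; here Lem.~\ref{l:appendix:finc} (\ref{l:appendix:finc:var}) gives $\fincv[i]{\fincv{\subsapply{\sigma}{\termidx{i}[j]}}} = \fincv{u_j}$, so the \ih again matches the body substitution, while Lem.~\ref{l:translation:to-ppc-upd} (\ref{l:translation:to-ppc-upd:var}) collapses the spurious $\fincv{\_}$ on the translated values back to $v_j$. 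I would then recombine the two translated components under $\termabs[\theta']{\_}{\_}$, pull the $\calcPPC$ substitution $\subs{\theta_j}{v_j}_j$ out through the fresh binder $\theta'$, and conclude by Def.~\ref{d:translation:to-ppc} that the result equals the right-hand side, modulo $\eqalpha$ from the choice of $\theta'$.

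I expect the abstraction case to be the only genuine difficulty, and within it the delicate point is the double layer of increments on the substituted values: an $\fincm$ (resp.\ $\fincv$) introduced by traversing the binder, sitting on top of the $\fincv[i-1]$ already carried by the lifted substitution. Choosing the auxiliary substitutions $\sigma^{(m)}$ and $\sigma_{i+1}$ so that the commutation identities of Lem.~\ref{l:appendix:finc} collapse these two layers into exactly the shape required by the \ih is what makes the bookkeeping succeed; everything else is either definitional or an instance of the update lemma Lem.~\ref{l:translation:to-ppc-upd}, with the freshness of $\theta'$ guaranteeing that the $\calcPPC$ substitution commutes with the outer abstraction.
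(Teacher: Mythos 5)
Your proposal is correct and follows essentially the same route as the paper's proof: induction on $s$, with the abstraction case resolved exactly as in the paper by invoking the induction hypothesis on auxiliary substitutions whose values are $\fincm{\subsapply{\sigma}{\termidxv{i}[j]}}$ (level $i$, for the pattern) and $\fincv{\subsapply{\sigma}{\termidxv{i}[j]}}$ (level $i+1$, with prefix $\listconc{\theta'}{Y}$, for the body), then collapsing the double increments via Lem.~\ref{l:appendix:finc} (\ref{l:appendix:finc:var}), (\ref{l:appendix:finc:match}) and the translated values via Lem.~\ref{l:translation:to-ppc-upd}, and recombining under the fresh binder $\theta'$ up to $\eqalpha$. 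The only (cosmetic) deviation is in the variable base case with $i' \neq i$, where you compute the effect of $\fdecv[i-1]{\_}$ on the index directly instead of routing through Lem.~\ref{l:appendix:finc} (\ref{l:appendix:finc:fdec-r}) and Lem.~\ref{l:translation:to-ppc-upd} as the paper does; both arguments are sound.
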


\begin{proof}
By induction on term $s$.
\begin{itemize}
  \item $s = \termidxv{i'}[k]$. There are three possible cases.
  \begin{enumerate}
    \item $i' = i$ and $k \leq \fsize{\theta}$. Then,
    $\toPPC[\listconc{\listconc{Y}{\theta}}{X}][M]{\termidxv{i}[k]} =
    \termvar{\theta_k}$. By definition, we have
    $\subsapply{\subs{\termvar{\theta_j}}{\toPPC[\listconc{Y}{X}][M]{\subsapply{\sigma}{\termidxv{i}[j]}}}_{\termidxv{i}[j] \in \dom{\sigma}}}{\toPPC[\listconc{\listconc{Y}{\theta}}{X}][M]{\termidxv{i}[k]}}
    =
    \subsapply{\subs{\termvar{\theta_j}}{\toPPC[\listconc{Y}{X}][M]{\subsapply{\sigma}{\termidxv{i}[j]}}}_{\termidxv{i}[j] \in \dom{\sigma}}}{\termvar{\theta_k}}
    =
    \toPPC[\listconc{Y}{X}][M]{\subsapply{\sigma}{\termidxv{i}[k]}}$. Moreover,
    by Lem.~\ref{l:appendix:finc} (\ref{l:appendix:finc:fdec-l}),
    $\subsapply{\sigma}{\termidxv{i}[k]} =
    \fdecv[i-1]{\fincv[i-1]{\subsapply{\sigma}{\termidxv{i}[k]}}}$ and, by
    definition, $\fincv[i-1]{\subsapply{\sigma}{\termidxv{i}[k]}}$ = 
    $\subsapply{\subs{\termidxv{i}[j]}{\fincv[i-1]{\subsapply{\sigma}{\termidxv{i}[j]}}}_{\termidxv{i}[j] \in \dom{\sigma}}}{\termidxv{i}[k]}$.
    Hence, $\subsapply{\subs{\termvar{\theta_j}}{\toPPC[\listconc{Y}{X}][M]{\subsapply{\sigma}{\termidxv{i}[j]}}}_{\termidxv{i}[j] \in \dom{\sigma}}}{\toPPC[\listconc{\listconc{Y}{\theta}}{X}][M]{\termidxv{i}[k]}}
    = \toPPC[\listconc{Y}{X}][M]{\subsapply{\sigma}{\termidxv{i}[k]}} =
    \toPPC[\listconc{Y}{X}][M]{\fdecv[i-1]{\subsapply{\subs{\termidxv{i}[j]}{\fincv[i-1]{\subsapply{\sigma}{\termidxv{i}[j]}}}_{\termidxv{i}[j] \in \dom{\sigma}}}{\termidxv{i}[k]}}}$ and we conclude.
    
    \item $i' < i$. Then,
    $\toPPC[\listconc{\listconc{Y}{\theta}}{X}][M]{\termidxv{i'}[k]} =
    \termvar{Y_{i'k}}$. By definition, we have
    $\subsapply{\subs{\termvar{\theta_j}}{\toPPC[\listconc{Y}{X}][M]{\subsapply{\sigma}{\termidxv{i}[j]}}}_{\termidxv{i}[j] \in \dom{\sigma}}}{\toPPC[\listconc{\listconc{Y}{\theta}}{X}][M]{\termidxv{i'}[k]}}
    = \toPPC[\listconc{\listconc{Y}{\theta}}{X}][M]{\termidxv{i'}[k]}$.
    Moverover, since $i' \neq i$, by Lem.~\ref{l:appendix:finc}
    (\ref{l:appendix:finc:fdec-r}), we get
    $\toPPC[\listconc{\listconc{Y}{\theta}}{X}][M]{\termidxv{i'}[k]} =
    \toPPC[\listconc{\listconc{Y}{\theta}}{X}][M]{\fincv[i-1]{\fdecv[i-1]{\termidxv{i'}[k]}}}$.
    Then, by Lem.~\ref{l:translation:to-ppc-upd}
    (\ref{l:translation:to-ppc-upd:var}),
    $\toPPC[\listconc{\listconc{Y}{\theta}}{X}][M]{\fincv[i-1]{\fdecv[i-1]{\termidxv{i'}[k]}}}
    = \toPPC[\listconc{Y}{X}][M]{\fdecv[i-1]{\termidxv{i'}[k]}}$.
    Thus, $\toPPC[\listconc{Y}{X}][M]{\fdecv[i-1]{\termidxv{i'}[k]}} =
    \toPPC[\listconc{Y}{X}][M]{\fdecv[i-1]{\subsapply{\subs{\termidxv{i}[j]}{\fincv[i-1]{\subsapply{\sigma}{\termidxv{i}[j]}}}_{\termidxv{i}[j] \in \dom{\sigma}}}{\termidxv{i'}[k]}}}$
    and we conclude given that $\termidxv{i'}[k]$ is not affected by the
    substitution.
    
    \item $i' > i$. Then,
    $\toPPC[\listconc{\listconc{Y}{\theta}}{X}][M]{\termidxv{i'}[k]} =
    \termvar{X_{i'k}}$. By definition, we have
    $\subsapply{\subs{\termvar{\theta_j}}{\toPPC[\listconc{Y}{X}][M]{\subsapply{\sigma}{\termidxv{i}[j]}}}_{\termidxv{i}[j] \in \dom{\sigma}}}{\toPPC[\listconc{\listconc{Y}{\theta}}{X}][M]{\termidxv{i'}[k]}}
    = \toPPC[\listconc{\listconc{Y}{\theta}}{X}][M]{\termidxv{i'}[k]}$.
    Moverover, since $i' \neq i$, by Lem.~\ref{l:appendix:finc}
    (\ref{l:appendix:finc:fdec-r}), we get
    $\toPPC[\listconc{\listconc{Y}{\theta}}{X}][M]{\termidxv{i'}[k]} =
    \toPPC[\listconc{\listconc{Y}{\theta}}{X}][M]{\fincv[i-1]{\fdecv[i-1]{\termidxv{i'}[k]}}}$.
    Then, by Lem.~\ref{l:translation:to-ppc-upd}
    (\ref{l:translation:to-ppc-upd:var}),
    $\toPPC[\listconc{\listconc{Y}{\theta}}{X}][M]{\fincv[i-1]{\fdecv[i-1]{\termidxv{i'}[k]}}}
    = \toPPC[\listconc{Y}{X}][M]{\fdecv[i-1]{\termidxv{i'}[k]}}$.
    Thus, $\toPPC[\listconc{Y}{X}][M]{\fdecv[i-1]{\termidxv{i'}[k]}} =
    \toPPC[\listconc{Y}{X}][M]{\fdecv[i-1]{\subsapply{\subs{\termidxv{i}[j]}{\fincv[i-1]{\subsapply{\sigma}{\termidxv{i}[j]}}}_{\termidxv{i}[j] \in \dom{\sigma}}}{\termidxv{i'}[k]}}}$
    and we conclude given that $\termidxv{i'}[k]$ is not affected by the
    substitution.
  \end{enumerate}
  
  \item $s = \termidxm{i'}[k]$. Then,
  $\toPPC[\listconc{\listconc{Y}{\theta}}{X}][M]{\termidxm{i'}[k]} =
  \termmatch{M_{i'k}}$. Since matchables are not affected by the substitution,
  it is safe to conclude
  $\subsapply{\subs{\termvar{\theta_j}}{\toPPC[\listconc{Y}{X}][M]{\subsapply{\sigma}{\termidxm{i}[j]}}}_{\termidxv{i}[j] \in \dom{\sigma}}}{\toPPC[\listconc{\listconc{Y}{\theta}}{X}][M]{\termidxm{i'}[k]}}
  =
  \toPPC[\listconc{Y}{X}][M]{\fdecv[i-1]{\subsapply{\subs{\termidxv{i}[j]}{\fincv[i-1]{\subsapply{\sigma}{\termidxv{i}[j]}}}_{\termidxv{i}[j] \in \dom{\sigma}}}{\termidxm{i'}[k]}}}$.
  
  \item $s = \termapp{t}{u}$. This case is immediate from the \ih since every
  definition involved distributes over applications.
  
  \item $s = \termabs[m]{p}{t}$. Let $\theta'$ be a list of $m$ fresh symbols.
  Then, $\toPPC[\listconc{\listconc{Y}{\theta}}{X}][M]{s} \eqalpha
  \termabs[\theta']{\toPPC[\listconc{\listconc{Y}{\theta}}{X}][\listconc{\theta'}{M}]{p}}{\toPPC[\listconc{\listconc{\listconc{\theta'}{Y}}{\theta}}{X}][M]{t}}$.
  Moreover, by definition we get
  $\subsapply{\subs{\termvar{\theta_j}}{\toPPC[\listconc{Y}{X}][M]{\subsapply{\sigma}{\termidxm{i}[j]}}}_{\termidxv{i}[j] \in \dom{\sigma}}}{\toPPC[\listconc{\listconc{Y}{\theta}}{X}][M]{s}}
  \eqalpha
  \termabs[\theta']{\subsapply{\subs{\termvar{\theta_j}}{\toPPC[\listconc{Y}{X}][M]{\subsapply{\sigma}{\termidxm{i}[j]}}}_{\termidxv{i}[j] \in \dom{\sigma}}}{\toPPC[\listconc{\listconc{Y}{\theta}}{X}][\listconc{\theta'}{M}]{p}}}{\subsapply{\subs{\termvar{\theta_j}}{\toPPC[\listconc{Y}{X}][M]{\subsapply{\sigma}{\termidxm{i}[j]}}}_{\termidxv{i}[j] \in \dom{\sigma}}}{\toPPC[\listconc{\listconc{\listconc{\theta'}{Y}}{\theta}}{X}][M]{t}}}$.
  By Lem.~\ref{l:translation:to-ppc-upd} (\ref{l:translation:to-ppc-upd:var})
  and (\ref{l:translation:to-ppc-upd:match}), we have
  $\toPPC[\listconc{Y}{X}][M]{\subsapply{\sigma}{\termidxm{i}[j]}} =
  \toPPC[\listconc{\listconc{\theta'}{Y}}{X}][M]{\fincv{\subsapply{\sigma}{\termidxm{i}[j]}}}$
  and $\toPPC[\listconc{Y}{X}][M]{\subsapply{\sigma}{\termidxm{i}[j]}} =
  \toPPC[\listconc{Y}{X}][\listconc{\theta'}{M}]{\fincm{\subsapply{\sigma}{\termidxm{i}[j]}}}$
  respectively. Then, applying the \ih we get
  $\subsapply{\subs{\termvar{\theta_j}}{\toPPC[\listconc{Y}{X}][\listconc{\theta'}{M}]{\fincm{\subsapply{\sigma}{\termidxm{i}[j]}}}}_{\termidxv{i}[j] \in \dom{\sigma}}}{\toPPC[\listconc{\listconc{Y}{\theta}}{X}][\listconc{\theta'}{M}]{p}}
  \eqalpha
  \toPPC[\listconc{Y}{X}][\listconc{\theta'}{M}]{\fdecv[i-1]{\subsapply{\subs{\termidxv{i}[j]}{\fincv[i-1]{\fincm{\subsapply{\sigma}{\termidxv{i}[j]}}}}_{\termidxv{i}[j] \in \dom{\sigma}}}{p}}}$
  and 
  $\subsapply{\subs{\termvar{\theta_j}}{\toPPC[\listconc{\listconc{\theta'}{Y}}{X}][M]{\fincv{\subsapply{\sigma}{\termidxm{i}[j]}}}}_{\termidxv{i}[j] \in \dom{\sigma}}}{\toPPC[\listconc{\listconc{\listconc{\theta'}{Y}}{\theta}}{X}][M]{t}}
  \eqalpha
  \toPPC[\listconc{\listconc{\theta'}{Y}}{X}][M]{\fdecv[i]{\subsapply{\subs{\termidxv{(i+1)}[j]}{\fincv[i]{\fincv{\subsapply{\sigma}{\termidxv{i}[j]}}}}_{\termidxv{i}[j] \in \dom{\sigma}}}{t}}}$.
  Furthermore, by Lem.~\ref{l:appendix:finc} (\ref{l:appendix:finc:match}),
  $\fincv[i-1]{\fincm{\subsapply{\sigma}{\termidxv{i}[j]}}} =
  \fincm{\fincv[i-1]{\subsapply{\sigma}{\termidxv{i}[j]}}}$ and, by
  Lem.~\ref{l:appendix:finc} (\ref{l:appendix:finc:var}),
  $\fincv[i]{\fincv{\subsapply{\sigma}{\termidxv{i}[j]}}} =
  \fincv{\fincv[i-1]{\subsapply{\sigma}{\termidxv{i}[j]}}}$ since $i > 0$ by
  definition. Finally, we conclude as follows: \[%\kern-6em
\begin{array}{ll}
          & \subsapply{\subs{\termvar{\theta_j}}{\toPPC[\listconc{Y}{X}][M]{\subsapply{\sigma}{\termidxm{i}[j]}}}_{\termidxv{i}[j] \in \dom{\sigma}}}{\toPPC[\listconc{\listconc{Y}{\theta}}{X}][M]{\termabs[m]{p}{t}}} \\
\eqalpha  & \termabs[\theta']{\subsapply{\subs{\termvar{\theta_j}}{\toPPC[\listconc{Y}{X}][\listconc{\theta'}{M}]{\fincm{\subsapply{\sigma}{\termidxm{i}[j]}}}}_{\termidxv{i}[j] \in \dom{\sigma}}}{\toPPC[\listconc{\listconc{Y}{\theta}}{X}][\listconc{\theta'}{M}]{p}}}{\subsapply{\subs{\termvar{\theta_j}}{\toPPC[\listconc{\listconc{\theta'}{Y}}{X}][M]{\fincv{\subsapply{\sigma}{\termidxm{i}[j]}}}}_{\termidxv{i}[j] \in \dom{\sigma}}}{\toPPC[\listconc{\listconc{\listconc{\theta'}{Y}}{\theta}}{X}][M]{t}}} \\
\eqalpha  & \termabs[\theta']{\toPPC[\listconc{Y}{X}][\listconc{\theta'}{M}]{\fdecv[i-1]{\subsapply{\subs{\termidxv{i}[j]}{\fincv[i-1]{\fincm{\subsapply{\sigma}{\termidxv{i}[j]}}}}_{\termidxv{i}[j] \in \dom{\sigma}}}{p}}}}{\toPPC[\listconc{\listconc{\theta'}{Y}}{X}][M]{\fdecv[i]{\subsapply{\subs{\termidxv{(i+1)}[j]}{\fincv[i]{\fincv{\subsapply{\sigma}{\termidxv{i}[j]}}}}_{\termidxv{i}[j] \in \dom{\sigma}}}{t}}}} \\
=         & \termabs[\theta']{\toPPC[\listconc{Y}{X}][\listconc{\theta'}{M}]{\fdecv[i-1]{\subsapply{\subs{\termidxv{i}[j]}{\fincm{\fincv[i-1]{\subsapply{\sigma}{\termidxv{i}[j]}}}}_{\termidxv{i}[j] \in \dom{\sigma}}}{p}}}}{\toPPC[\listconc{\listconc{\theta'}{Y}}{X}][M]{\fdecv[i]{\subsapply{\subs{\termidxv{(i+1)}[j]}{\fincv{\fincv[i-1]{\subsapply{\sigma}{\termidxv{i}[j]}}}}_{\termidxv{i}[j] \in \dom{\sigma}}}{t}}}} \\
=         & \toPPC[\listconc{Y}{X}][M]{\termabs[m]{\fdecv[i-1]{\subsapply{\subs{\termidxv{i}[j]}{\fincm{\fincv[i-1]{\subsapply{\sigma}{\termidxv{i}[j]}}}}_{\termidxv{i}[j] \in \dom{\sigma}}}{p}}}{\fdecv[i]{\subsapply{\subs{\termidxv{(i+1)}[j]}{\fincv{\fincv[i-1]{\subsapply{\sigma}{\termidxv{i}[j]}}}}_{\termidxv{i}[j] \in \dom{\sigma}}}{t}}}} \\
=         & \toPPC[\listconc{Y}{X}][M]{\fdecv[i-1]{\termabs[m]{\subsapply{\subs{\termidxv{i}[j]}{\fincm{\fincv[i-1]{\subsapply{\sigma}{\termidxv{i}[j]}}}}_{\termidxv{i}[j] \in \dom{\sigma}}}{p}}{\subsapply{\subs{\termidxv{(i+1)}[j]}{\fincv{\fincv[i-1]{\subsapply{\sigma}{\termidxv{i}[j]}}}}_{\termidxv{i}[j] \in \dom{\sigma}}}{t}}}} \\
=         & \toPPC[\listconc{Y}{X}][M]{\fdecv[i-1]{\subsapply{\subs{\termidxv{i}[j]}{\fincv[i-1]{\subsapply{\sigma}{\termidxv{i}[j]}}}_{\termidxv{i}[j] \in \dom{\sigma}}}{\termabs[m]{p}{t}}}} \\
\end{array} \]
\end{itemize}
\end{proof}

%%% Local Variables:
%%% mode: latex
%%% TeX-master: "main"
%%% End:

Similarly to the substitution case, the translation of a match
$\match[n]{p}{u}$ requires to be supplied with a list of $n$ fresh symbols
$\theta$. Then, it is defined as $\toPPC[V][M]{\match[n]{p}{u}}[\theta] \eqdef
\match[\theta]{\toPPC[V][\listconc{\theta}{M}]{p}}{\toPPC[V][M]{u}}$. The newly
provided list of symbols is used both as the parameter of the resulting match
and to properly translate the pattern, obtaining the following expected
results.

\begin{lemma}
Let $t \in \Term{\calcPPCX}$. Then,
\begin{enumerate}
  \item \label{l:appendix:to-ppc-mf:data} $t \in \TermData{\calcPPCX}$ iff
  $\toPPC[V][M]{t} \in \TermData{\calcPPC}$.
  \item \label{l:appendix:to-ppc-mf:match} $t \in \Matchable{\calcPPCX}$ iff
  $\toPPC[V][M]{t} \in \Matchable{\calcPPC}$.
\end{enumerate}
\label{l:appendix:to-ppc-mf}
\end{lemma}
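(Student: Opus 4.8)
The plan is to mirror the proof of Lemma~\ref{l:appendix:to-ppcx-mf}, proceeding by structural induction on $t \in \Term{\calcPPCX}$ and establishing item~(\ref{l:appendix:to-ppc-mf:data}) first, then invoking it in the application and abstraction cases of item~(\ref{l:appendix:to-ppc-mf:match}). The single observation that makes everything routine is that $\toPPC[V][M]{\_}$ preserves the top-level constructor of every term: an application is sent to an application, an abstraction to an abstraction, a matchable index $\termidxm{i}[j]$ to a matchable symbol $\termmatch{M_{ij}}$, and a variable index $\termidxv{i}[j]$ to a variable symbol $\termvar{V_{ij}}$. Since neither $\termidxv{i}[j]$ nor $\termvar{V_{ij}}$ belongs to the data structures or matchable forms of its respective calculus, the grammars of $\TermData{\calcPPCX}$ and $\TermData{\calcPPC}$, and of $\Matchable{\calcPPCX}$ and $\Matchable{\calcPPC}$, stand in exact correspondence through the translation.

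For item~(\ref{l:appendix:to-ppc-mf:data}) I would check the four cases of the grammar of $\Term{\calcPPCX}$. A matchable index translates to $\termmatch{M_{ij}} \in \TermData{\calcPPC}$, the base clause of $\calcPPC$ data structures; a variable index translates to a variable symbol, which lies in neither $\TermData{\calcPPCX}$ nor $\TermData{\calcPPC}$, so the biconditional holds vacuously; an application $\termapp{d}{t}$ translates to $\termapp{\toPPC[V][M]{d}}{\toPPC[V][M]{t}}$, and since the data-structure grammar constrains only the left sub-term, the induction hypothesis on $d$ yields the equivalence; and an abstraction lies outside both sets, giving the claim vacuously. Both directions of the ``iff'' emerge simultaneously precisely because the outermost constructor is preserved, so there is no asymmetry to handle separately.

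Item~(\ref{l:appendix:to-ppc-mf:match}) then follows by the same case analysis. When $t$ is a data structure I appeal directly to item~(\ref{l:appendix:to-ppc-mf:data}), using $\TermData{\calcPPCX} \subseteq \Matchable{\calcPPCX}$ together with the analogous inclusion in $\calcPPC$; when $t = \termabs[n]{p}{s}$ its translation is an abstraction, hence a matchable form, and conversely any $t$ whose translation is an abstraction must itself be an abstraction, since only abstractions translate to abstractions. The variable-index case is again vacuous because variables are not matchable forms on either side.

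I do not expect a genuine obstacle: the argument is purely syntactic and symmetric, and membership in all four classes is decided by the outermost constructor alone, so the well-formedness and freshness bookkeeping that complicated the substitution lemmas plays no role. The only point demanding (minor) care is to record explicitly that variable indices and variable symbols sit outside every one of the four syntactic classes involved, so that the translation neither creates nor destroys membership in $\TermData{\_}$ or $\Matchable{\_}$; this is exactly what upgrades the argument from mere implications to the stated biconditionals.
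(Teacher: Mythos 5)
Your proof is correct and takes essentially the same approach as the paper, whose entire argument reads: ``Both items follow by straightforward induction on $t$, using (\ref{l:appendix:to-ppc-mf:data}) to prove (\ref{l:appendix:to-ppc-mf:match}).'' Your case analysis --- constructor preservation under $\toPPC[V][M]{\_}$, vacuous cases for variable indices and (in item \ref{l:appendix:to-ppc-mf:data}) abstractions, induction on the left sub-term of an application, and the appeal to item \ref{l:appendix:to-ppc-mf:data} in the data-structure case of item \ref{l:appendix:to-ppc-mf:match} --- is precisely the routine detail the paper leaves implicit.
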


\begin{proof}
Both items follow by straightforward induction on $t$, using
(\ref{l:appendix:to-ppc-mf:data}) to prove (\ref{l:appendix:to-ppc-mf:match}).
\end{proof}

\begin{lemma}
Let $p, u \in \Term{\calcPPCX}$.
\begin{enumerate}
  \item\label{l:translation:to-ppc-match:subs} If $\match[n]{p}{u} =
  \sigma$, then $\toPPC[V][M]{\match[n]{p}{u}}[\theta] =
  \toPPC[V][M]{\sigma}[\theta]$.

  \item\label{l:translation:to-ppc-match:fail} If $\match[n]{p}{u} =
  \matchfail$, then $\toPPC[V][M]{\match[n]{p}{u}}[\theta] =
  \matchfail$.

  \item\label{l:translation:to-ppc-match:wait} If $\match[n]{p}{u} =
  \matchundet$, then $\toPPC[V][M]{\match[n]{p}{u}}[\theta] =
  \matchundet$.
\end{enumerate}
\label{l:translation:to-ppc-match}
\end{lemma}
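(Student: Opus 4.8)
The plan is to prove all three items simultaneously by induction on the structure of the pattern $p$, mirroring exactly the strategy used for the companion result Lem.~\ref{l:translation:to-ppcx-match} (the $\calcPPC \to \calcPPCX$ direction). The key observation driving the whole argument is that the translation $\toPPC{\_}$ commutes with the matching operation case-by-case, because both the definition of $\match[n]{p}{u}$ in $\calcPPCX$ and $\match[\theta]{\_}{\_}$ in $\calcPPC$ are syntax-directed and apply their clauses in the same fixed order. As in the analogous lemma, I would first run the induction \emph{before} imposing the final domain/consistency check, and only at the very end verify that the post-condition $\dom{\match[n]{p}{u}} = \set{\termidx{1}[1],\ldots,\termidx{1}[n]}$ translates to the corresponding condition $\dom{\match[\theta]{\toPPC[V][\listconc{\theta}{M}]{p}}{\toPPC[V][M]{u}}} = \theta$ on the $\calcPPC$ side.

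\medskip
First I would dispatch the base cases on $p$. When $p = \termidxm{1}[j]$, the matching rule yields a substitution $\subs{\termidxv{1}[j]}{u}$, and since $\theta$ is pushed into the matchable list so that $\toPPC[V][\listconc{\theta}{M}]{\termidxm{1}[j]} = \termmatch{\theta_j}$ with $\theta_j \in \theta$, the $\calcPPC$ match produces $\subs{\termvar{\theta_j}}{\toPPC[V][M]{u}}$, which is exactly $\toPPC[V][M]{\sigma}[\theta]$ by the definition of the translation of a substitution; this settles item~(\ref{l:translation:to-ppc-match:subs}). When $p = \termidxm{i+1}[j]$ with the argument of matching form, I would split on whether $u = \termidxm{i}[j]$: in the successful subcase the match is $\subsid$ and translates to $\subsid$ (note $\theta_j \notin \theta$ on the $\calcPPC$ side, matching the $x \notin \theta$ clause), while in the failing subcase I invoke Lem.~\ref{l:appendix:to-ppc-mf}~(\ref{l:appendix:to-ppc-mf:match}) to transport matchable-form status across the translation and conclude either $\matchfail$ or $\matchundet$. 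The case $p = \termidxv{i'}[k]$ is immediate: a variable index is never in matchable form so the match is $\matchundet$, preserved by the translation, giving item~(\ref{l:translation:to-ppc-match:wait}).

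\medskip
For the inductive step $p = \termapp{q}{r}$, I would first use Lem.~\ref{l:appendix:to-ppc-mf}~(\ref{l:appendix:to-ppc-mf:match}) to determine whether $\toPPC[V][\listconc{\theta}{M}]{p}$ is in matchable form, and likewise for $u$; when both are applications in matchable form, the matching decomposes as $\matchunion{\match[n]{q}{s}}{\match[n]{r}{t}}$ and I apply the induction hypothesis componentwise. The crucial point here, exactly as in Lem.~\ref{l:translation:to-ppcx-match}, is that $\toPPC{\_}$ preserves disjoint union: since both subpattern matches are translated using the \emph{same} fresh list $\theta$, the domains $\dom{\toPPC[V][M]{\sigma_1}[\theta]}$ and $\dom{\toPPC[V][M]{\sigma_2}[\theta]}$ are disjoint iff $\dom{\sigma_1} \cap \dom{\sigma_2}$ is, so $\matchfail$-versus-substitution behavior of $\matchunion{\_}{\_}$ is faithfully transported. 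The remaining subcases (one side not in matchable form, or the case $p = \termabs[m]{q}{s}$ which is always in matchable form) reduce to a matchable-form comparison on $u$ handled again by Lem.~\ref{l:appendix:to-ppc-mf}.

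\medskip
\textbf{The main obstacle} I anticipate is the bookkeeping around the fresh-symbol list $\theta$ and its interaction with matchable-index depth. Unlike the forward direction, where $\theta$ was a given list of actual symbols, here $\theta$ is chosen \emph{fresh}, so I must ensure that the translation of the pattern places these symbols at primary index $1$ in the matchable list (hence the clause $\toPPC[V][\listconc{\theta}{M}]{p}$) and that the resulting $\calcPPC$ match uses the same $\theta$ as its binding parameter. The delicate point is that $\calcPPCX$ distinguishes the bound case (primary index $1$) from the free case (primary index $i+1$ matching argument index $i$) purely arithmetically, whereas $\calcPPC$ distinguishes via set membership $x \in \theta$; I must check that the freshness of $\theta$ guarantees $\theta_j \notin M$-symbols, so that a bound matchable index $\termidxm{1}[j]$ never collides with a free one after translation. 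Once freshness is correctly exploited, the equality $\toPPC[V][M]{\match[n]{p}{u}}[\theta] = \toPPC[V][M]{\sigma}[\theta]$ follows by unfolding the definition of the translated match and invoking the substitution-translation definition, and the concluding domain check is a direct computation as in the companion lemma.
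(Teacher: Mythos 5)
Your proposal is correct and follows essentially the same route as the paper, whose proof is precisely a straightforward induction on $p$ using Lem.~\ref{l:appendix:to-ppc-mf}, structured exactly like the proof of Lem.~\ref{l:translation:to-ppcx-match} (induction run before the final domain check, matchable forms transported by the translation, and disjointness of domains preserved because both components share the same fresh list $\theta$). The only quibble is a wording slip in your $p = \termidxm{i+1}[j]$ case: ``$\theta_j \notin \theta$'' should say that the translated symbol $\termmatch{M_{ij}}$ does not occur in $\theta$, which is what freshness of $\theta$ guarantees --- exactly the point you later state correctly in your ``main obstacle'' paragraph.
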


\begin{proof}
By straightforward induction on $p$ using Lem.~\ref{l:appendix:to-ppc-mf}.
Similar to the proof for Lem.~\ref{l:translation:to-ppcx-match}.
\end{proof}

Now we are in conditions to prove the simulation of $\calcPPCX$ into $\calcPPC$
via the translation provided in Def.~\ref{d:translation:to-ppc}.

\medskip
Before proceeding to the next section, one final result concerns the
translations. It turns out that each translation is the inverse of the other,
as shown in Thm.~\ref{t:translation:inverse}. In case of $\calcPPC$ terms
we should work modulo $\alpha$-conversion, while for $\calcPPCX$ terms we may
use equality (modulo secondary indices permutations, \cf last paragraph in
Sec.~\ref{s:ppcx}). This constitutes the main result of this section and is the
key to extend our individual simulation results (\cf
Thm.~\ref{t:bisimulation:ppc-ppcx} and~\ref{t:bisimulation:ppcx-ppc} resp.)
into a strong bisimulation between the two calculi, as shown in
Sec.~\ref{s:bisimulation}.

\begin{lemma}
Let $t \in \Term{\calcPPCX}$. Then, $\toPPCX[V][M]{\toPPC[V][M]{t}} = t$.
\label{l:translation:inverse-ppcx}
\end{lemma}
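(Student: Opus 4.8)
The plan is to proceed by structural induction on $t$, generalising the statement over all pairs of lists of lists of distinct symbols $V$ and $M$ satisfying the definedness conditions of Def.~\ref{d:translation:to-ppc} for $t$ (that is, $V_{ij}$ defined for every $\termidx{i}[j] \in \fv{t}$ and $M_{ij}$ defined for every $\termidx{i}[j] \in \fm{t}$). The essential invariant driving the argument is the \emph{distinctness} of symbols within $V$ and $M$: this is precisely what makes the $\min$-based index recovery performed by $\toPPCX[V][M]{\_}$ unambiguous and inverse to the positional lookups $V_{ij}$, $M_{ij}$ performed by $\toPPC[V][M]{\_}$.

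For the base case $t = \termidxv{i}[j]$, unfolding gives $\toPPC[V][M]{\termidxv{i}[j]} = \termvar{V_{ij}}$, and then $\toPPCX[V][M]{\termvar{V_{ij}}} = \termidxv{i'}[j']$ with $i' = \min\set{i'' \mid V_{ij} \in V_{i''}}$ and $j' = \min\set{j'' \mid V_{ij} = V_{i'j''}}$. Since all symbols of $V$ are distinct, $V_{ij}$ occurs at the unique position $(i,j)$, so both minima are over singletons and $(i',j') = (i,j)$, yielding $\termidxv{i}[j] = t$. The case $t = \termidxm{i}[j]$ is symmetric using $M$. The application case $t = \termapp{t_1}{t_2}$ is immediate, since both translations commute with application and the two inductive hypotheses apply directly.

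The interesting case is the abstraction $t = \termabs[n]{p}{s}$. Here $\toPPC[V][M]{\termabs[n]{p}{s}} = \termabs[\theta]{\toPPC[V][\listconc{\theta}{M}]{p}}{\toPPC[\listconc{\theta}{V}][M]{s}}$ with $\theta = \lista{x_1,\ldots,x_n}$ a list of \emph{fresh} symbols, and applying $\toPPCX[V][M]{\_}$ back yields $\termabs[\fsize{\theta}]{\toPPCX[V][\listconc{\theta}{M}]{\toPPC[V][\listconc{\theta}{M}]{p}}}{\toPPCX[\listconc{\theta}{V}][M]{\toPPC[\listconc{\theta}{V}][M]{s}}}$. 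Since $\fsize{\theta} = n$, it only remains to recover $p$ and $s$. The freshness of $\theta$ ensures that $\listconc{\theta}{M}$ and $\listconc{\theta}{V}$ are again lists of lists of distinct symbols, and they remain defined on $\fm{p}$, $\fv{p}$ and on $\fv{s}$, $\fm{s}$ respectively, so the generalised inductive hypotheses apply to $p$ with $(V,\listconc{\theta}{M})$ and to $s$ with $(\listconc{\theta}{V},M)$, returning $p$ and $s$ and hence $t$.

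The main obstacle --- and the only genuinely delicate point --- is keeping the distinctness invariant alive through the induction, since it is indispensable for the base cases, where the $\min$ operators must select the originating position rather than some earlier repeated occurrence of the same symbol. This is exactly what the freshness requirement on $\theta$ built into $\toPPC[V][M]{\_}$ at each abstraction guarantees: it prevents the newly introduced binder names from colliding with symbols already present in $V$ or $M$, so the prepended lists stay distinct and the lookup/recovery pair remains mutually inverse at every recursion depth.
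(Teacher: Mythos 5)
Your proof is correct and follows essentially the same route as the paper's: structural induction on $t$ with $V$ and $M$ kept general, using distinctness of the symbols in $V$ and $M$ to make the $\min$-based lookups of $\toPPCX[V][M]{\_}$ invert the positional lookups of $\toPPC[V][M]{\_}$ in the index cases, and freshness of $\theta$ to propagate the invariant through the abstraction case. Your explicit tracking of the distinctness/definedness invariant is merely a more careful spelling-out of what the paper assumes implicitly.
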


\begin{proof}
By induction on $t$, assuming $V$ and $M$ both satisfy the conditions of
Def.~\ref{d:translation:to-ppcx} and Def.~\ref{d:translation:to-ppc}.
\begin{itemize}
  \item $t = \termidxv{i}[j]$. By Def.~\ref{d:translation:to-ppc},
  $\toPPC[V][M]{\termidxv{i}[j]} = \termvar{V_{ij}}$. Moreover, since all the
  symbols in $V$ are distinct by hypothesis, we have $i =
  \min\set{i' \mid V_{ij} \in V_{i'}}$ and $j = \min\set{j' \mid V_{ij} = V_{ij'}}$.
  Then, we conclude by Def.~\ref{d:translation:to-ppcx},
  $\toPPCX[V][M]{\toPPC[V][M]{\termidxv{i}[j]}} =
  \toPPCX[V][M]{\termvar{V_{ij}}} = \termidxv{i}[j]$.
  
  \item $t = \termidxm{i}[j]$. By Def.~\ref{d:translation:to-ppc},
  $\toPPC[V][M]{\termidxm{i}[j]} = \termmatch{M_{ij}}$. Moreover, since all the
  symbols in $M$ are distinct by hypothesis, we have
  $i = \min\set{i' \mid M_{ij} \in M_{i'}}$ and $j = \min\set{j' \mid M_{ij} = M_{ij'}}$.
  Then, we conclude by Def.~\ref{d:translation:to-ppcx},
  $\toPPCX[V][M]{\toPPC[V][M]{\termidxm{i}[j]}} =
  \toPPCX[V][M]{\termmatch{M_{ij}}} = \termidxm{i}[j]$.
  
  \item $t = \termapp{s}{u}$. By \ih we have $\toPPCX[V][M]{\toPPC[V][M]{s}} =
  s$ and $\toPPCX[V][M]{\toPPC[V][M]{u}} = u$. Then, we conclude by
  Def.~\ref{d:translation:to-ppcx} and Def.~\ref{d:translation:to-ppc},
  $\toPPCX[V][M]{\toPPC[V][M]{t}} =
  \termapp{\toPPCX[V][M]{\toPPC[V][M]{s}}}{\toPPCX[V][M]{\toPPC[V][M]{u}}} =
  \termapp{s}{u} = t$.
  
  \item $t = \termabs[n]{p}{s}$. By Def.~\ref{d:translation:to-ppc},
  $\toPPC[V][M]{t} =
  \termabs[\theta]{\toPPC[V][\listconc{\theta}{M}]{p}}{\toPPC[\listconc{\theta}{V}][M]{s}}$
  with $\theta$ a list of $n$ fresh symbols, \ie $\fsize{\theta} = n$.
  Moreover, by Def.~\ref{d:translation:to-ppcx}, $\toPPCX[V][M]{\toPPC[V][M]{t}} =
  \termabs[n]{\toPPCX[V][\listconc{\theta}{M}]{\toPPC[V][\listconc{\theta}{M}]{p}}}{\toPPCX[\listconc{\theta}{V}][M]{\toPPC[\listconc{\theta}{V}][M]{s}}}$.
  By \ih we have
  $\toPPCX[V][\listconc{\theta}{M}]{\toPPC[V][\listconc{\theta}{M}]{p}} = p$
  and $\toPPCX[\listconc{\theta}{V}][M]{\toPPC[\listconc{\theta}{V}][M]{s}} =
  s$. Thus, we conclude $\toPPCX[V][M]{\toPPC[V][M]{t}} = \termabs[n]{p}{s}$.
\end{itemize}
\end{proof}

%%% Local Variables: 
%%% mode: latex
%%% TeX-master: "main"
%%% End: 

\begin{lemma}
Let $t \in \Term{\calcPPC}$. Then, $\toPPC[V][M]{\toPPCX[V][M]{t}} \eqalpha t$.
\label{l:translation:inverse-ppc}
\end{lemma}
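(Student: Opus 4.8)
The plan is to proceed by induction on the structure of $t \in \Term{\calcPPC}$, mirroring the companion result Lemma~\ref{l:translation:inverse-ppcx}. The essential difference is that here the fresh-symbol generation performed by $\toPPC[\cdot][\cdot]{\cdot}$ on abstractions happens \emph{after} the round trip has discarded the original binder names, so in general we can only recover $t$ up to $\alpha$-conversion; this is exactly why the statement uses $\eqalpha$ rather than syntactic equality. Throughout I assume, as in Def.~\ref{d:translation:to-ppcx} and Def.~\ref{d:translation:to-ppc}, that $V$ and $M$ are lists of lists of distinct symbols covering $\fv{t}$ and $\fm{t}$ respectively.

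For the leaf cases the result holds on the nose. If $t = \termvar{x}$, then $\toPPCX[V][M]{\termvar{x}} = \termidxv{i}[j]$ with $i,j$ chosen minimally so that $x = V_{ij}$; translating back gives $\toPPC[V][M]{\termidxv{i}[j]} = \termvar{V_{ij}} = \termvar{x}$. The matchable case $t = \termmatch{x}$ is identical using $M$ in place of $V$. The application case $t = \termapp{s}{u}$ is immediate from the induction hypothesis, since both $\toPPCX[V][M]{\cdot}$ and $\toPPC[V][M]{\cdot}$ commute with application and $\eqalpha$ is a congruence.

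The crux is the abstraction $t = \termabs[\theta]{p}{s}$. Unfolding the two translations, $\toPPCX[V][M]{t} = \termabs[\fsize{\theta}]{\hat{p}}{\hat{s}}$ with $\hat{p} = \toPPCX[V][\listconc{\theta}{M}]{p}$ and $\hat{s} = \toPPCX[\listconc{\theta}{V}][M]{s}$, and then $\toPPC[V][M]{\termabs[\fsize{\theta}]{\hat{p}}{\hat{s}}} = \termabs[\theta']{\toPPC[V][\listconc{\theta'}{M}]{\hat{p}}}{\toPPC[\listconc{\theta'}{V}][M]{\hat{s}}}$ where $\theta'$ is a \emph{fresh} list of $\fsize{\theta}$ symbols. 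The induction hypothesis on $p$ and $s$ cannot be applied directly, because the forward translation pushes the original $\theta$ onto the symbol lists while the backward one pushes the fresh $\theta'$. To bridge this gap I would first establish, by a routine induction, the \emph{renaming invariance} of the forward translation: for $\theta'$ fresh, $\toPPCX[V][\listconc{\theta}{M}]{p} = \toPPCX[V][\listconc{\theta'}{M}]{p''}$ and $\toPPCX[\listconc{\theta}{V}][M]{s} = \toPPCX[\listconc{\theta'}{V}][M]{s''}$, where $p''$ and $s''$ are obtained from $p$ and $s$ by renaming the bound matchables in the pattern and the bound variables in the body from $\theta$ to $\theta'$ — intuitively, the bidimensional indices depend only on the \emph{positions} of symbols in the lists, not on the symbols themselves. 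Consequently $\hat{p} = \toPPCX[V][\listconc{\theta'}{M}]{p''}$ and $\hat{s} = \toPPCX[\listconc{\theta'}{V}][M]{s''}$, so now the induction hypothesis applies with matching lists and yields $\toPPC[V][\listconc{\theta'}{M}]{\hat{p}} \eqalpha p''$ and $\toPPC[\listconc{\theta'}{V}][M]{\hat{s}} \eqalpha s''$. Combining these, and using that $\termabs[\theta']{p''}{s''} \eqalpha \termabs[\theta]{p}{s} = t$ (the defining $\alpha$-renaming of the binder, legitimate since $\theta'$ is fresh), together with the fact that $\eqalpha$ is a congruence, closes the case.

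The main obstacle is thus entirely concentrated in the abstraction case: reconciling the original binder names $\theta$ used while translating \emph{to} $\calcPPCX$ with the fresh names $\theta'$ minted while translating \emph{back}. The auxiliary renaming-invariance of $\toPPCX[\cdot][\cdot]{\cdot}$ is the key technical ingredient that makes the induction go through, and it is precisely the point where the round trip degrades syntactic equality into $\alpha$-equivalence. All remaining bookkeeping — distinctness and coverage of the augmented lists $\listconc{\theta'}{M}$ and $\listconc{\theta'}{V}$, needed so that both translations stay well defined — is routine given the freshness of $\theta'$.
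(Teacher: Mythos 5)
Your proof is correct and follows the same overall induction as the paper; the only divergence is in how the abstraction case reconciles binder names, and there the two arguments are mirror images of each other. The paper resolves the mismatch on the \emph{output} side: since the backward translation $\toPPC[V][M]{\_}$ is only specified up to the choice of fresh symbols, it simply chooses those fresh symbols to be the original $\theta$ itself (noting that $\theta$ is fresh for $V$ and $M$), so the induction hypothesis applies verbatim to $p$ and $s$ with the lists $\listconc{\theta}{M}$ and $\listconc{\theta}{V}$, and no auxiliary lemma is needed. You instead resolve it on the \emph{input} side: you keep the backward translation's fresh $\theta'$, prove a renaming-invariance lemma for the forward translation ($\toPPCX[V][\listconc{\theta}{M}]{p} = \toPPCX[V][\listconc{\theta'}{M}]{p''}$ for the renamed $p''$), apply the induction hypothesis to $p''$ and $s''$, and $\alpha$-convert the reassembled abstraction at the end. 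Both routes rest on the same underlying fact --- the translation depends only on \emph{positions} in the symbol lists, not on the names occupying them --- but your version makes that fact explicit as a lemma, whereas the paper leaves it implicit in its freshness assumption on $\theta$ (which, read literally, does not hold for arbitrary $\theta$, $V$, $M$ and must be understood as a w.l.o.g.\ $\alpha$-renaming of $t$); in that sense your argument is slightly more self-contained, at the cost of an extra lemma. One technical point to tighten: since $p''$ and $s''$ are renamings rather than subterms of $t$, your induction should be phrased on the size (or height) of $t$, which renaming preserves, rather than on strict structural subterms; with that adjustment the argument is complete.
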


\begin{proof}
By induction on $t$, assuming $V$ and $M$ both satisfy the conditions of
Def.~\ref{d:translation:to-ppcx} and Def.~\ref{d:translation:to-ppc}.
\begin{itemize}
  \item $t = \termvar{x}$. By Def.~\ref{d:translation:to-ppcx},
  $\toPPCX[V][M]{\termvar{x}} = \termidxv{i}[j]$ where $i =
  \min\set{i' \mid x \in V_{i'}}$ and $j = \min\set{j' \mid x = V_{ij'}}$.
  Then, we conclude by Def.~\ref{d:translation:to-ppc},
  $\toPPC[V][M]{\toPPCX[V][M]{\termvar{x}}} = \toPPC[V][M]{\termidxv{i}[j]} =
  \termvar{V_{ij}} = \termvar{x}$.
  
  \item $t = \termmatch{x}$. By Def.~\ref{d:translation:to-ppcx},
  $\toPPCX[V][M]{\termmatch{x}} = \termidxm{i}[j]$ where $i =
  \min\set{i' \mid x \in M_{i'}}$ and $j = \min\set{j' \mid x = M_{ij'}}$.
  Then, we conclude by Def.~\ref{d:translation:to-ppc},
  $\toPPC[V][M]{\toPPCX[V][M]{\termmatch{x}}} = \toPPC[V][M]{\termidxm{i}[j]} =
  \termmatch{M_{ij}} = \termmatch{x}$.
  
  \item $t = \termapp{s}{u}$. By \ih we have $\toPPC[V][M]{\toPPCX[V][M]{s}}
  \eqalpha s$ and $\toPPC[V][M]{\toPPCX[V][M]{u}} \eqalpha u$. Then, we
  conclude by Def.~\ref{d:translation:to-ppcx} and
  Def.~\ref{d:translation:to-ppc}, $\toPPC[V][M]{\toPPCX[V][M]{t}} =
  \termapp{\toPPC[V][M]{\toPPCX[V][M]{s}}}{\toPPC[V][M]{\toPPCX[V][M]{u}}}
  \eqalpha \termapp{s}{u} = t$.
  
  \item $t = \termabs[\theta]{p}{s}$. By Def.~\ref{d:translation:to-ppcx},
  $\toPPCX[V][M]{t} =
  \termabs[\fsize{\theta}]{\toPPCX[V][\listconc{\theta}{M}]{p}}{\toPPCX[\listconc{\theta}{V}][M]{s}}$.
  Note that $\theta$ is fresh for $V$ and $M$. Then, by $\alpha$-conversion and
  Def.~\ref{d:translation:to-ppc}, $\toPPC[V][M]{\toPPCX[V][M]{t}} \eqalpha
  \termabs[\theta]{\toPPC[V][\listconc{\theta}{M}]{\toPPCX[V][\listconc{\theta}{M}]{p}}}{\toPPC[\listconc{\theta}{V}][M]{\toPPCX[\listconc{\theta}{V}][M]{s}}}$.
  By \ih we have
  $\toPPC[V][\listconc{\theta}{M}]{\toPPCX[V][\listconc{\theta}{M}]{p}} \eqalpha
  p$ and $\toPPC[\listconc{\theta}{V}][M]{\toPPCX[\listconc{\theta}{V}][M]{s}}
  \eqalpha s$. Thus, we conclude $\toPPC[V][M]{\toPPCX[V][M]{t}} \eqalpha
  \termabs[\theta]{p}{s}$.
\end{itemize}
\end{proof}

%%% Local Variables: 
%%% mode: latex
%%% TeX-master: "main"
%%% End: 

\begin{theorem}[Invertibility]
Let $t \in \Term{\calcPPCX}$ and $s \in \Term{\calcPPC}$. Then,
\begin{inparaenum}
  \item \label{t:translation:inverse:ppcx} $\toPPCX{\toPPC{t}} = t$; and
  \item \label{t:translation:inverse:ppc} $\toPPC{\toPPCX{s}} \eqalpha s$.
\end{inparaenum}
\label{t:translation:inverse}
\end{theorem}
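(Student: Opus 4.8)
The plan is to derive both statements as specializations of the ``relative'' inverse lemmas (Lem.~\ref{l:translation:inverse-ppcx} and Lem.~\ref{l:translation:inverse-ppc}), which already carry out the inductive core for arbitrary lists $V$ and $M$. The only genuine work is to exhibit, in each direction, a single list $X$ of singleton lists that simultaneously satisfies the preconditions of the \emph{inner} and the \emph{outer} translation, so that $\toPPCX{\toPPC{t}}$ and $\toPPC{\toPPCX{s}}$ collapse to the instances $\toPPCX[X][X]{\toPPC[X][X]{t}}$ and $\toPPC[X][X]{\toPPCX[X][X]{s}}$, to which the lemmas apply verbatim.

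For item~(\ref{t:translation:inverse:ppcx}) I would first fix $X = \lista{\lista{x_1}, \ldots, \lista{x_n}}$ exactly as dictated by Def.~\ref{d:translation:to-ppc}, i.e.\ large enough that $\fv{t} \cup \fm{t} \subseteq \set{\termidxv{1}[1], \ldots, \termidxv{n}[1]}$ (this uses well-formedness of $t$, which forces every free index to have secondary component $1$), so that $\toPPC{t} = \toPPC[X][X]{t}$ by definition. The key verification is that $\fv{\toPPC[X][X]{t}} \cup \fm{\toPPC[X][X]{t}} \subseteq \set{x_1, \ldots, x_n}$: free symbols of the result arise only from free indices $\termidxv{i}[1]$ (resp.\ $\termidxm{i}[1]$) of $t$, each sent to $X_{i1} = x_i$, while the symbols $\theta$ introduced in the abstraction clause of $\toPPC{\_}$ are fresh and bound, hence contribute nothing to the free sets. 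This containment is precisely the precondition of $\toPPCX{\_}$, so the same $X$ may be reused, giving $\toPPCX{\toPPC{t}} = \toPPCX[X][X]{\toPPC[X][X]{t}}$; Lem.~\ref{l:translation:inverse-ppcx} then yields $t$.

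For item~(\ref{t:translation:inverse:ppc}) I would symmetrically fix $X = \lista{\lista{x_1}, \ldots, \lista{x_n}}$ with $\fv{s} \cup \fm{s} \subseteq \set{x_1, \ldots, x_n}$, so that $\toPPCX{s} = \toPPCX[X][X]{s}$. Here the dual verification is that $\fv{\toPPCX[X][X]{s}} \cup \fm{\toPPCX[X][X]{s}} \subseteq \set{\termidxv{1}[1], \ldots, \termidxv{n}[1]}$: since each list $X_i$ is a singleton, every free symbol is assigned secondary index $1$, and a top-level free $x_i$ receives primary index $i$; together with the readily checked well-formedness of $\toPPCX[X][X]{s}$, this meets the precondition of $\toPPC{\_}$ and lets the same $X$ be used for the outer map, so that $\toPPC{\toPPCX{s}} = \toPPC[X][X]{\toPPCX[X][X]{s}}$ and Lem.~\ref{l:translation:inverse-ppc} gives $\eqalpha s$. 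The only delicate point, and the place I would argue most carefully, is exactly these two containment/well-formedness checks: they are what guarantee that the \emph{global} translations (which each choose their own $X$ internally) can be made to agree on one common $X$, bridging the gap between the parametric lemmas and the parameter-free statement.
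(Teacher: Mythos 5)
Your proposal is correct and follows exactly the paper's own route: the paper proves the theorem by instantiating Lem.~\ref{l:translation:inverse-ppcx} and Lem.~\ref{l:translation:inverse-ppc} with $V = M = X$ as fixed in Def.~\ref{d:translation:to-ppcx} and Def.~\ref{d:translation:to-ppc}. The containment and well-formedness checks you spell out (that the same $X$ meets the preconditions of both the inner and the outer translation) are precisely what the paper leaves implicit behind the word ``immediate,'' so your write-up is a faithful, slightly more detailed rendering of the same argument.
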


\begin{proof}
Both items are immediate by Lem.~\ref{l:translation:inverse-ppcx}
and~\ref{l:translation:inverse-ppc} respectively, taking $V = M = X$ as given
in Def.~\ref{d:translation:to-ppcx} and Def.~\ref{d:translation:to-ppc}.
\end{proof}

%%% Local Variables:
%%% mode: latex
%%% TeX-master: "main"
%%% End:

%%%%%%%%%%%%%%%%%%%%%%%%%%%%%%%%%%%%%%%%%%%%%%%%%%%%%%%%%%%%%%%%%%%%%%%%%%%%%%%
\section{Strong bisimulation}
\label{s:bisimulation}
%%%%%%%%%%%%%%%%%%%%%%%%%%%%%%%%%%%%%%%%%%%%%%%%%%%%%%%%%%%%%%%%%%%%%%%%%%%%%%%

In this section we prove the simulation of one calculus by the other via the
proper translation and, most importantly, the strong bisimulation that follows
after the invertibility result (\cf Thm.~\ref{t:translation:inverse}). This
strong bisimulation result will allow to port many important properties already
known for $\calcPPC$ into $\calcPPCX$, as we will discuss later.

\medskip
We start by extending the increment function to substitutions and matches. The
\emph{increment at depth $k$} for variable indices in a substitution at level
$1$ is defined as $\fincv[k]{\sigma} \eqdef
\subs{\termidxv{1}[j]}{\fincv[k]{\subsapply{\sigma}{\termidxv{1}[j]}}}_{\termidx{1}[j] \in \dom{\sigma}}$.
As for matches, we simply define $\fincv[k]{\matchfail} \eqdef \matchfail$ and
$\fincv[k]{\matchundet} \eqdef \matchundet$, and use the definition over
substitution for successful matches. Recall that the matching operation yields
only substitutions at level 1. Then, the following result holds.

\begin{lemma}
Let $p,u \in \Term{\calcPPCX}$. Then, $\match[n]{p}{\fincv{u}} =
\fincv{\match[n]{p}{u}}$.
\label{l:bisimulation:fincv-lifting}
\end{lemma}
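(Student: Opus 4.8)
The plan is to argue by structural induction on the pattern $p$, which is legitimate because the only recursive calls in the matching operation (the application clause) descend into proper subterms of $p$. The guiding observation is that the increment $\fincv{\_}$ (at depth $0$) never alters matchable indices and commutes structurally with application and abstraction; consequently it preserves data structures and matchable forms, i.e. $u \in \Matchable{\calcPPCX}$ iff $\fincv{u} \in \Matchable{\calcPPCX}$ (and likewise for $\TermData{\calcPPCX}$), a fact established by a routine induction analogous to Lem.~\ref{l:appendix:to-ppcx-mf}. Moreover $\fincv{\_}$ does not change the outer constructor of $u$: a matchable index stays the very same matchable index, an application stays an application, and an abstraction stays an abstraction. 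Together these facts ensure that for $u$ and for $\fincv{u}$ the five defining clauses of $\match[n]{p}{\_}$ select exactly the same branch, so the case analysis stays synchronized on both sides.

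Before the induction I would record two auxiliary facts about $\fincv{\_}$ acting on matches. First, from the definition $\fincv[k]{\sigma} \eqdef \subs{\termidxv{1}[j]}{\fincv[k]{\subsapply{\sigma}{\termidxv{1}[j]}}}_{\termidx{1}[j] \in \dom{\sigma}}$ one reads off that $\fincv{\_}$ leaves the domain of a substitution at level $1$ unchanged, $\dom{\fincv{\sigma}} = \dom{\sigma}$. Second, $\fincv{\_}$ commutes with the disjoint union of matches, $\fincv{\matchunion{\mu}{\mu'}} = \matchunion{\fincv{\mu}}{\fincv{\mu'}}$: the $\matchfail$/$\matchundet$ propagation is immediate since $\fincv{\matchfail} = \matchfail$ and $\fincv{\matchundet} = \matchundet$, the domain-disjointness test is preserved by the first fact, and on a successful merge the increment acts pointwise on images, so merging and incrementing commute.

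With these in hand the induction is direct. For $p = \termidxm{1}[j]$ the first clause fires unconditionally and yields $\subs{\termidxv{1}[j]}{\fincv{u}}$ on the left and $\fincv{\subs{\termidxv{1}[j]}{u}} = \subs{\termidxv{1}[j]}{\fincv{u}}$ on the right. For $p = \termidxm{i+1}[j]$ the argument is tested syntactically: when $u = \termidxm{i}[j]$ we have $\fincv{u} = u$ and both sides give $\subsid$ (with $\fincv{\subsid} = \subsid$); otherwise the synchronized case analysis returns $\matchfail$ or $\matchundet$ on both sides. For $p = \termidxv{i}[j]$ and $p = \termabs[m]{q}{s}$ the pattern is respectively not a matchable form (forcing $\matchundet$) and a matchable form that decides only against matchable forms (giving $\matchfail$ or $\matchundet$), and the outcome is preserved because the status of $u$ is preserved. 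The only clause invoking the inductive hypothesis is $p = \termapp{p_1}{p_2}$ with both $p$ and $u = \termapp{t_1}{t_2}$ matchable applications, where
\[
\match[n]{\termapp{p_1}{p_2}}{\fincv{u}} = \matchunion{\match[n]{p_1}{\fincv{t_1}}}{\match[n]{p_2}{\fincv{t_2}}} = \matchunion{\fincv{\match[n]{p_1}{t_1}}}{\fincv{\match[n]{p_2}{t_2}}} = \fincv{\match[n]{p}{u}},
\]
using the induction hypothesis in the middle step and commutation with $\matchunion{\_}{\_}$ in the last. Finally, the post-condition $\dom{\match[n]{p}{u}} = \set{\termidx{1}[1], \ldots, \termidx{1}[n]}$ holds simultaneously on both sides by domain preservation, so the consistency check collapses both matches to $\matchfail$ or neither.

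I expect the main obstacle to be the application clause, specifically pinning down the interaction between $\fincv{\_}$ and $\matchunion{\_}{\_}$: one must check that incrementing commutes with disjoint union across all of its cases—failure, undetermination, domain clash, and successful merge—and that the matchable-form bookkeeping keeps the two derivations in lock-step. The remaining cases are mechanical once matchable-form preservation is in place.
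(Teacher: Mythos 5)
Your proof is correct and follows exactly the route the paper takes: the paper's entire proof is ``by straightforward induction on $p$,'' and your argument is precisely that induction spelled out in full, including the details the paper leaves implicit (preservation of matchable forms under $\fincv{\_}$, commutation with $\matchunion{\_}{\_}$, and domain preservation for the final consistency check).
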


\begin{proof}
By straightforward induction on $p$.
\end{proof}

Let us focus first on simulating $\calcPPC$ by $\calcPPCX$. The key step here
is the preservation of the matching operation shown for $\toPPCX{\_}$ in
Lem.~\ref{l:translation:to-ppcx-match}. It guarantees that every redex in
$\calcPPC$ turns into a redex in $\calcPPCX$ too. Then, the appropiate
definition of the operational semantics given for $\calcPPCX$ in
Sec.~\ref{s:ppcx} allows us to conclude.

\begin{lemma}
Let $t \in \Term{\calcPPC}$. If $t \reduce[\rPPC] t'$, then $\toPPCX[V][M]{t}
\reduce[\rPPCX] \toPPCX[V][M]{t'}$.
\label{l:bisimulation:ppc-ppcx}
\end{lemma}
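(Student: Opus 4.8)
The plan is to proceed by induction on the context in which the reduction $t \reduce[\rPPC] t'$ takes place, which reduces everything to the base case where $t$ is itself a redex. The inductive cases (application on either side, abstraction in pattern or body position) are routine: the translation $\toPPCX[V][M]{\_}$ is defined compositionally on each term constructor, and $\reduce[\rPPCX]$ is closed under contexts by definition, so each case follows by applying the induction hypothesis to the appropriate subterm — taking care to feed the correct lists of lists to the recursive translation call (for an abstraction $\termabs[\theta]{p}{s}$ one pushes $\theta$ onto $M$ when translating $p$ and onto $V$ when translating $s$, exactly as in Def.~\ref{d:translation:to-ppcx}).

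The heart of the proof is therefore the base case: $t = \termapp{(\termabs[\theta]{p}{s})}{u}$ with $\match[\theta]{p}{u}$ a decided match, so that $t \reduce[\rPPC] t' = \matchapply{\match[\theta]{p}{u}}{s}$. First I would compute the translation of the redex, which by definition is $\toPPCX[V][M]{t} = \termapp{(\termabs[\fsize{\theta}]{\toPPCX[V][\listconc{\theta}{M}]{p}}{\toPPCX[\listconc{\theta}{V}][M]{s}})}{\toPPCX[V][M]{u}}$. To see that this is a $\calcPPCX$-redex I invoke Lem.~\ref{l:translation:to-ppcx-match}: since $\match[\theta]{p}{u}$ is decided, its translation $\match[\fsize{\theta}]{\toPPCX[V][\listconc{\theta}{M}]{p}}{\toPPCX[V][M]{u}}$ is the correspondingly decided match in $\calcPPCX$. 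Applying the $\calcPPCX$ rewrite rule, $\toPPCX[V][M]{t}$ reduces to $\fdecv{\matchapply{\match[\fsize{\theta}]{\toPPCX[V][\listconc{\theta}{M}]{p}}{\fincv{\toPPCX[V][M]{u}}}}{\toPPCX[\listconc{\theta}{V}][M]{s}}}$, and the goal is to identify this with $\toPPCX[V][M]{t'} = \toPPCX[V][M]{\matchapply{\match[\theta]{p}{u}}{s}}$.

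The main obstacle is exactly this reconciliation, and it is where the earlier technical lemmas must be combined carefully. In the successful case $\match[\theta]{p}{u} = \sigma$, the right-hand side is $\toPPCX[V][M]{\subsapply{\sigma}{s}}$, and I would apply Lem.~\ref{l:translation:to-ppcx-subs} with the parameter list $Y$ taken empty (so $i = 1$), which precisely relates $\toPPCX[V][M]{\subsapply{\sigma}{s}}$ to a substitution at level~$1$ applied to $\toPPCX[\listconc{\theta}{V}][M]{s}$, namely $\subsapply{\toPPCX[V][M]{\sigma}[\theta]}{\toPPCX[\listconc{\theta}{V}][M]{s}}$. Matching this against the reduct of $\toPPCX[V][M]{t}$ requires three alignments: that the substitution $\match[\fsize{\theta}]{\toPPCX[V][\listconc{\theta}{M}]{p}}{\fincv{\toPPCX[V][M]{u}}}$ produced by the $\calcPPCX$ match equals $\toPPCX[V][M]{\sigma}[\theta]$ up to the increment bookkeeping, which follows from Lem.~\ref{l:translation:to-ppcx-match}\,(\ref{l:translation:to-ppcx-match:subs}) together with the increment-commutation Lem.~\ref{l:bisimulation:fincv-lifting} (pushing $\fincv{\_}$ through the match); and that the outer $\fdecv{\_}$ introduced by the rewrite rule cancels the $\fincv{\_}$ introduced on $u$, which is governed by Lem.~\ref{l:appendix:finc}\,(\ref{l:appendix:finc:fdec-l}) and (\ref{l:appendix:finc:fdec-r}). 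In the failure case $\match[\theta]{p}{u} = \matchfail$, Lem.~\ref{l:translation:to-ppcx-match}\,(\ref{l:translation:to-ppcx-match:fail}) gives that the $\calcPPCX$ match also fails, both sides reduce to (the translation of) the identity function, and the decrement acts trivially. Keeping the increment/decrement accounting consistent between the two rewrite rules — $\calcPPC$ has no such adjustments whereas $\calcPPCX$ explicitly increments $u$ and decrements the reduct — is the delicate bookkeeping that the whole argument turns on.
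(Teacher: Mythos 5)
Your proposal is correct and takes essentially the same route as the paper's proof: induction on the enclosing context, with the redex case split into the success and failure of $\match[\theta]{p}{u}$, and the two sides reconciled via Lem.~\ref{l:translation:to-ppcx-match}, Lem.~\ref{l:bisimulation:fincv-lifting}, and Lem.~\ref{l:translation:to-ppcx-subs} instantiated at $i=1$. The only cosmetic difference is that the increment/decrement cancellation you attribute to Lem.~\ref{l:appendix:finc} is already packaged into the $\fdecv{\_}$/$\fincv{\_}$ wrappers in the statement of Lem.~\ref{l:translation:to-ppcx-subs}, so the paper needs no separate appeal to it at this point.
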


\begin{proof}
By definition $t \reduce[\rPPC] t'$ implies $t =
\ctxtapply{\ctxt{C}}{\termapp{(\termabs[\theta]{p}{s})}{u}}$ and $t' =
\ctxtapply{\ctxt{C}}{\matchapply{\match[\theta]{p}{u}}{s}}$ with
$\match[\theta]{p}{u}$ decided. We proceed by induction on $\ctxt{C}$.
\begin{itemize}
  \item $\ctxt{C} = \Box$. Then, $t = \termapp{(\termabs[\theta]{p}{s})}{u}$
  and $t' = \matchapply{\match[\theta]{p}{u}}{s}$ with $\match[\theta]{p}{u}$ a
  decided match. Moreover, $\toPPCX[V][M]{t} =
  \termapp{(\termabs[\fsize{\theta}]{\toPPCX[V][\listconc{\theta}{M}]{p}}{\toPPCX[\listconc{\theta}{V}][M]{s}})}{\toPPCX[V][M]{u}}$.
  There are two possible cases:
  \begin{enumerate}
    \item $\match[\theta]{p}{u} = \sigma$. Then, $t' =
    \subsapply{\sigma}{s}$. By Lem.~\ref{l:translation:to-ppcx-match}
    (\ref{l:translation:to-ppcx-match:subs}),
    $\toPPCX[V][M]{\match[\theta]{p}{u}} =
    \match[\fsize{\theta}]{\toPPCX[V][\listconc{\theta}{M}]{p}}{\toPPCX[V][M]{u}}
    = \toPPCX[V][M]{\sigma}[\theta]$. Moreover, by
    Lem.~\ref{l:bisimulation:fincv-lifting},
    $\match[\fsize{\theta}]{\toPPCX[V][\listconc{\theta}{M}]{p}}{\fincv{\toPPCX[V][M]{u}}}
    = \fincv{\toPPCX[V][M]{\sigma}[\theta]}$, hence it is decided.
    Then, $\toPPCX[V][M]{t} \reduce[\rPPCX]
    \fdecv{\matchapply{\match[\fsize{\theta}]{\toPPCX[V][\listconc{\theta}{M}]{p}}{\fincv{\toPPCX[V][M]{u}}}}{\toPPCX[\listconc{\theta}{V}][M]{s}}}
    =
    \fdecv{\subsapply{\fincv{\toPPCX[V][M]{\sigma}[\theta]}}{\toPPCX[\listconc{\theta}{V}][M]{s}}}
    =
    \fdecv{\subsapply{\subs{\termidxv{1}[j]}{\fincv{\toPPCX[V][M]{\subsapply{\sigma}{\termvar{x_j}}}}}_{x_j \in \dom{\sigma}}}{\toPPCX[\listconc{\theta}{V}][M]{s}}}$.
    We conclude by Lem.~\ref{l:translation:to-ppcx-subs} with $i = 1$, since
    $\fdecv{\subsapply{\subs{\termidxv{1}[j]}{\fincv{\toPPCX[V][M]{\subsapply{\sigma}{\termvar{x_j}}}}}_{x_j \in \dom{\sigma}}}{\toPPCX[\listconc{\theta}{V}][M]{s}}}
    = \toPPCX[V][M]{\subsapply{\sigma}{s}} = \toPPCX[V][M]{t'}$.
    
    \item $\match[\theta]{p}{u} = \matchfail$. Then, $t' =
    \matchapply{\matchfail}{s} =
    \termabs[\lista{x}]{\termmatch{x}}{\termvar{x}}$. By
    Lem.~\ref{l:translation:to-ppcx-match}
    (\ref{l:translation:to-ppcx-match:fail}),
    $\toPPCX[V][M]{\match[\theta]{p}{u}} =
    \match[\fsize{\theta}]{\toPPCX[V][\listconc{\theta}{M}]{p}}{\toPPCX[V][M]{u}}
    = \matchfail$. Moreover, by Lem.~\ref{l:bisimulation:fincv-lifting},
    $\match[\fsize{\theta}]{\toPPCX[V][\listconc{\theta}{M}]{p}}{\fincv{\toPPCX[V][M]{u}}}
    = \matchfail$ too. Thus, $\toPPCX[V][M]{t} \reduce[\rPPCX]
    \matchapply{\matchfail}{\toPPCX[V][M]{s}} =
    \termabs[1]{\termidxm{1}[1]}{\termidxv{1}[1]} = \toPPCX[V][M]{t'}$. Hence,
    we conclude.
  \end{enumerate}
  
  \item $\ctxt{C} = \termapp{\ctxt{C'}}{u'}$. Then, $t = \termapp{r}{u'}$ and 
  $t' = \termapp{r'}{u'}$ with $r \reduce[\rPPC] r'$. By \ih $\toPPCX[V][M]{r}
  \reduce[\rPPCX] \toPPCX[V][M]{r'}$. Finally we conclude by
  Def~\ref{d:translation:to-ppcx}, since $\toPPCX[V][M]{t} =
  \termapp{\toPPCX[V][M]{r}}{\toPPCX[V][M]{u'}} \reduce[\rPPCX]
  \termapp{\toPPCX[V][M]{r'}}{\toPPCX[V][M]{u'}} = \toPPCX[V][M]{t'}$.
  
  \item $\ctxt{C} = \termapp{s'}{\ctxt{C'}}$. Then, $t = \termapp{s'}{r}$ and 
  $t' = \termapp{s'}{r'}$ with $r \reduce[\rPPC] r'$. By \ih $\toPPCX[V][M]{r}
  \reduce[\rPPCX] \toPPCX[V][M]{r'}$. Finally we conclude by
  Def~\ref{d:translation:to-ppcx}, since $\toPPCX[V][M]{t} =
  \termapp{\toPPCX[V][M]{s'}}{\toPPCX[V][M]{r}} \reduce[\rPPCX]
  \termapp{\toPPCX[V][M]{s'}}{\toPPCX[V][M]{r'}} = \toPPCX[V][M]{t'}$.
  
  \item $\ctxt{C} = \termabs[\theta']{\ctxt{C'}}{s'}$. Then, $t =
  \termabs[\theta']{q}{s'}$ and $t' = \termabs[\theta']{q'}{s'}$ with $q
  \reduce[\rPPC] q'$. By \ih $\toPPCX[V][\listconc{\theta'}{M}]{q}
  \reduce[\rPPCX] \toPPCX[V][\listconc{\theta'}{M}]{q'}$. Finally we conclude
  by Def~\ref{d:translation:to-ppcx}, since $\toPPCX[V][M]{t} =
  \termabs[\fsize{\theta'}]{\toPPCX[V][\listconc{\theta'}{M}]{q}}{\toPPCX[\listconc{\theta'}{V}][M]{s'}}
  \reduce[\rPPCX]
  \termabs[\fsize{\theta'}]{\toPPCX[V][\listconc{\theta'}{M}]{q'}}{\toPPCX[\listconc{\theta'}{V}][M]{s'}}
  = \toPPCX[V][M]{t'}$.
  
  \item $\ctxt{C} = \termabs[\theta']{p'}{\ctxt{C'}}$. Then, $t =
  \termabs[\theta']{p'}{r}$ and $t' = \termabs[\theta']{p'}{r'}$ with $r
  \reduce[\rPPC] r'$. By \ih $\toPPCX[\listconc{\theta'}{V}][M]{r}
  \reduce[\rPPCX] \toPPCX[\listconc{\theta'}{V}][M]{r'}$. Finally we conclude
  by Def~\ref{d:translation:to-ppcx}, since $\toPPCX[V][M]{t} =
  \termabs[\fsize{\theta'}]{\toPPCX[V][\listconc{\theta'}{M}]{p'}}{\toPPCX[\listconc{\theta'}{V}][M]{r}}
  \reduce[\rPPCX]
  \termabs[\fsize{\theta'}]{\toPPCX[V][\listconc{\theta'}{M}]{p'}}{\toPPCX[\listconc{\theta'}{V}][M]{r'}}
  = \toPPCX[V][M]{t'}$.
\end{itemize}
\end{proof}

%%% Local Variables: 
%%% mode: latex
%%% TeX-master: "main"
%%% End: 

\begin{theorem}
Let $t \in \Term{\calcPPC}$. If $t \reduce[\rPPC] t'$, then $\toPPCX{t}
\reduce[\rPPCX] \toPPCX{t'}$.
\label{t:bisimulation:ppc-ppcx}
\end{theorem}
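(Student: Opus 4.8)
The plan is to obtain this theorem as an immediate corollary of Lemma~\ref{l:bisimulation:ppc-ppcx}, which already establishes the very same simulation statement for the \emph{relative} translation $\toPPCX[V][M]{\_}$ with arbitrary lists of lists $V$ and $M$. Recall that the global translation $\toPPCX{\_}$ of Def.~\ref{d:translation:to-ppcx} is defined precisely as $\toPPCX[X][X]{\_}$ for a choice of singleton lists $X = \lista{\lista{x_1}, \ldots, \lista{x_n}}$ covering the free symbols of the term. Hence the substantive content has already been proven; what remains is merely to reconcile the choice of $X$ used for $t$ with the one used for $t'$.

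First I would observe that $\rPPC$-reduction introduces no fresh free variables or matchables, that is, $\fv{t'} \cup \fm{t'} \subseteq \fv{t} \cup \fm{t}$. This follows by inspecting the rewriting rule $\termapp{(\termabs[\theta]{p}{s})}{u} \rrule{\rPPC} \matchapply{\match[\theta]{p}{u}}{s}$: in the successful case the reduct is obtained by substituting subterms of $u$ into $s$, so its free symbols are contained in those of the redex; in the failure case the reduct is the identity function $\termabs[\lista{x}]{\termmatch{x}}{\termvar{x}}$, which is closed. Closure under arbitrary contexts preserves this containment, so the property holds for the full reduction $t \reduce[\rPPC] t'$.

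Given this, I would fix a single enumeration $X = \lista{\lista{x_1}, \ldots, \lista{x_n}}$ with $\fv{t} \cup \fm{t} \subseteq \set{x_1, \ldots, x_n}$. By the containment just noted, the same $X$ also satisfies $\fv{t'} \cup \fm{t'} \subseteq \set{x_1, \ldots, x_n}$, so by Def.~\ref{d:translation:to-ppcx} we have both $\toPPCX{t} = \toPPCX[X][X]{t}$ and $\toPPCX{t'} = \toPPCX[X][X]{t'}$ computed with this common $X$. Instantiating Lemma~\ref{l:bisimulation:ppc-ppcx} with $V = M = X$ then yields $\toPPCX[X][X]{t} \reduce[\rPPCX] \toPPCX[X][X]{t'}$, which is exactly $\toPPCX{t} \reduce[\rPPCX] \toPPCX{t'}$, as required.

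The only genuinely delicate point is the stability of free symbols under reduction, and even that is routine; all the real work has been relegated to Lemma~\ref{l:bisimulation:ppc-ppcx} and, through it, to the matching-preservation result (Lem.~\ref{l:translation:to-ppcx-match}) and the substitution-translation result (Lem.~\ref{l:translation:to-ppcx-subs}). I therefore expect no substantial obstacle here beyond being careful to compute the global translation with the \emph{same} $X$ on both sides of the reduction, which is what the free-symbol containment guarantees.
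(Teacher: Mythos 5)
Your proposal is correct and follows essentially the same route as the paper, whose proof likewise obtains the theorem as an immediate instance of Lem.~\ref{l:bisimulation:ppc-ppcx} with $V = M = X$ as given in Def.~\ref{d:translation:to-ppcx}. The only difference is that you explicitly verify that one common enumeration $X$ serves for both $t$ and $t'$ (via the containment $\fv{t'} \cup \fm{t'} \subseteq \fv{t} \cup \fm{t}$), a routine detail the paper leaves implicit.
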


\begin{proof}
The property is an immediate consequence of Lem.~\ref{l:bisimulation:ppc-ppcx},
taking $V = M = X$ as given in Def.~\ref{d:translation:to-ppcx}.
\end{proof}

Regarding the converse simulation, \ie $\calcPPCX$ into $\calcPPC$, we resort
here to the fact that the embedding $\toPPC{\_}$ also preserves the matching
operation (\cf Lem.~\ref{l:translation:to-ppc-match}). Then, every redex in
$\calcPPCX$ is translated into a redex in $\calcPPC$ as well.

\begin{lemma}
Let $t \in \Term{\calcPPCX}$. If $t \reduce[\rPPCX] t'$, then $\toPPC[V][M]{t}
\reduce[\rPPC] \toPPC[V][M]{t'}$.
\label{l:bisimulation:ppcx-ppc}
\end{lemma}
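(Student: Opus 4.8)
The plan is to mirror the proof of Lemma~\ref{l:bisimulation:ppc-ppcx} (the $\calcPPC \to \calcPPCX$ direction), exploiting the perfect symmetry between the two translations. By definition, $t \reduce[\rPPCX] t'$ means $t = \ctxtapply{\ctxt{C}}{\termapp{(\termabs[n]{p}{s})}{u}}$ and $t' = \ctxtapply{\ctxt{C}}{\fdecv{\matchapply{\match[n]{p}{\fincv{u}}}{s}}}$ with $\match[n]{p}{\fincv{u}}$ a decided match. I would proceed by induction on the context $\ctxt{C}$, with the four inductive cases ($\ctxt{C} = \termapp{\ctxt{C'}}{u'}$, $\termapp{s'}{\ctxt{C'}}$, $\termabs[n']{\ctxt{C'}}{s'}$, $\termabs[n']{p'}{\ctxt{C'}}$) being entirely routine: each follows from the \ih together with the fact that $\toPPC[V][M]{\_}$ commutes with the term constructors as dictated by Def.~\ref{d:translation:to-ppc}, pushing the list $\theta$ of fresh symbols into $V$ or $M$ exactly as the corresponding $\calcPPCX$ context does.

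\textbf{The base case} $\ctxt{C} = \Box$ is the heart of the argument. Here $t = \termapp{(\termabs[n]{p}{s})}{u}$, so $\toPPC[V][M]{t} = \termapp{(\termabs[\theta]{\toPPC[V][\listconc{\theta}{M}]{p}}{\toPPC[\listconc{\theta}{V}][M]{s}})}{\toPPC[V][M]{u}}$ for a list $\theta$ of $n$ fresh symbols. I split according to whether the decided match $\match[n]{p}{\fincv{u}}$ is a substitution $\sigma$ or $\matchfail$. In the successful case, I first use Lem.~\ref{l:bisimulation:fincv-lifting} to relate $\match[n]{p}{\fincv{u}}$ back to $\match[n]{p}{u}$, and then invoke Lem.~\ref{l:translation:to-ppc-match} (\ref{l:translation:to-ppc-match:subs}) to obtain $\toPPC[V][M]{\match[n]{p}{u}}[\theta] = \match[\theta]{\toPPC[V][\listconc{\theta}{M}]{p}}{\toPPC[V][M]{u}} = \toPPC[V][M]{\sigma}[\theta]$, which guarantees the image is a $\calcPPC$-redex. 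The $\matchfail$ case uses Lem.~\ref{l:translation:to-ppc-match} (\ref{l:translation:to-ppc-match:fail}) analogously, and the identity function $\termabs[\lista{x}]{\termmatch{x}}{\termvar{x}}$ is produced on both sides.

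\textbf{The main obstacle} will be reconciling the $\fdecv{\_}$ and $\fincv{\_}$ bookkeeping in the reduct $\fdecv{\matchapply{\match[n]{p}{\fincv{u}}}{s}}$ with the pure meta-level substitution of $\calcPPC$. The $\calcPPCX$ rewriting rule hard-codes an increment of $u$ before matching and a decrement of the whole result, precisely to compensate for the removed binder over $s$; these adjustments have no syntactic counterpart on the $\calcPPC$ side. The key is that Lem.~\ref{l:translation:to-ppc-subs} was stated exactly to absorb this discrepancy: instantiating it at $i = 1$ (with $Y$ empty) yields $\toPPC[V][M]{\fdecv{\subsapply{\subs{\termidxv{1}[j]}{\fincv{\subsapply{\sigma}{\termidxv{1}[j]}}}_{\termidxv{1}[j] \in \dom{\sigma}}}{s}}} \eqalpha \subsapply{\subs{\theta_j}{\toPPC[V][M]{\subsapply{\sigma}{\termidxv{1}[j]}}}_{\termidxv{1}[j] \in \dom{\sigma}}}{\toPPC[\listconc{\theta}{V}][M]{s}}$, and the right-hand side is exactly $\matchapply{\toPPC[V][M]{\sigma}[\theta]}{\toPPC[\listconc{\theta}{V}][M]{s}}$, the $\calcPPC$ contractum. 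Chaining this with the matching-preservation identity closes the base case. The only subtlety to flag is that the whole equational chain lives modulo $\alpha$-conversion (hence $\eqalpha$ rather than $=$), stemming from the freshness choices for $\theta$ in Def.~\ref{d:translation:to-ppc}; this is harmless and is handled uniformly as in Lem.~\ref{l:translation:inverse-ppc}.
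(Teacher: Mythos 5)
Your proposal mirrors the paper's proof step for step: the same induction on $\ctxt{C}$, the same case split in the base case, and the same three key tools (Lem.~\ref{l:bisimulation:fincv-lifting}, Lem.~\ref{l:translation:to-ppc-match}, and Lem.~\ref{l:translation:to-ppc-subs} instantiated at $i=1$), so the approach is the right one. The one thing to repair is an off-by-one-increment naming slip in the successful case: you set $\sigma$ to be the decided match $\match[n]{p}{\fincv{u}}$, but the two identities you then assert hold for $\match[n]{p}{u} = \fdecv{\sigma}$ instead. Concretely, Lem.~\ref{l:translation:to-ppc-match} (\ref{l:translation:to-ppc-match:subs}) yields $\match[\theta]{\toPPC[V][\listconc{\theta}{M}]{p}}{\toPPC[V][M]{u}} = \toPPC[V][M]{\fdecv{\sigma}}[\theta]$, not $\toPPC[V][M]{\sigma}[\theta]$; and your instantiation of Lem.~\ref{l:translation:to-ppc-subs} with that $\sigma$ has left-hand side $\toPPC[V][M]{\fdecv{\subsapply{\fincv{\sigma}}{s}}}$, which is not $\toPPC[V][M]{t'} = \toPPC[V][M]{\fdecv{\subsapply{\sigma}{s}}}$. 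The two slips are consistent and cancel: everything you wrote becomes correct verbatim once $\sigma$ denotes $\match[n]{p}{u}$ throughout, so that $\fincv{\sigma} = \match[n]{p}{\fincv{u}}$ by Lem.~\ref{l:bisimulation:fincv-lifting} (the case split is unaffected, since $\fincv{\_}$ maps substitutions to substitutions and $\matchfail$ to $\matchfail$). With that reading your route is in fact marginally leaner than the paper's: the paper keeps $\sigma = \match[n]{p}{\fincv{u}}$, feeds $\fdecv{\sigma}$ into Lemmas \ref{l:translation:to-ppc-match} and \ref{l:translation:to-ppc-subs}, and then needs Lem.~\ref{l:appendix:finc} (\ref{l:appendix:finc:fdec-l}) and (\ref{l:appendix:finc:fdec-r}) to cancel the increments, whereas the corrected version of your argument needs neither. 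The $\matchfail$ case, the four contextual cases, and the remark that the chain lives modulo $\alpha$-conversion all match the paper's treatment.
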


\begin{proof}
By definition $t \reduce[\rPPCX] t'$ implies $t =
\ctxtapply{\ctxt{C}}{\termapp{(\termabs[n]{p}{s})}{u}}$ and $t' =
\ctxtapply{\ctxt{C}}{\fdecv{\matchapply{\match[n]{p}{\fincv{u}}}{s}}}$ with
$\match[n]{p}{\fincv{u}}$ decided. We proceed by induction on $\ctxt{C}$.
\begin{itemize}
  \item $\ctxt{C} = \Box$. Then, $t = \termapp{(\termabs[n]{p}{s})}{u}$
  and $t' = \fdecv{\matchapply{\match[n]{p}{\fincv{u}}}{s}}$ with
  $\match[n]{p}{\fincv{u}}$ a decided match. Moreover, $\toPPC[V][M]{t} =
  \termapp{(\termabs[\theta]{\toPPC[V][\listconc{\theta}{M}]{p}}{\toPPC[\listconc{\theta}{V}][M]{s}})}{\toPPC[V][M]{u}}$
  with $\theta$ a list of $n$ fresh symbols.  There are two possible cases:
  \begin{enumerate}
    \item $\match[n]{p}{\fincv{u}} = \sigma$. Then, $t' = \fdecv{\subsapply{\sigma}{s}}$
    and, by Lem.~\ref{l:bisimulation:fincv-lifting}, we have $\sigma =
    \fincv{\match[n]{p}{u}}$. Moreover, by Lem~\ref{l:appendix:finc}
    (\ref{l:appendix:finc:fdec-l}), $\fdecv{\sigma} =
    \fdecv{\fincv{\match[n]{p}{u}}} = \match[n]{p}{u}$. By
    Lem.~\ref{l:translation:to-ppc-match}
    (\ref{l:translation:to-ppc-match:subs}),
    $\match[\theta]{\toPPC[V][\listconc{\theta}{M}]{p}}{\toPPC[V][M]{u}} =
    \toPPC[V][M]{\match[n]{p}{u}}[\theta] =
    \toPPC[V][M]{\fdecv{\sigma}}[\theta] =
    \subs{\theta_j}{\toPPC[V][M]{\fdecv{\subsapply{\sigma}{\termidxv{1}[j]}}}}_{\termidx{1}[j] \in \dom{\sigma}}$.
    Then, by Lem.~\ref{l:translation:to-ppc-subs} with $i = 1$, we get the
    reduction $\toPPC[V][M]{t} \reduce[\rPPC]
    \matchapply{\match[\theta]{\toPPC[V][\listconc{\theta}{M}]{p}}{\toPPC[V][M]{u}}}{\toPPC[\listconc{\theta}{V}][M]{s}}
    =
    \subsapply{\subs{\theta_j}{\toPPC[V][M]{\fdecv{\subsapply{\sigma}{\termidxv{1}[j]}}}}_{\termidx{1}[j] \in \dom{\sigma}}}{\toPPC[\listconc{\theta}{V}][M]{s}}
    \eqalpha
    \toPPC[V][M]{\fdecv{\subsapply{\subs{\termidx{1}[j]}{\fincv{\fdecv{\subsapply{\sigma}{\termidxv{1}[j]}}}}_{\termidx{1}[j] \in \dom{\sigma}}}{s}}}$.
    Note that, by Lem~\ref{l:appendix:finc} (\ref{l:appendix:finc:fdec-r}), we
    have
    $\subs{\termidx{1}[j]}{\fincv{\fdecv{\subsapply{\sigma}{\termidxv{1}[j]}}}}_{\termidx{1}[j] \in \dom{\sigma}}
    = \sigma$. Thus, we conclude since $\toPPC[V][M]{t} \reduce[\rPPC]
    \toPPC[V][M]{\fdecv{\subsapply{\sigma}{s}}} = \toPPC[V][M]{t'}$.
    
    \item $\match[n]{p}{\fincv{u}} = \matchfail$. Then, $t' =
    \fdecv{\matchapply{\matchfail}{s}} =
    \termabs[1]{\termidxm{1}[1]}{\termidxv{1}[1]}$ since there are no free
    indices in the term. By Lem.~\ref{l:bisimulation:fincv-lifting},
    $\match[n]{p}{\fincv{u}} = \fincv{\match[n]{p}{u}} = \matchfail$. Moreover,
    by Lem.~\ref{l:translation:to-ppc-match}
    (\ref{l:translation:to-ppc-match:fail}),
    $\toPPC[V][M]{\match[n]{p}{u}}[\theta] =
    \match[\theta]{\toPPC[V][\listconc{\theta}{M}]{p}}{\toPPC[V][M]{u}}
    = \matchfail$. Thus, $\toPPC[V][M]{t} \reduce[\rPPC]
    \matchapply{\matchfail}{\toPPC[V][M]{s}} =
    \termabs[\lista{x}]{\termmatch{x}}{\termvar{x}} \eqalpha \toPPC[V][M]{t'}$.
    Hence, we conclude.
  \end{enumerate}
  
  \item $\ctxt{C} = \termapp{\ctxt{C'}}{u'}$. Then, $t = \termapp{r}{u'}$ and 
  $t' = \termapp{r'}{u'}$ with $r \reduce[\rPPC] r'$. By \ih $\toPPC[V][M]{r}
  \reduce[\rPPC] \toPPC[V][M]{r'}$. Finally we conclude by
  Def~\ref{d:translation:to-ppc}, since $\toPPC[V][M]{t} =
  \termapp{\toPPC[V][M]{r}}{\toPPC[V][M]{u'}} \reduce[\rPPC]
  \termapp{\toPPC[V][M]{r'}}{\toPPC[V][M]{u'}} = \toPPC[V][M]{t'}$.
  
  \item $\ctxt{C} = \termapp{s'}{\ctxt{C'}}$. Then, $t = \termapp{s'}{r}$ and 
  $t' = \termapp{s'}{r'}$ with $r \reduce[\rPPC] r'$. By \ih $\toPPC[V][M]{r}
  \reduce[\rPPC] \toPPC[V][M]{r'}$. Finally we conclude by
  Def~\ref{d:translation:to-ppc}, since $\toPPC[V][M]{t} =
  \termapp{\toPPC[V][M]{s'}}{\toPPC[V][M]{r}} \reduce[\rPPC]
  \termapp{\toPPC[V][M]{s'}}{\toPPC[V][M]{r'}} = \toPPC[V][M]{t'}$.
  
  \item $\ctxt{C} = \termabs[m]{\ctxt{C'}}{s'}$. Then, $t = \termabs[m]{q}{s'}$
  and $t' = \termabs[m]{q'}{s'}$ with $q \reduce[\rPPC] q'$. Let $\theta'$ be a
  list of $m$ fresh symbols. By \ih $\toPPC[V][\listconc{\theta'}{M}]{q}
  \reduce[\rPPC] \toPPC[V][\listconc{\theta'}{M}]{q'}$. Finally we conclude
  by Def~\ref{d:translation:to-ppc}, since $\toPPC[V][M]{t} \eqalpha
  \termabs[\theta']{\toPPC[V][\listconc{\theta'}{M}]{q}}{\toPPC[\listconc{\theta'}{V}][M]{s'}}
  \reduce[\rPPC]
  \termabs[\theta']{\toPPC[V][\listconc{\theta'}{M}]{q'}}{\toPPC[\listconc{\theta'}{V}][M]{s'}}
  \eqalpha \toPPC[V][M]{t'}$.
  
  \item $\ctxt{C} = \termabs[m]{p'}{\ctxt{C'}}$. Then, $t = \termabs[m]{p'}{r}$
  and $t' = \termabs[m]{p'}{r'}$ with $r \reduce[\rPPC] r'$. Let $\theta'$ be a
  list of $m$ fresh symbols. By \ih $\toPPC[\listconc{\theta'}{V}][M]{r}
  \reduce[\rPPC] \toPPC[\listconc{\theta'}{V}][M]{r'}$. Finally we conclude
  by Def~\ref{d:translation:to-ppc}, since $\toPPC[V][M]{t} \eqalpha
  \termabs[\theta']{\toPPC[V][\listconc{\theta'}{M}]{p'}}{\toPPC[\listconc{\theta'}{V}][M]{r}}
  \reduce[\rPPC]
  \termabs[\theta']{\toPPC[V][\listconc{\theta'}{M}]{p'}}{\toPPC[\listconc{\theta'}{V}][M]{r'}}
  \eqalpha \toPPC[V][M]{t'}$.
\end{itemize}
\end{proof}

%%% Local Variables: 
%%% mode: latex
%%% TeX-master: "main"
%%% End: 

\begin{theorem}
Let $t \in \Term{\calcPPCX}$. If $t \reduce[\rPPCX] t'$, then $\toPPC{t}
\reduce[\rPPC] \toPPC{t'}$.
\label{t:bisimulation:ppcx-ppc}
\end{theorem}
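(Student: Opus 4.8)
The plan is to obtain this theorem as an immediate instance of Lemma~\ref{l:bisimulation:ppcx-ppc}, in exactly the same way that Theorem~\ref{t:bisimulation:ppc-ppcx} was derived from Lemma~\ref{l:bisimulation:ppc-ppcx}. Indeed, by Definition~\ref{d:translation:to-ppc} the unparameterised translation $\toPPC{\_}$ is nothing but $\toPPC[X][X]{\_}$ for the canonical enumeration $X = \lista{\lista{x_1}, \ldots, \lista{x_n}}$, so the statement is precisely the lemma specialised to $V = M = X$.

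Concretely, I would first fix $n$ large enough that $\fv{t} \cup \fm{t} \subseteq \set{\termidxv{1}[1], \ldots, \termidxv{n}[1]}$ and take $X = \lista{\lista{x_1}, \ldots, \lista{x_n}}$ with the enumeration of $\TermVariable$ from Definition~\ref{d:translation:to-ppcx}. By the remark closing Definition~\ref{d:translation:to-ppc}, well-formedness of $t$ guarantees that $X$ satisfies the side conditions under which $\toPPC[X][X]{t}$ is defined. I would then invoke Lemma~\ref{l:bisimulation:ppcx-ppc} with $V = M = X$ on the hypothesis $t \reduce[\rPPCX] t'$, yielding $\toPPC[X][X]{t} \reduce[\rPPC] \toPPC[X][X]{t'}$, which is exactly $\toPPC{t} \reduce[\rPPC] \toPPC{t'}$.

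The only point that deserves a moment of care—and the closest thing to an obstacle, though a minor one—is that a \emph{single} list $X$ must simultaneously be a legitimate parameter for $\toPPC{t}$ and for $\toPPC{t'}$, since the lemma is applied with one fixed pair $V, M$. This reduces to the standard observation that $\calcPPCX$-reduction introduces no new free indices, i.e. $\fv{t'} \subseteq \fv{t}$ and $\fm{t'} \subseteq \fm{t}$, which follows by a routine inspection of the rewriting rule $\termapp{(\termabs[n]{p}{s})}{u} \rrule{\rPPCX} \fdecv{\matchapply{\match[n]{p}{\fincv{u}}}{s}}$ together with the behaviour of $\fincv{\_}$, $\fdecv{\_}$ and substitution on free indices, closed under contexts. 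Given this, an $X$ chosen for $t$ automatically covers $t'$, and no further work is needed, since all the genuine content has already been established in Lemma~\ref{l:bisimulation:ppcx-ppc}.
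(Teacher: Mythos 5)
Your proposal is correct and follows exactly the paper's own proof: the theorem is obtained as the instance $V = M = X$ of Lemma~\ref{l:bisimulation:ppcx-ppc}, mirroring how Theorem~\ref{t:bisimulation:ppc-ppcx} follows from Lemma~\ref{l:bisimulation:ppc-ppcx}. Your additional remark that reduction creates no new free indices (so one $X$ serves both $t$ and $t'$) is a sensible bit of care that the paper leaves implicit, but it does not change the argument.
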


\begin{proof}
The property is an immediate consequence of Lem.~\ref{l:bisimulation:ppcx-ppc},
taking $V = M = X$ as given in Def.~\ref{d:translation:to-ppc}.
\end{proof}

As already commented, these previous results may be combined to obtain a
strong bisimulation between the two calculi. The invertibility result allows
to define a relation between terms in $\calcPPC$ and $\calcPPCX$. Given $t \in
\Term{\calcPPCX}$ and $s \in \Term{\calcPPC}$, let us write $t \eqtrans s$
whenever $\toPPC{t} \eqalpha s$ and, therefore, $\toPPCX{s} = t$ by
Thm.~\ref{t:translation:inverse}. Then, the strong bisimulation result states
that whenever $t \eqtrans s$ and $t \reduce[\rPPCX] t'$, there exists a term
$s'$ such that $t' \eqtrans s'$ and $s \reduce[\rPPC] s'$, and the other way
around. Graphically: \[
\begin{array}{c@{\qquad\text{and}\qquad}c}
\begin{tikzcd}[ampersand replacement=\&]
t \arrow{d}[anchor=north,left]{\rPPCX}
  \&[-25pt] \eqtrans
  \&[-25pt] s \arrow[densely dashed]{d}[anchor=north,left]{\rPPC} \\
t'
  \&[-25pt] \eqtrans
  \&[-25pt] s'
\end{tikzcd}
&
\begin{tikzcd}[ampersand replacement=\&]
t \arrow[densely dashed]{d}[anchor=north,left]{\rPPCX}
  \&[-25pt] \eqtrans
  \&[-25pt] s \arrow{d}[anchor=north,left]{\rPPC} \\
t'
  \&[-25pt] \eqtrans
  \&[-25pt] s'
\end{tikzcd}
\end{array} \]

\begin{theorem}[Strong bisimulation]
The relation $\eqtrans$ is a strong bisimulation with respect to
the reduction relations $\reduce[\rPPC]$ and $\reduce[\rPPCX]$ respectively.
\label{t:bisimulation}
\end{theorem}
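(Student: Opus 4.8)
The plan is to derive the statement directly from the two individual simulation results, Thm.~\ref{t:bisimulation:ppc-ppcx} and Thm.~\ref{t:bisimulation:ppcx-ppc}, together with the invertibility of the translations, Thm.~\ref{t:translation:inverse}. Recall that $t \eqtrans s$ means $\toPPC{t} \eqalpha s$, which by Thm.~\ref{t:translation:inverse} is equivalent to $\toPPCX{s} = t$; I will freely switch between these two characterisations, since their interchangeability is exactly the content of the invertibility result. Establishing the two commuting squares depicted above amounts to proving that $\eqtrans$ is a strong bisimulation.

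For the first square, assume $t \eqtrans s$ and $t \reduce[\rPPCX] t'$, and take $s' \eqdef \toPPC{t'}$. Applying Thm.~\ref{t:bisimulation:ppcx-ppc} to the reduction $t \reduce[\rPPCX] t'$ yields $\toPPC{t} \reduce[\rPPC] \toPPC{t'} = s'$. Since $\calcPPC$ is taken modulo $\alpha$-conversion and $s \eqalpha \toPPC{t}$ by hypothesis, this reduction transports to $s \reduce[\rPPC] s'$. Finally $t' \eqtrans s'$ holds by construction, as $\toPPC{t'} = s'$ and hence $\toPPC{t'} \eqalpha s'$.

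For the second square, assume $t \eqtrans s$ and $s \reduce[\rPPC] s'$, and take $t' \eqdef \toPPCX{s'}$. Using the characterisation $\toPPCX{s} = t$ and applying Thm.~\ref{t:bisimulation:ppc-ppcx} to $s \reduce[\rPPC] s'$ gives $t = \toPPCX{s} \reduce[\rPPCX] \toPPCX{s'} = t'$. It remains to verify $t' \eqtrans s'$, that is $\toPPC{t'} \eqalpha s'$, which follows from $\toPPC{t'} = \toPPC{\toPPCX{s'}} \eqalpha s'$ by Thm.~\ref{t:translation:inverse}(\ref{t:translation:inverse:ppc}).

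Neither direction poses a genuine obstacle once the simulation lemmas and invertibility are available; the two squares are essentially immediate corollaries. The only points requiring care are bookkeeping ones: in the first square I must invoke the stability of $\reduce[\rPPC]$ under $\eqalpha$ to pass from $\toPPC{t} \reduce[\rPPC] s'$ to $s \reduce[\rPPC] s'$, and in both squares I rely on the equivalence of the two readings of $\eqtrans$ granted by Thm.~\ref{t:translation:inverse}. The substantive work has already been done in proving the simulation lemmas (\cf Lem.~\ref{l:bisimulation:ppc-ppcx} and Lem.~\ref{l:bisimulation:ppcx-ppc}), where the preservation of matching and substitution under the translations is established.
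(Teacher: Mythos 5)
Your proposal is correct and follows exactly the paper's route: the paper also derives the strong bisimulation immediately from Thm.~\ref{t:bisimulation:ppc-ppcx}, Thm.~\ref{t:bisimulation:ppcx-ppc} and the invertibility result (Thm.~\ref{t:translation:inverse}), which you have merely spelled out square by square. Your explicit attention to the two equivalent readings of $\eqtrans$ and to stability of $\reduce[\rPPC]$ under $\eqalpha$ is sound bookkeeping that the paper leaves implicit.
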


\begin{proof}
The proof follows immediately from Thm.~\ref{t:bisimulation:ppc-ppcx}
and~\ref{t:bisimulation:ppcx-ppc} above, and the invertibility result
(Thm.~\ref{t:translation:inverse}) given in Sec.~\ref{s:translation}.
\end{proof}

The importance of this result resides in the fact that it guarantees that
$\calcPPC$ and $\calcPPCX$ have exactly the same operational semantics. An
immediate consequence of this is the confluence of $\reduce[\rPPCX]$.

\begin{theorem}[Confluence]
The reduction relation $\reduce[\rPPCX]$ is confluent (CR).
\label{t:bisimulation:cr}
\end{theorem}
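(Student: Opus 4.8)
The plan is to transport confluence from $\calcPPC$ to $\calcPPCX$ through the strong bisimulation $\eqtrans$ established in Thm.~\ref{t:bisimulation}, using the confluence of $\calcPPC$ (Thm.~\ref{t:preliminaries:ppc:cr}) as the engine. First I would upgrade the single-step bisimulation to a many-step one: since $\eqtrans$ is a \emph{strong} bisimulation, a single $\reduce[\rPPCX]$-step matches a single $\reduce[\rPPC]$-step and conversely, so a straightforward induction on the length of a reduction sequence yields that $t \eqtrans s$ together with $t \reducemany[\rPPCX] t_1$ imply the existence of $s_1$ with $s \reducemany[\rPPC] s_1$ and $t_1 \eqtrans s_1$, and symmetrically from $\calcPPC$ back to $\calcPPCX$.

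Given two diverging reductions $t \reducemany[\rPPCX] t_1$ and $t \reducemany[\rPPCX] t_2$, I would set $s \eqdef \toPPC{t}$, noting $t \eqtrans s$ by invertibility (Thm.~\ref{t:translation:inverse}). Reflecting both reductions into $\calcPPC$ via the lifted bisimulation gives $s \reducemany[\rPPC] s_1$ and $s \reducemany[\rPPC] s_2$ with $t_1 \eqtrans s_1$ and $t_2 \eqtrans s_2$. Confluence of $\calcPPC$ then produces a common reduct $s_3$ with $s_1 \reducemany[\rPPC] s_3$ and $s_2 \reducemany[\rPPC] s_3$. Reflecting these back through the bisimulation, starting from $t_1 \eqtrans s_1$ and $t_2 \eqtrans s_2$ respectively, yields $t_1 \reducemany[\rPPCX] t_3'$ and $t_2 \reducemany[\rPPCX] t_3''$ with $t_3' \eqtrans s_3$ and $t_3'' \eqtrans s_3$. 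Finally, invertibility forces $t_3' = \toPPCX{s_3} = t_3''$, so $t_3 \eqdef t_3'$ is the desired common $\calcPPCX$-reduct, closing the diagram.

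The main delicate point is the bookkeeping of the two different notions of equality that $\eqtrans$ straddles: $\alpha$-equivalence on the $\calcPPC$ side and equality modulo secondary-index permutations on the $\calcPPCX$ side (\cf the last paragraph of Sec.~\ref{s:ppcx}). One must check that the lifting of the bisimulation to reduction sequences and the final identification $t_3' = t_3''$ remain coherent with these equivalences; this is exactly where the definition of $\eqtrans$ by $\toPPC{t} \eqalpha s$ (equivalently $\toPPCX{s} = t$) is used, since it makes the $\toPPCX{\_}$-preimage of $s_3$ unique up to that equivalence. Beyond this, the argument is the standard transfer of confluence across a strong bisimulation and requires no reduction-specific reasoning particular to pattern matching.
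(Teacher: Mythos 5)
Your proposal is correct and follows essentially the same route as the paper: the paper's proof is exactly the diagram that lifts the strong bisimulation to many-step reductions, projects the diverging $\calcPPCX$ reductions into $\calcPPC$, closes the diagram there by Thm.~\ref{t:preliminaries:ppc:cr}, and reflects the common reduct back, identifying the two preimages via invertibility (Thm.~\ref{t:translation:inverse}). Your additional care about the many-step induction and the equivalences on each side ($\alpha$-conversion in $\calcPPC$, secondary-index permutations in $\calcPPCX$) only makes explicit what the paper leaves implicit in its diagrammatic argument.
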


\begin{proof}
The result follows directly from Thm.~\ref{t:preliminaries:ppc:cr} and the
strong bisimulation:
\begin{center}
\begin{tikzcd}[ampersand replacement=\&]
  \&[-15pt]  \&[-15pt] t \arrow[twoheadrightarrow]{ddll}[inner sep=5pt,above]{\rPPCX}
              \arrow[twoheadrightarrow]{ddrr}[inner sep=5pt,above]{\rPPCX}
              \arrow[Mapsto]{d}{}
\\[-5pt]
  \&[-15pt]  \&[-15pt] s \arrow[dashed,twoheadrightarrow]{dl}[inner sep=5pt,above]{\rPPC}
              \arrow[dashed,twoheadrightarrow]{dr}[inner sep=5pt,above]{\rPPC}
\\
t_0 \arrow[dashed,twoheadrightarrow]{ddrr}[inner sep=5pt,below]{\rPPCX}
    \arrow[Mapsto]{r}{}
  \&[-15pt] s_0 \arrow[dashed,twoheadrightarrow]{dr}[inner sep=5pt,below]{\rPPC}
  \&[-15pt] \text{Thm.~\ref{t:preliminaries:ppc:cr}}
  \&[-15pt] s_1 \arrow[dashed,twoheadrightarrow]{dl}[inner sep=5pt,below]{\rPPC}
  \&[-15pt] t_1 \arrow[dashed,twoheadrightarrow]{ddll}[inner sep=5pt,below]{\rPPCX}
                \arrow[Mapsto]{l}{}
\\
  \&[-15pt]  \&[-15pt] s'
\\[-5pt]
  \&[-15pt]  \&[-15pt] t' \arrow[Mapsto]{u}{}
\end{tikzcd}
\end{center}
\end{proof}

Another result of relevance to our line of research, is the existence of
normalising reduction strategies for $\calcPPC$, as it is shown
in~\cite{BonelliKLR17}. This result is particularly challenging since reduction
in $\calcPPC$ is shown to be \emph{non-sequential} due to the nature of its
matching operation. This implies that the notion of \emph{needed redexes} (a
key concept for defining normalising strategies) must be generalised to
\emph{necessary sets}~\cite{Raamsdonk97} of redexes. Moreover, the notion of
\emph{gripping}~\cite{Mellies96} is also captured by $\calcPPC$, representing a
further obstacle in the definition of such a normalising strategy. All in all,
in~\cite{BonelliKLR17} the authors introduce a reduction strategy $\S$ that is
shown to be normalising, overcoming all the aforementioned issues. Thanks to
the strong bisimulation result presented above, this strategy $\S$ can also be
guaranteed to normalise for $\calcPPCX$.

%%% Local Variables:
%%% mode: latex
%%% TeX-master: "main"
%%% End:

%%%%%%%%%%%%%%%%%%%%%%%%%%%%%%%%%%%%%%%%%%%%%%%%%%%%%%%%%%%%%%%%%%%%%%%%%%%%%%%
\section{Conclusion}
\label{s:conclusion}
%%%%%%%%%%%%%%%%%%%%%%%%%%%%%%%%%%%%%%%%%%%%%%%%%%%%%%%%%%%%%%%%%%%%%%%%%%%%%%%

In this paper we introduced a novel presentation of the \emph{Pure Pattern
Calculus} ($\calcPPC$)~\cite{JayK09} in de Bruijn's style. This required
extending de Bruijn ideas for a setting where each binder may capture more than
one variable at once. To this purpose we defined \emph{bidimensional indices}
of the form $\termidx{i}[j]$ where $\termidx{i}$ is dubbed the \emph{primary
index} and $\termidx{j}$ the \emph{secondary index}, so that the primary index
determines the binding abstraction and the secondary index identifies the
variable among those (possibly many) bound ones. Moreover, given the nature of
$\calcPPC$ semantics, our extension actually deals with two kinds of
bidimensional indices, namely \emph{variable indices} and \emph{matchable
indices}. This newly introduced calculus is simply called \emph{Pure Pattern
Calculus with de Bruijn indices} ($\calcPPCX$).

Our main result consists of showing that the relation between $\calcPPC$ and
$\calcPPCX$ is a strong bisimulation with respect to their respective redution
relations, \ie they have exactly the same operational semantics. For that
reason, proper translations between the two calculi were defined, $\toPPCX{\_}
: \Term{\calcPPC} \to \Term{\calcPPCX}$ and $\toPPC{\_} : \Term{\calcPPCX} \to
\Term{\calcPPC}$, in such a way that $\toPPCX{\_}$ is the inverse of
$\toPPC{\_}$ and vice-versa (modulo $\alpha$-conversion). Most notably, these
embeddings are shown to preserve the matching operation of their respective
domain calculus.

The strong bisimulation result allows to port into $\calcPPCX$ many already
known results for $\calcPPC$. Of particular interest for our line of research
are the confluence and the existence of normalising reduction strategies for
$\calcPPC$~\cite{BonelliKLR17}, a rather complex result that requires dealing
with notions of \emph{gripping}~\cite{Mellies96} and \emph{necessary
sets}~\cite{Raamsdonk97} of redexes. The result introduced on this paper may
allow for a direct implementation of such strategies without the inconveniences
of working modulo $\alpha$-conversion.

As commented before, the ultimate goal of our research is the implementation of
a prototype for a typed functional programming language capturing \emph{path
polymorphism}. This development is based on the \emph{Calculus of Applicative
Patterns} ($\calcCAP$)~\cite{Ayala-RinconBEV19}, for which a static type
system has already been introduced, guaranteeing well-behaved operational
semantics, together with its corresponding efficient type-checking algorithm.
$\calcCAP$ is essentially the static fragment of $\calcPPC$, where the
abstraction is generalised into an alternative (\ie abstracting multiple
branches at once). These two calculi are shown to be equivalent. Thus, future
lines of work following the results presented in this paper involve porting
$\calcCAP$ to the bidimensional indices setting and formalising the ideas
of~\cite{BonelliKLR17} into such framework. This will lead to a first
functional version of the sought-after prototype.

%%% Local Variables:
%%% mode: latex
%%% TeX-master: "main"
%%% End:

\bibliographystyle{alpha}
\bibliography{biblio}

\end{document}